\numberwithin{equation}{section}
\definecolor{webgreen}{rgb}{0,.5,0}
\definecolor{webbrown}{rgb}{.8,0,0}
\definecolor{emphcolor}{rgb}{0.95,0.95,0.95}
\newtheorem{theorem}{Theorem}[section]
\newtheorem{corollary}[theorem]{Corollary}
\newtheorem{proposition}[theorem]{Proposition}
\newtheorem{lemma}[theorem]{Lemma}
\theoremstyle{definition}
\newtheorem{definition}{Definition}[section]
\theoremstyle{definition}
\theoremstyle{definition}
\newtheorem{remark}{Remark}[section]
\theoremstyle{definition}
\newtheorem{assumption}{Assumption}[section]
\renewcommand{\epsilon}{\varepsilon}
\newcommand*\bigcdot{\mathpalette\bigcdot@{.5}}
\newcommand*\bigcdot@[2]{\mathbin{\vcenter{\hbox{\scalebox{#2}{$\m@th#1\bullet$}}}}}
\numberwithin{equation}{section}
\title[  ]{The continuous-time pre-commitment KMM problem in incomplete markets}
\author[  ]{Guohui Guan}
\address[G. Guan]{School of Statistics, Renmin University of China, Beijing 100872, China.}
\email{guangh@ruc.edu.cn}
\author[  ]{Zongxia Liang}
\address[Z. Liang]{Department of Mathematical Sciences, Tsinghua University, Beijing, China.}
\email{liangzongxia@mail.tsinghua.edu.cn}
\author[  ]{Yilun Song}
\address[Y. Song]{Department of Mathematical Sciences, Tsinghua University, Beijing, China.}
\email{songyl18@mails.tsinghua.edu.cn}
\begin{document}
\begin{abstract}
 This paper studies the continuous-time pre-commitment KMM problem  proposed by  %{\color{blue}{Klibanoff, Marinacci and Mukerji}}
\cite{7} {{in incomplete financial markets}}, which concerns with the portfolio selection under smooth ambiguity. The decision maker (DM) is uncertain about the {{dominated}} %\footnote{One measure $\mathbf{P}$ dominated by another measure $\mathbf{Q}$ means that $\mathbf{P}<<\mathbf{Q}$}
 priors of the financial market, which are characterized by a second-order distribution (SOD). The KMM model separates risk attitudes and ambiguity attitudes apart and the aim of the DM is to maximize the two-fold utility of terminal wealth, which does not belong to the {{classical}} subjective utility maximization problem. By constructing the efficient frontier, the original KMM problem is first simplified as {{an one-fold expected utility problem on the second-order space.}} %the maximization of the one-fold expectations among all efficient terminal claims.
In order to solve the equivalent simplified problem, this paper {{imposes an assumption and}} introduces a new distorted Legendre transformation to {{establish the bipolar relation and the distorted duality theorem. %and extends the dual method developed in %{\color{blue}{Kramkov and Schachermayer}}
%\cite{25}. %{\color{orange}{(  }}By convex analysis, the original KMM problem is first transformed to an equivalent one maximizing the sum of weighted utilities under different priors.  {\color{orange}{)}}
Then, under a further assumption that the asymptotic elasticity of the ambiguous attitude is less than 1, %a generalized duality theorem is established on this problem to show
the uniqueness and existence of the solution to the KMM problem are shown }}%. Besides, the efficient frontier is constructed,
and we obtain the semi-explicit forms of the optimal terminal wealth and the optimal strategy. %are derived.
%equivalently to the combination of two kinds of one-fold expected utility optimization problems (see Problem~(\ref{pr2}) on the first-order space and Problem~(\ref{p3}) on the second-order space). Problem~(\ref{pr2}) on the first-order space is just a classical EUT problem in a possibly incomplete financial market, which ensures that the results in \cite{25} still hold for Problem~(\ref{pr2}) on the first-order space. Then we concentrate on Problem~(\ref{p3}) on the second-order space. Third, under Assumption \ref{Ass}, we introduce the distorted Legendre transformation and derive the bipolar relation in Theorem \ref{bipolarThm} and the distorted duality theorem in Theorem \ref{duality} for Problem~(\ref{p3}). In addition, if the asymptotic elasticity of the ambiguous attitude is less than 1, the existence and uniqueness of the solution to Problem~(\ref{p3}), and the dual relation of the solutions to Problem~(\ref{p3}) and the dual problem (\ref{D2}) are obtained in Theorem \ref{solution}.
Explicit forms of optimal strategies are presented for CRRA, CARA and HARA utilities in the case of Gaussian SOD in a Black-Scholes financial market, which show that DM with higher ambiguity aversion tends to be more concerned about extreme market conditions with larger bias. In the end of this work, numerical comparisons with the DMs ignoring ambiguity are revealed to illustrate the effects of ambiguity on the optimal strategies and value functions.\\
\ \\
\noindent {\small\textbf{Keywords:}}
Pre-commitment KMM problem, %distorted Legendre transformation,
{{%bipolar relation,
distorted duality theorem,}} %dual method,
smooth ambiguity, {{%
incomplete financial market,}} %portfolio selection,
efficient frontier.\\ \ \\
\noindent \textbf{AMS Subject Classification (2010)}: 49N15,91G10,93E20.
\end{abstract}
\maketitle
\section{\bf{Introduction}}%\vskip 5pt
In this paper, we are interested in continuous-time portfolio selection problem under smooth ambiguity {{in incomplete financial markets}}. Because the decision maker (DM) may be worried by  cognitive or informational constraints which make him/her uncertain about the financial market, the term ``ambiguity'' is proposed in \cite{E61}. When making decisions, the DMs are uncertain about the {{precise}} %return
distributions {{(the real-world probability measure)}} of the assets  and often adopt subjective estimations about the financial market. Therefore, ambiguity should be considered. %{\color{blue}{Klibanoff, Marinacci and Mukerji}}
\cite{7} present the smooth ambiguity model (abbr. KMM) showing that the DM  is (subjectively) uncertain about the priors relevant to his/her decision. The beliefs of DM are modelled by a set of {{dominated}} probability measures %or
{{and}} a second-order distribution (SOD). More precisely, we mainly investigate the following optimization problem under smooth ambiguity:
 \begin{equation}\label{U}
\max_\pi\left\{ \int_{\mathbb{D}}\phi\left(\mathbf{E}^{Q^\mu}\left[U({{X^\pi_T}})\right]\right)\mathrm{d}F(\mu)\right\},
\end{equation}
where ${{X^\pi_T}}$ is the contingent claim at time $T$ associated with portfolio $\pi$. $Q^\mu$ represents the DM's subjective estimation %represents the DM's belief
and reflects the DM's certain prior about the financial market. The uncertainty over the priors is modelled by the SOD $F(\mu)$, which reflects the DM's confidence  of prior $Q^\mu$ {{and represents the DM's belief about the financial market}}. $U(\cdot)$ and $\phi(\cdot)$  are two functions characterizing the DM's risk aversion and ambiguity aversion, respectively. The above KMM model has been applied to explain the Ellsberg's paradox and can separate attitude towards risk from that towards ambiguity, which makes it  popular in economics.
\vskip 5pt

In the original KMM problem of %{\color{blue}{Klibanoff, Marinacci and Mukerji}}
\cite{7}, there {{is a non-linear weighted average of expected utilities}} %are two utility functions and two expectations
leading to time-inconsistency, which makes the problem intractable by stochastic dynamic programming method. \cite{BJ17} presents a general time-inconsistent optimization criterion. However, the form in \cite{BJ17} only contains one-fold expectation. On the one hand, Eq.~(\ref{U}) has a non-linear function of the expected utility, which causes time-inconsistency as in \cite{BJ17}. On the other hand, {{when the non-linear weighted average is treated as another expectation of a utility, the objective becomes the two-fold expectations of utilities, which also}} brings in time-inconsistency and makes the problem more complicated. As for the discrete-time recursive smooth ambiguity model in \cite{Klibanoff}, the problem  becomes time-consistent, but explicit  or semi-explicit solutions can be hardly derived.  As such, only numerical optimization methods are applied to approximate its solution in most work. The problems of maximizing the utility under smooth ambiguity of \cite{Klibanoff} have been extensively studied (cf. \cite{Ju12}, \cite{Chen14}, \cite{Co16}), as such, in the presence of smooth ambiguity, the equity premium puzzle can be potentially explained to some extent. Recently, \cite{Balter et21} consider a continuous-time KMM model. However, they require the strategy to be deterministic and the settings in their model are restricted to the Black-Scholes model.
\vskip 5pt
It is well known that utility maximization is a popular topic for DMs in mathematical finance. The classical maximization problem of von Neumann-Morgenstern expected utility has been studied under different backgrounds. Under suitable conditions, the duality theorem and existence of solution are first revealed in \cite{2} for a complete market and %{\color{blue}{Kramkov and Schachermayer}}
\cite{25}, \cite{Hu05} for an incomplete market. As the dual problem is often simpler than the primal one, duality is an efficient tool for more general financial models. We refer the readers to \cite{C01}, \cite{Hu04}, \cite{Lin17}, \cite{Wit} with  random endowment, \cite{Bia05}, \cite{Owen09} with unbounded processes, \cite{Li18} with constraints, etc.
\vskip 5pt
However, most previous work neglect {{model}} uncertainty, in which the DM obtains {{the}} point estimation %s of the parameters of
{{of the real-world measure of the financial market and often assumes that the estimation is precise and reliable,}} %parameters are known and fixed,
i.e., the DM knows exactly the true distribution (first-order distribution, FOD) of the financial assets. However, {{for instance, even some of}} the parameters are hard to be estimated accurately, particularly for the first moments of the yields, see \cite{Blanchard93} and \cite{BT18}. As such, the FOD is in fact imprecise and the DMs should take the ambiguity over FOD into account. Concerning the worst case scenario proposed in \cite{Gil89}, the duality in the max-min concept is revealed in \cite{Sch05}. \cite{Sch05} proves the existence of the worst case measure and transforms the robust problem into an equivalent standard utility maximization problem w.r.t. this measure. Duality theorem is easily applied for the transformed standard problem. \cite{Sch07} further develops the duality theory for the maximization of the robust utility {{for dominated priors}} in a very general setting and under rather weak assumptions. {{Then \cite{Bartl21} establish a duality theory for the robust utility maximisation problem in continuous time for possibly non-dominated priors.}} %\cite{Gun05} presents the dual characterization in a robust setting and  an incomplete market. The duality relies on the set of the probability measures, and \cite{Back16} do not assume the weak compactness of the set.
More related work about max-min concept can refer to \cite{Gun05},  \cite{Sch08}, \cite{Tev13},  \cite{Back16}, \cite{Neu}, etc.
\vskip 5pt

Although there are many discussions about the  expected utility and max-min concept, the existence {{of the solution}} and {{the duality theorem}} have not yet been studied in the KMM model. On the one hand, most of the previous work are  concerned with one-fold expectation and utility. Then, by dual method, the Legendre transformation of the one-fold problem can be well established. The form of Eq.~(\ref{U}) shows that it contains two expectations and two  utilities, in which {{it is difficult to define even the conjugate function and}} the dual method cannot be applied directly. To the best of our knowledge, this paper is also the first to investigate {{the duality theorem of}} the KMM problem in an incomplete market in a continuous-time framework. There are little literature about the original KMM model within continuous-time market. On the other hand, the KMM problem is time-inconsistent. In \cite{BJ17}, the time-inconsistent feature is revealed for the optimization goal with the non-linear function of the expectation of the utility. We see from Eq.~(\ref{U}) that the KMM model extends the time-inconsistent optimization form in \cite{BJ17}. Recently, the time-inconsistent problems have been widely investigated, see \cite{Ekeland et12} with non-exponential discount factor, \cite{BJ13} for the mean-variance criterion, \cite{Hu et17} for linear-quadratic control, etc.  The DM fixes a given target at the initial time and keeps it unchanged over time in the pre-commitment case. In \cite{Vigna20}, the author shows that  the pre-commitment strategy beats the Nash equilibrium strategy and dynamically optimal strategy in the mean-variance criterion. Besides, the uniqueness of the Nash equilibrium strategy is not revealed in the game approach.  Different from the Nash equilibrium strategy in \cite{BJ17}, we concentrate on the original pre-commitment KMM model in %{\color{blue}{Klibanoff, Marinacci and Mukerji}}
\cite{7}. As in \cite{Vigna20}, the pre-commitment strategy can be easily extended to the dynamically optimal strategy. In this paper,  we aim to derive the {{semi-}}explicit solution within continuous-time market. For simplicity, we assume that the financial market includes a risk-free asset and a risky asset. The price of the risky asset is assumed to be a semi-martingale {{under}} %with
different priors, {{which are dominated by some observable measure. The bipolar relation and the duality theorem}} %properties of existence and uniqueness
are shown for the general financial model. Besides, the financial market with Black-Scholes model is studied as a specific example {{and we obtain explicit solution in the example}}.
\vskip 5pt

Because the pre-commitment KMM problem is totally different from the one-fold expected utility problem and max-min problem, the duality in %{\color{blue}{Kramkov and Schachermayer}}
\cite{25} for a standard utility or \cite{Sch07} for robust control cannot be applied here. The duality can often only be applied for one-fold {{expected}} utility maximization. For the robust control, {{both \cite{Sch07} and \cite{Bartl21} derive the duality theorem on the first-order space, i.e., on the sets of super-hedgeable claims and probability measures. %proves the existence of the worst measure first and then presents dual problem under this measure.
In order to derive the duality theorem}} %show the properties
of the two-fold {{objective}} %utility
function of  Eq.~(\ref{U}), {{we aim to separate the problem into two problems with one-fold expected utilities and establish the duality theorem for each problem. As such, for the problem on the first-order space,}} we first characterize the efficient frontier (the set of the efficient strategies which are not dominated by others) for different priors of the admissible set by convex analysis, where the efficient ones must maximize
a {{linear}} weighted average of optimal expected utilities for an arbitrary weight function. With this efficient frontier,
we simplify the original optimization with the two-fold expectations  as the maximization of the one-fold expectation among all efficient terminal claims, {{which are considered on the second-order space, i.e., on the sets of expected utilities and weight functions}}. %Then, we formulate and analyse the pre-commitment KMM model by dual method on an equivalent problem.
Afterward, %we extend dual method in %{\color{blue}{Kramkov and Schachermayer}}
%\cite{25} to solve the equivalent simplified optimization problem and find the optimal terminal wealth.
{{under suitable assumptions  about the frontier, we establish the bipolar relation and the distorted duality theorem for the problem on the second-order space. The semi-explicit form of the optimal strategy is also obtained.}}
 %find the distorted dual relationship between admissible sets and value functions of two different problems and obtain the semi-explicit form of the optimal strategy.
\vskip 5pt

Next, we study some concrete examples for specific $U(\cdot)$ and $ \phi(\cdot)$ and obtain the optimal investment strategy explicitly in a Black-Scholes financial market with SOD characterized by a normal distribution. Besides, we also find that this problem with the value $xe^{rT}$ of the budget constraint is equivalent to the optimization problem $\int_{\mathbb{D}}\phi(\mu)\mathrm{d}F(\mu)$ on the ambiguous range $\mathbb{D}$ with the value $U(xe^{rT})$ of the distorted budget constraint in these specific cases. Different from the case without ambiguity, the optimal terminal wealth relies on a quadratic function of the {{observable}} market state. Meanwhile, the strategy also depends on the {{observable}} market states.  DM with ambiguity aversion would rather give up some benefits of normal situations with small bias to ensure the benefits of extreme situations with large bias.
%a part of the benefit of some good cases to ensure the benefit of the bad cases in certain condition.
Finally, we provide some comparisons with the DM's ignoring ambiguity and show sensitivity analyses to illustrate the effects of risk attitudes and ambiguity attitudes on the optimal investment strategies and value functions.
\vskip 5pt

We have the following contributions in this paper. First, we formulate the pre-commitment KMM model {{for dominated priors in an incomplete financial market}} in a continuous-time framework. Although the discrete-time KMM model in \cite{Klibanoff} has been widely applied, the original continuous-time KMM model of %{\color{blue}{Klibanoff, Marinacci and Mukerji}}
\cite{7} has not yet been studied.  Second, as the KMM problem is a combination of two-fold expectations and two-fold utilities, classical optimization method cannot be applied. {{By convex analysis,}} we transform the original {{two-fold expected utility optimization}} problem {{equivalently to the combination of two kinds of one-fold expected utility optimization problems (see Problem~(\ref{pr2}) on the first-order space and Problem~(\ref{p3}) on the second-order space). Problem~(\ref{pr2}) on the first-order space is just a classical EUT problem in an incomplete financial market, which ensures that all of the results in \cite{25} holds for Problem~(\ref{pr2}) on the first-order space. Then we concentrate on Problem~(\ref{p3}) on the second-order space. Third, under Assumption \ref{Ass}, we introduce the distorted Legendre transformation and derive the bipolar relation in Theorem \ref{bipolarThm} and the distorted duality theorem in Theorem \ref{duality} for Problem~(\ref{p3}). In addition, if the asymptotic elasticity of the ambiguous attitude is less than 1, the existence and uniqueness of the solution to Problem~(\ref{p3}) and the dual relation of the solutions to Problem~(\ref{p3}) and the dual problem (\ref{D2}) are obtained in Theorem \ref{solution}.
}} %the problem with only one expectation by convex analysis, which may be also applied in other related work. We also introduce the efficient frontier and prove that the optimal terminal claim lies on the efficient frontier. Then, we search the optimal strategy on the efficient frontier. Third, in spite that we obtain the simpler equivalent problem, this problem contains an average of utilities under different priors and is still very complicated. Different from \cite{Sch07}, the equivalent problem is not formulated for a specific prior and thus duality theorem cannot be easily established. We introduce the distorted Legendre transformation and establish a generalized duality theorem for the equivalent problem. We find that under the assumptions of Lemma~\ref{p9} and the asymptotic elasticity of the utility function proposed in %{\color{blue}{Kramkov and Schachermayer}}
%\cite{25}, there exists a unique optimal terminal wealth.
At last, we present the optimal wealth and strategy under three different utility functions in the case that the SOD is Gaussian distributed. The results show that the optimal terminal wealth relies on  a quadratic function of market states which is quite different from the linear relationship when ignoring ambiguity, which leads to the fact that the ambiguity averse DM concentrates more on extreme situations than the ambiguity neutral DM. {{%The feedback function of
Moreover, the optimal investment proportion is a constant at time $t=0$ and relies on the observable market states for $t>0$.}} In order to derive the explicit solution, we only consider the uncertainty of drift term in a Black-Scholes financial market. However, the formal results of the efficient frontier, {{distorted}} duality theorem can also be established for the uncertainty of volatility, etc.

\vskip 5pt
The rest of this paper is organized as follows. The financial market, DM's wealth process and smooth ambiguity {{model}} are presented in Section 2. In Section 3, {{properties of the efficient frontier, the bipolar relation and the duality theorem on the second-order space are  %convex analysis is applied and the dual problem is
derived {in Theorems \ref{TT1}, \ref{bipolarThm}, \ref{duality}, and the semi-explicit solution of the KMM problem is given in Theorem \ref{TT5}}.}} %for an equivalent problem.
Section 4 presents the optimal investment strategy {explicitly} for different  $U(\cdot)$ and $ \phi(\cdot)$ in a Black-Scholes financial market {based on Theorem \ref{TT5}}. Section 5 shows the numerical results and Section 6 is a conclusion.
\setcounter{equation}{0}
\section{\bf Problem Formulation}
\label{Problem formulation}
In this section, we present the financial model under model uncertainty. The optimization problem under smooth ambiguity is also shown.
\subsection{\bf Financial market}%\vskip 5pt

We consider an {{incomplete}} financial market %{{(possibly incomplete)}}
in a complete filtered  probability space %\\
$(\Omega, \mathcal{F}, \{\mathcal{F}_t\}_{0\leq t\leq T}, \mathbf{P})$, where $\mathcal{F}_t$ is the information of financial market up to time $t$ and $[0,T]$ is a fixed time horizon.  $\mathbf{P}$ is the real probability measure of the financial market. In what follows, all the processes introduced below are assumed to be well defined and adapted to $\{\mathcal{F}_t\}_{0\leq t\leq T}$. We do not consider transaction costs and short selling is allowed.
%\vskip 5pt
In the financial market,  %consisting of
we assume that there are two assets: a risk-free asset {{$S^0$}} and a risky asset {{$S^1$}}. The risk-free asset {{$S^0$}} satisfies
{{
\[\mathrm{d}S^0_t=rS^0_t\mathrm{d}t,~S^0_0=1,\]
}}
%, whose dynamic processes are modeled by
where $r>0$ is the risk-free rate. {{The natural filtration generated by $S^1$ is denoted by $\{\mathcal{F}^S_t\}_{0\leq t\leq T}$.}}  %The risky asset $S_1$ is supposed to be a semi-martingale under $\mathbf{P}$.
%\vskip 5pt
\vskip 3pt
%${\mu}$ cannot be estimated accurately and is uncertain. $\sigma$ is a known constant, and $W$ is a standard $\mathbf{P}-$Brownian motion.
As the DM may be worried by the limitation of information which makes him/her uncertain about the precise distributions of the financial market, we assume that the real probability measure $\mathbf{P}$ is uncertain for the DM. We introduce a {{set of dominated }}
%family of equivalent
probability measures ${{\mathcal{P}=}}\{Q^\mu,~\mu \in \mathbb{D}\}$ to describe the possible priors of the market.
$\mathbb{D}$ is the index set of the parameter $\mu$ and is supposed to be a separable metric space, and $\mathcal{B}(\mathbb{D})$ is the Borel $\sigma$-algebra of $\mathbb{D}$.
From the DM's perspective, {{for any fixed $\mu \in \mathbb{D}$}}, $Q^\mu$ is a possible candidate of the real probability measure $\mathbf{P}$. %We assume that $Q^\mu$ is equivalent to $\mathbf{P}$ for $\mu \in \mathbb{D}$ as well.
%The uncertainty about $Q^\mu$ is characterized by a SOD $F(\mu)$ with probability density function w.r.t. Lebesgue measure (counting measure), denoted by $p(\mu)$, in continuous case (discrete case).
{{The DM's subjective belief over $\mathcal{P}$ is characterized by a SOD $F(\mu)$ on the measurable index space $(\mathbb{D},   \mathcal{B}(\mathbb{D}))$, which acts as a weighting scheme on $\mathcal{P}$ and represents the DM's subjective confidence level of a particular prior $Q^\mu$ to be the real probability measure.}}
The Lebesgue-Stieltjes measure $\mathbb{F}$ of $F(\mu)$ is a probability measure on the measurable index space $(\mathbb{D},   \mathcal{B}(\mathbb{D}))$, denoted by $\mathbb{F}(B)=\int_B\mathrm{d}F(\mu)$, for $B\in\mathcal{B}(\mathbb{D})$.
%Besides, the DM has some subjective attitudes towards the confidence level that $Q^\mu$ is the real probability measure, i.e., $\mu$ has a distribution function $F(\mu)$ with probability density function w.r.t. Lebesgue measure (counting measure), denoted by $p(\mu)$, in continuous case (discrete case).
 %for different \mu\in\mathbb{D}$.
{{We impose the following assumption on $\mathcal{P}$ and $S^1$. %, which is dominated by $\mathbf{Q}^{\mu_0}$, $\mu_0\in\mathbb{D}$.}}
\begin{assumption}\label{domin}
%\begin{equation}
$\mathcal{P}=\{Q^\mu,~\mu \in \mathbb{D}\}$ satisfies the following conditions:

\begin{enumerate}
\item There exists an observable probability measure $\hat{Q}\ll\mathbf{P}$, s.t. $\mathcal{P}$ is dominated by $\hat{Q}$, i.e.,
%\begin{equation}\nonumber
$Q^\mu\ll\hat{Q}$, $\forall Q^\mu\in\mathcal{P}$.
%\end{equation}
\item $\forall A\in\mathcal{F}^S_T$, $Q^\mu(A)$ is a $\mathcal{B}(\mathbb{D})$-measurable function of $\mu$.
\item $S^1$ is a semi-martingale under $\hat{Q}$ and any $Q^\mu\in\mathcal{P}$.
\end{enumerate}
%\end{equation}
\end{assumption}}}
Note that $\hat{Q}$ does not necessarily belong to $\mathcal{P}$. 
Denote ${{\eta^\mu_T}}=\frac{\mathrm{d}Q^\mu}{\mathrm{d}{{\hat{Q}}}}|_{\mathcal{F}^S_t},$  which represents the Radon-Nikodym derivative of ${Q^\mu}$ with respect to the {{$\hat{Q}$}} on the filtration ${\mathcal{F}^S_t}$. %, where $\mathcal{F}^S_t$ is the natural filtration generated by {{$S^1$}}. %{{Besides, $S^1$ is supposed to be a semi-martingale under any $Q^\mu\in\mathcal{P}$.}}

%\vskip 5pt
\vskip 3pt
\vskip 3pt
Let $\pi=\{\pi_t,{0\leq t\leq T}\}$ be the money invested in {{$S^1$}}.
{We call a progressively measurable process $\pi$ integrable if $\pi$ satisfies the integrability condition} $\int_0^T\frac{\pi^2_t}{{{{S^1_t}^2}}}\mathrm{d}{{\!<\!S^1\!>_t}}<+\infty,~a.s.~\hat{Q}$,
%The admissible set $\mathcal{V}[0,T]$ is defined by
where $<\!{{S^1}}\!>=\{{{<\!S^1\!>_t}},0\leq t\leq T\}$ is the quadratic variation of $S_1$, and {{the integrability condition also holds for any $Q^\mu\in\mathcal{P}$ as  $Q^\mu$ is dominated by $\hat{Q}$}}.
%we only require that the condition holds for $Q^{\mu_0}$ as any $Q^\mu\in\mathcal{P}$ is dominated by $Q^{\mu_0}$}}. %For convenience, we denote $\mathcal{V}[0,T]$ as the set of all the admissible strategies.
Then the DM's wealth process $X=\{{{X_t}}, 0\leq t\leq T\}$ with {an integrable} investment strategy $\pi$ satisfies the following stochastic differential equation (SDE):
 \begin{equation}\label{xg}
 \left\{
 \begin{array}{ll}
 \mathrm{d}{{X_t}}&=r({{X_t}}-{{\pi_t}})\mathrm{d}t+{{\pi_t\frac{\mathrm{d}S^1_t}{S^1_t}}},\\
 {{X_0}}&=x.
 \end{array}
 \right.
 \end{equation}
{
Moreover, the admissible set $\mathcal{V}[0,T]$ is defined by
\begin{equation*}
	\mathcal{V}[0,T]=\left\{\pi: \pi~\text{is progressively measurable and integrable;}~%~{{\pi_t}}~\text{is }\mathcal{F}^S_t\text{-measurable and  }\int_0^T\frac{\pi^2_t}{{{{S^1_t}^2}}}\mathrm{d}{{\!<\!S^1\!>_t}}<+\infty,~a.s.~{{\hat{Q}}}%Q^{\mu},\ \forall\mu\in\mathbb{D}
	X~\text{defined by SDE~(\ref{xg}) is nonnegative}~a.s.~\hat{Q}
	\right\}.
\end{equation*}}
{{Denote $\mathfrak{X}(x)$ as the set of the wealth process $X$ satisfying SDE~(\ref{xg}) {with $\pi\in\mathcal{V}[0,T]$}.
}}
As $\mathcal{P}$ is dominated by ${{\hat{Q}}}$, ${{\hat{Q}}}$ is regarded as the reference measure of the financial market, and the models of assets and wealth process under ${{\hat{Q}}}$ are the reference models. %For simplicity, we suppose that  the financial market is complete under  $Q^{\mu_0}$.
{{We denote the set of %equivalent
local martingale measures by $\mathcal{M}$, and impose the other basic assumption on the financial market, which is very standard in the utility maximisation literature in mathematical finance and also imposed in \cite{Bartl21}.%  assume that, for every $Q^\mu\in\mathcal{P}$, there exists at least one $Q\in\mathcal{M}$ such that $Q\ll Q^\mu$,
%$$\mathcal{M}\neq\emptyset.$$
\begin{assumption}\label{local}
For every $Q^\mu\in\mathcal{P}$, there exists at least one $Q\in\mathcal{M}$ such that $Q\ll Q^\mu$.
\end{assumption}
In particular, if the market is complete, i.e., $\mathcal{M}=\left\{Q\right\}$, we know $Q\approx\mathbf{P}$. Then by Assumptions \ref{domin} and \ref{local}, we know that for every $Q^\mu\in\mathcal{P}$, $Q\ll Q^\mu\ll\hat{Q}\ll\mathbf{P}$. Thus, in a complete market, Condition (1) of Assumption \ref{domin} and Assumption \ref{local} are equivalent to the condition that $Q^{\mu}\approx\mathbf{P}$, $\forall Q^\mu\in\mathcal{P}$, and %$\hat{Q}$ only needs to be observable and equivalent to $\mathbf{P}$ (or any one $Q^\mu\in\mathcal{P}$).
any observable semi-martingale probability measure equivalent to $\mathbf{P}$ (or any one $Q^\mu\in\mathcal{P}$) can be chosen as $\hat{Q}$, the domination of $\mathcal{P}$. The results in the complete market will be shown in Sections 4 and 5 by an example of a Black-Scholes financial market, and we concentrate more on the KMM problem in an incomplete market in Sections 2 and 3.
%which means that there exists at least one equivalent martingale measure $Q$ dominated by $Q^\mu$. %For any $Q\in\mathcal{M}$, the Radon-Nikodym derivative of $Q$ w.r.t. $Q^{\mu_0}$ is denoted by $\frac{\mathrm{d}Q}{\mathrm{d}Q^{\mu_0}}|_{\mathcal{F}^S_t}={{\eta_t}}$.
}}

% $\frac{\mathrm{d}Q^\mu}{\mathrm{d}Q^{\mu_0}}={{\eta^\mu_T}},\mu\in\mathbb{D}$ describes the Radon-Nikodym derivative from probability measure $Q^{\mu_0}$ to ${Q^\mu}$. Let $\frac{\mathrm{d}Q}{\mathrm{d}Q^{\mu_0}}={{\eta_T}}$, then under the equivalent probability measure $Q$, ${{X_T}}$ satisfies
%where $\{W^Q(t),0\leq t\leq T\}$ is a Brownian motion under $Q$, and we have $W^Q(t)-\nu t=W^{Q^\mu}(t)-\nu_\mu t$ for $\mu\in\mathbb{D}$. Also

%Then, $\{e^{-rt}{{X_t}},0\leq t\leq T\}$ is a local martingale under any $Q\in\mathcal{M}$.

%\vskip 5pt
\subsection{{{\bf Optimization problem}}} \vskip 5pt
In order to describe the DM's ambiguity attitudes and  risk attitudes in the ambiguous market, we introduce the smooth ambiguity model proposed by %{\color{blue}{Klibanoff, Marinacci and Mukerji}}
\cite{7}, {{and the optimization problem under smooth ambiguity model in a general financial market in continuous time is still open.}} %which is still an open problem.
The objective function is given by
\begin{equation}\label{ut2}
\begin{array}{lll}
\Phi({{X^\pi_T}})&&= \int_{\mathbb{D}}\phi\left(\mathbf{E}^{Q^\mu}\left[U\left({{X^\pi_T}}\right)\right]\right)\mathrm{d}F(\mu) \\
&&=\int_{\mathbb{D}}\phi\left(\mathbf{E}^{{{\hat{Q}}}}\left[U({{X^\pi_T}}){{\eta^\mu_T}}\right]\right)\mathrm{d}F(\mu) ,
\end{array}
\end{equation}
where $U$ is a von Neumann-Morgenstern utility function depicting the DM's attitude towards financial risks and $\phi$ is a strictly increasing function depicting DM's ambiguity attitudes. The ambiguous yield's possible range $\mathbb{D}$ and SOD $F(\mu)$ characterize DM's subjective information of ambiguity. %{which is supposed to satisfy
%\begin{equation}\nonumber
%	\int_{\mathbb{D}}\left(\mathbf{E}^{Q^\mu}\left[U\left({{X^\pi_T}}\right)\right]\right)^2\mathrm{d}F(\mu)<\infty,~\text{for any}~X\in\mathfrak{X}(x).
%\end{equation}
%}
We suppose that $U$ {takes value in $[0,+\infty)$ and} satisfies the Inada conditions
 \[U'(x)>0, U'(0)=+\infty, U'(+\infty)=0, U''(x)<0,\]
 which means that $U$ is strictly increasing and concave.
 {
\begin{remark}
	If the utility function $U$ takes value in $[a,+\infty)$ for $a<0$, define $\tilde{U}(x)=U(x)-a$ and $\tilde{\phi}(x)=\phi(x+a)$, then we have the fact that $\tilde{U}$ takes value in $[0,+\infty)$ and $\tilde{\phi}\!\circ\!\tilde{U}=\phi\!\circ\!U$.
\end{remark}
}
 Different from the max-min ambiguity model in \cite{Gil89}, this optimization rule is concerned with the average performance and can separate the risk attitudes and ambiguity attitudes. For the ambiguity attitudes $\phi$, %the max-min ambiguity  model in Gilboa and Schmeidler \cite{Gil89} with a given set of measures may be viewed as an extreme case of this model with infinite ambiguity aversion. H
   we assume that %the DM has finite ambiguity aversion, \textcolor{green}{ which means } that
   $\phi$ is strictly increasing and concave, i.e.,
   \[\phi'(x)>0, \phi''(x)<0.\]
%\vskip 5pt

The main purpose of this paper is to search the optimal strategy $\pi$  to maximize $\Phi({{X^\pi_T}})$ defined by  Eq.~(\ref{ut2}), i.e., {{solving}} the following problem:
\begin{equation}\label{p0}
%\begin{cases}
\max\limits_{X\in\mathfrak{X}(x)}\left\{\Phi\left({{X^\pi_T}}\right)\right\}
%&\mathop {\max }\limits_{\pi\in\mathcal{V}[0,T]}\left\{\Phi\left({{X^\pi_T}}\right)\right\},
%\\&\text{s.t. } X^{\pi}(\cdot) \text{ satisfies SDE} ~(\ref{xg}).
%\end{cases}
\end{equation}
%\vskip 5pt

Most work concerning smooth ambiguity employ the recursive form in \cite{Klibanoff}. Numerical methods are applied to obtain the solution. We see that Problem~(\ref{p0}) contains two utility functions and two expectations, as such, it is time-inconsistent. {{Rather than using stochastic dynamic programming method on the recursive form, %cannot be applied in this framework. %Following the martingale method in \cite{2}, Problem~(\ref{p0}) can be transformed to the following equivalent problem:
%{{
%\begin{equation}\label{p1}
%\begin{cases}
%&\max\limits_{{{X_T}}} \left\{ \int_{\mathbb{D}}\phi\left(\mathbf{E}^{Q^\mu}\left[U\left({{X_T}}\right)\right]\right)\mathrm{d}F(\mu)\right\},
%\\&\text{s.t. } \mathbf{E}^Q[{{X_T}}]\leq xe^{rT},~\forall Q\in\mathcal{M}.%=\mathbf{E}^{Q^{\mu_0}}[{{X_T}}{{\eta_T}}]\leq xe^{rT}.
%\end{cases}
%\end{equation}
%}}
%For simplicity, we drop the superscript of $X^\pi$ afterwards. In Problem~(\ref{p0}), the DM is ambiguous about $\mu$ and holds a reference point $\mu_0$. The reference model of SDE~(\ref{xg}) with $\mu$ replaced by $\mu_0$ is equivalent to the second budget equation in Problem~(\ref{p0}).  We note here that the budget constraint is derived under the reference probability measure $Q^{\mu_0}$. Although the DM has ambiguity attitudes, he/she has a reference model when making decision.  In statistics, $\{Q^\mu,~\mu \in \mathbb{D}\}$ and $Q^{\mu_0}$ can be viewed as the interval estimation and point estimation of the financial market, respectively. Because
we aim to establish duality theorem for Problem~(\ref{p0}) and search the semi-explicit form of the pre-commitment solution under smooth ambiguity. %the budget constraint is related to the reference model of SDE~(\ref{xg}). Besides
As such, in our model, the DM is assumed to be pre-committed and does not update the subjective beliefs of $\mu$, i.e., we do not consider the updating of $\{Q^\mu,~\mu \in \mathbb{D}\}$ and $F$ with new information here.
}}
%{
%\begin{remark}
%The KMM problem is formulated in the pre-commitment situation. In our model, the DM has two considerations. First, at initial time, in order to make decisions, the DM pre-commits $\mathbf{P}=Q^{\mu_0}$, i.e., he/she believes that the wealth process evolves as the reference model. Second, the DM has multiple priors over the financial model and needs to take the corresponding utilities into account. As such, in our framework, the DM acts as in the reference financial market and simultaneously considers the investment performances under other possible models. Problem~(\ref{p0}) and Problem~(\ref{p0}) are equivalent in the sense that the DM pre-commits $\mathbf{P}=Q^{\mu_0}$, i.e., SDE~(\ref{xg}) is corresponding to the reference model in the optimization problem.
%\end{remark}
%}

%\vskip 5pt
\setcounter{equation}{0}
\section{{ {\bf Solution of Problem~(\ref{p0})}}}
The {{objective function}} %optimization goal
in Problem~(\ref{p0}) is very different from the {{classical}} maximization of  expected utility of terminal wealth (EUT) problem. In this section, we {{transform Problem~(\ref{p0}) into the combination of two kinds of one-fold expected utility optimization problems}} and solve Problem~(\ref{p0}) in the following two steps. In the first step, we consider the admissible set of the {{expected}} utilities %of the terminal wealth corresponding to
under different priors.  Then we solve a multi-objective criteria problem{{ (which is  equivalent to a classical EUT problem under a weighted average probability measure)}} to find the efficient frontier of the admissible set. In the second step, {{it is enough}} %we only need
to solve Problem~(\ref{p3}) %on the efficient frontier
because $\phi$ is strictly increasing, {{which is an one-fold expected utility optimization problem on the second-order space. Under Assumption \ref{Ass},}} we {{establish the bipolar relation in Theorem \ref{bipolarThm} %Proposition \ref{bipolar}
and distorted duality theorem in Theorem \ref{duality}. Finally, under a further assumption that the asymptotic elasticity of the ambiguous attitude is less than 1, the solution to Problem~(\ref{p3}) is obtained.}} %extend the dual method
%in the second step to solve the optimization problem on the efficient frontier and call it distorted dual method, see Problem~(\ref{p3}) later. %with expected ambiguity attitude on the measurable index space $(\mathbb{D},   \mathcal{B}(\mathbb{D}))$}.
\vskip 5pt
%The idea to find the efficient frontier is similar to \cite{Zhou}, in which they formulate the continuous-time mean-variance portfolio selection problem as a bi-criteria optimization problem. However, the method in \cite{Zhou} cannot  be directly applied  for Problem~(\ref{p0}). We aim to handle both continuous case and discrete case of the SOD, which results in high-dimension or %even
%infinite-dimension objectives. In the continuous case, the efficient frontier is no longer compact. To overcome the difficulty, we introduce the distorted conjugation and extend the dual method in %{\color{blue}{Kramkov and Schachermayer}}
%\cite{25} to solve this problem. For technical requirements, we impose an assumption on the frontier to find the dual problem of Problem~(\ref{p0}) on the frontier. The distortion function in the assumption shows the effect of initial wealth on the constraint of the efficient frontier. %is not only a sort of constraint of the efficient frontier, but also, in some specified cases, the distorted initial value to the problem on the efficient frontier.
%Finally, we use the relationship between Problem~(\ref{p0}) on the frontier and its dual problem to obtain the properties of the value function and then  solve  Problem~(\ref{p0}).

%\vskip 5pt
\subsection{{{\bf  The efficient frontier}}}Because it is hard to obtain the optimal strategies on the whole admissible set  $\mathcal{V}[0,T]$ for Problem~(\ref{p0}) directly, we derive a smaller set of efficient strategies first. We introduce two sets here: $\mathbf{H}(x)$ is the set related with all admissible strategies, $\mathbf{B}(x)$ is a smaller set related with the efficient strategies. Our first step is to reduce from the whole admissible set $\mathbf{H}(x)$ to the efficient frontier $\mathbf{B}(x)$. Afterwards, we search the optimal strategy on the efficient frontier. First, we define the admissible set $\mathbf{H}(x)$ and the efficient frontier set $\mathbf{B}(x) \subset \mathbf{H}(x)$ with initial value $x$ for Problem~(\ref{p0}).
\vskip 10pt
\begin{definition}\label{D1}
{{For every $X\in\mathfrak{X}(x)$ and $\mu\in\mathbb{D}$, denote $\mathbf{b}(\mu,X)\triangleq\mathbf{E}^{Q^\mu}\left[U(X_T)\right]$}}. Then
the admissible set $\mathbf{H}(x)$ is %set of functions
defined by
\[\mathbf{H}(x)=\left\{ {{\mathbf{b}(\cdot,X):\mathbb{D}\rightarrow\mathbb{R}~%\mu\mapsto\mathbf{E}^{Q^\mu}[U({{X_T}})]:\mathbb{D}\rightarrow\mathbb{R}
|~X\in\mathfrak{X}(x)}}\right\},\]
{which is a subset of $\mathbf{L}^0(\mathbb{D}, \mathcal{B}(\mathbb{D}), \mathbb{F})$, the the vector space of 	(equivalence classes of) real-valued measurable functions defined on $(\mathbb{D}, \mathcal{B}(\mathbb{D}), \mathbb{F})$ equipped with the topology of convergence in measure.}
 $\mathbf{B}(x)$ is defined as a subset of $\mathbf{H}(x)$. {{For $\hat{X}\in\mathfrak{X}(x)$, function $\mathbf{b}(\cdot,\hat{X})%\triangleq\mathbf{E}^{Q^{\bigcdot}}[U({{\hat{X}_T}})]
\in \mathbf{B}(x) $}}
if and only if  for any ${{X\in\mathfrak{X}(x)}}$,
{{\[\mathbf{b}(\cdot,X)-\mathbf{b}(\cdot,\hat{X})\notin {\mathbf{L}^0_+(\mathbb{D}, \mathcal{B}(\mathbb{D}), \mathbb{F})}\setminus\{0\},\]}}
where ${\mathbf{L}^0_+(\mathbb{D}, \mathcal{B}(\mathbb{D}), \mathbb{F})}\triangleq\{\mathbf{d}{\in\mathbf{L}^0(\mathbb{D}, \mathcal{B}(\mathbb{D}), \mathbb{F}),\mathbf{d}\geq 0}%(\cdot):\mathbb{D}\rightarrow\mathbb{R}~|~\mathbf{d}~\text{is}~\mathcal{B}(\mathbb{D})\text{-measurable}, \mathbf{d}(\mu)\geq0 \text{ on } {\mathbb{D}}, \text{ a.s. }\mathbb{F}%/ \mathcal{N}, \mathcal{N} \text{ is a null set}
\}$.%, ${{\hat{X}_T}}$ and ${{X_T}}$ satisfy the budget constraint
% \[\mathbf{E}^{Q}[{{\hat{X}_T}}]=\mathbf{E}^{Q}[{{X_T}}]\leq x e^{rT}.\]
\end{definition}
\vskip 10pt

In fact, if ${{\hat{X}}}$ attains the maximum of $\Phi({{X_T}})$ in Problem~(\ref{p0}), then {{$\mathbf{b}(\cdot,\hat{X})\in \mathbf{B}(x)$.}} If {{$\mathbf{b}(\cdot,\hat{X})\notin \mathbf{B}(x)$}}, then there exists some {{$\tilde{X}\in\mathfrak{X}(x)$ satisfying $\mathbf{b}(\cdot,\tilde{X})\in \mathbf{B}(x)$}} such that
{{\[ \mathbf{b}(\cdot,\tilde{X})-\mathbf{b}(\cdot,\hat{X})\in {\mathbf{L}^0_+(\mathbb{D}, \mathcal{B}(\mathbb{D}), \mathbb{F})}\setminus\{0\}.\]}}
However, $\phi(x)$ is strictly increasing. As such,
\[\int_{\mathbb{D}}\phi\left(\mathbf{E}^{Q^\mu}\left[U({{\tilde{X}_T}})\right]\right)\mathrm{d}F(\mu) > \int_{\mathbb{D}}\phi\left(\mathbf{E}^{Q^\mu}\left[U({{\hat{X}_T}})\right]\right)\mathrm{d}F(\mu),\]  which contradicts with the fact that ${{\hat{X}_T}}$ attains the maximum of $\Phi({{X_T}})$. Therefore, it is necessary and enough for us to study the efficient frontier set $\mathbf{B}(x)$. Next we characterize the set $\mathbf{B}(x)$. The efficient frontier $\mathbf{B}(x)$ can be derived by a multi-goal optimization problem: $\max\limits_{{{X\in\mathfrak{X}(x)}}}\left\{\mathbf{b}(\mu,X),\mu\in \mathbb{D}\right\}$. %$\max\limits_{{{X_T}}}\left\{\mathbf{E}^{Q^\mu}[U({{X_T}})],\mu\in \mathbb{D}\right\}$.
In \cite{Zhou}, the mean-variance problem is two-goal and they transform the problem into an equivalent one maximizing the mean minus the weighted variance. In our work, we also expect to transform the  multi-goal optimization problem to an equivalent one with only one goal. For this purpose, we investigate the relationship between $\mathbf{B}(x)$ and an equivalent optimization problem %is presented
in the following theorem.
\vskip 10pt
\begin{theorem}\label{TT1}
For {{$\hat{X}\in\mathfrak{X}(x)$,}}  ${{\mathbf{b}(\cdot,\hat{X})}}\in \mathbf{B}(x)$ if and only if there exists a function $\lambda\in {\mathbf{L}^0_+(\mathbb{D}, \mathcal{B}(\mathbb{D}), \mathbb{F})}$ satisfying $ <\lambda,\mathbb{I}_{\mathbb{D}}>=1$ such that
\[{{\hat{X}}} = \arg\max\limits_{{{X\in\mathfrak{X}(x)}}}\left\{<{{\mathbf{b}(\cdot,X)}},\lambda(\cdot)>\right\},\]
where %${{X\in\mathfrak{X}(x)}}$,
$\mathbb{I}_{\mathbb{D}}$ is the indicator function of $\mathbb{D}$, and {$<f_1,f_2>\triangleq\int_{\mathbb{D}}f_1 f_2\mathrm{d}\mathbb{F}$, for $f_1$, $f_2\in\mathbf{L}^0_+(\mathbb{D}, \mathcal{B}(\mathbb{D}), \mathbb{F})$. %If $U$ takes value in $(-\infty,0]$, $f_1\in\mathbf{L}^0_-(\mathbb{D}, \mathcal{B}(\mathbb{D}), \mathbb{F})$, $f_2\in\mathbf{L}^0_+(\mathbb{D}, \mathcal{B}(\mathbb{D}), \mathbb{F})$, and $<f_1,f_2>\triangleq\int_{\mathbb{D}}f_1 f_2\mathrm{d}\mathbb{F}$.
}
%$<\cdot,\cdot>$ is the inner product of $\mathbf{L}^2(\mathbb{D}, \mathcal{B}(\mathbb{D}), \mathbb{F})$. %the $\mathbf{L}^2$ space on $\mathbb{D}$.%(\ref{xi})
\end{theorem}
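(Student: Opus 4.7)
The plan is to prove the equivalence by a scalarization argument familiar from vector optimization, treating the two implications separately. The ``if'' direction is a direct check: the ``only if'' direction requires a Hahn--Banach-type separation of convex sets in $\mathbf{L}^0(\mathbb{D}, \mathcal{B}(\mathbb{D}), \mathbb{F})$.

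For sufficiency, suppose $\hat{X}$ attains $\max_{X\in\mathfrak{X}(x)}\langle \mathbf{b}(\cdot,X),\lambda\rangle$ for some $\lambda\in\mathbf{L}^0_+(\mathbb{D}, \mathcal{B}(\mathbb{D}), \mathbb{F})$ with $\langle\lambda,\mathbb{I}_{\mathbb{D}}\rangle=1$, and assume for contradiction that $\mathbf{b}(\cdot,\hat{X})\notin \mathbf{B}(x)$. Then there exists $\tilde{X}\in\mathfrak{X}(x)$ with $\mathbf{b}(\cdot,\tilde{X})-\mathbf{b}(\cdot,\hat{X})\in\mathbf{L}^0_+(\mathbb{D}, \mathcal{B}(\mathbb{D}), \mathbb{F})\setminus\{0\}$, and pairing with $\lambda$ yields $\langle \mathbf{b}(\cdot,\tilde{X}),\lambda\rangle\ge \langle \mathbf{b}(\cdot,\hat{X}),\lambda\rangle$. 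Taking $\lambda$ strictly positive $\mathbb{F}$-a.e.\ (which can be arranged in the construction below), this inequality is strict, contradicting optimality of $\hat{X}$.

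For necessity, I would introduce the set $\mathcal{C}\triangleq\{\mathbf{b}(\cdot,X)-\mathbf{b}(\cdot,\hat{X}):X\in\mathfrak{X}(x)\}-\mathbf{L}^0_+(\mathbb{D}, \mathcal{B}(\mathbb{D}), \mathbb{F})$ and show that it is convex: for $X_1,X_2\in\mathfrak{X}(x)$ and $\alpha\in[0,1]$, the wealth SDE~(\ref{xg}) is linear in $\pi$ so $X_3=\alpha X_1+(1-\alpha)X_2\in\mathfrak{X}(x)$, and concavity of $U$ gives $\mathbf{b}(\mu,X_3)\ge \alpha\mathbf{b}(\mu,X_1)+(1-\alpha)\mathbf{b}(\mu,X_2)$ pointwise in $\mu$; the excess is absorbed into the $-\mathbf{L}^0_+$ term. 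The definition of $\mathbf{B}(x)$ translates exactly to $\mathcal{C}\cap(\mathbf{L}^0_+\setminus\{0\})=\emptyset$, so $\mathcal{C}$ and the positive cone are disjoint convex sets. Separating them by a positive linear functional, representing it as $f\mapsto\langle f,\lambda\rangle$ for some $\lambda\in\mathbf{L}^0_+$, and normalizing by $\langle\lambda,\mathbb{I}_{\mathbb{D}}\rangle$ produces the desired weight, and the separation inequality is exactly the assertion that $\hat{X}$ maximizes $\langle \mathbf{b}(\cdot,X),\lambda\rangle$.

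The main obstacle is that $\mathbf{L}^0$ is not locally convex, so the classical Hahn--Banach theorem cannot be applied directly and the representation of the separating functional via $\lambda\in\mathbf{L}^0_+$ is delicate. I expect to handle this by truncation: consider the bounded slices $\mathcal{C}\cap \mathbf{L}^\infty$ (or rescale by a strictly positive $\mathbb{F}$-integrable function to embed into $\mathbf{L}^1$), apply a separation theorem in the locally convex setting to obtain first a finitely additive positive measure, then upgrade to a countably additive $\lambda\,\mathrm{d}\mathbb{F}$ via a Yosida--Hewitt-type decomposition together with the monotone-convergence behavior of $U$ along efficient sequences, and finally verify that the resulting $\lambda$ is strictly positive $\mathbb{F}$-a.e.\ (otherwise the functional would vanish on a part of $\mathbb{D}$ where efficiency would be violated), which both secures the normalization $\langle\lambda,\mathbb{I}_{\mathbb{D}}\rangle=1$ and closes the sufficiency argument.
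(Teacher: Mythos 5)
Your necessity argument follows essentially the same route as the paper: convexity of the shifted set (via linearity of the wealth dynamics and concavity of $U$, absorbing the slack into $-\mathbf{L}^0_+$), disjointness from the positive cone, and a separation theorem yielding a nonnegative normalized weight $\lambda$. The paper separates $\left\{\mathbf{b}(\cdot,\hat{X})\right\}+\mathbf{L}^0_+(\mathbb{D},\mathcal{B}(\mathbb{D}),\mathbb{F})$ from $\mathbf{B}(x)-\mathbf{D}$ rather than your translated version, and it also proves closedness of $\mathbf{B}(x)-\mathbf{D}$ in the topology of convergence in measure (using the convex compactness of the terminal-wealth set from \cite{25}) before invoking separation; you flag the non-local-convexity of $\mathbf{L}^0$, which is a legitimate concern, but your proposed repair via Yosida--Hewitt is left as a sketch and you do not supply the closedness step that any separation argument will need.

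The genuine gap is in the sufficiency (``if'') direction. There $\lambda$ is part of the \emph{hypothesis}: you are given some $\lambda\in\mathbf{L}^0_+(\mathbb{D},\mathcal{B}(\mathbb{D}),\mathbb{F})$ with $\langle\lambda,\mathbb{I}_{\mathbb{D}}\rangle=1$ for which $\hat{X}$ maximizes the scalarized objective, and you must conclude efficiency for \emph{that} $\lambda$. You cannot ``arrange'' $\lambda$ to be strictly positive $\mathbb{F}$-a.e.; if $\lambda$ vanishes on a set of positive $\mathbb{F}$-measure, a dominating $\mathbf{b}(\cdot,\tilde{X})-\mathbf{b}(\cdot,\hat{X})\in\mathbf{L}^0_+\setminus\{0\}$ supported there pairs to zero against $\lambda$ and your contradiction evaporates. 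Moreover, the strict positivity you hope to extract in the necessity direction is not available either: separation only yields $\lambda\ge 0$, and an efficient point can perfectly well be supported only by weights that vanish on part of $\mathbb{D}$ (think of the endpoints of a Pareto frontier), so the two halves of your argument cannot be glued together this way. The paper closes this gap differently, using strict concavity of $U$: if the pairing of $\mathbf{b}(\cdot,\tilde{X})-\mathbf{b}(\cdot,\hat{X})\ge 0$ against $\lambda$ is zero, then $\mathbf{b}(\cdot,\tilde{X})=\mathbf{b}(\cdot,\hat{X})$ on the support of $\lambda$, so both $\hat{X}$ and $\tilde{X}$ maximize the scalarized problem; the midpoint $\tfrac12(\hat{X}+\tilde{X})$ would then do strictly better unless $\hat{X}_T=\tilde{X}_T$ a.s.\ under $\hat{Q}$, which forces $\mathbf{b}(\cdot,\tilde{X})=\mathbf{b}(\cdot,\hat{X})$ for every $\mu$ and contradicts the assumed domination. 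You need this (or an equivalent uniqueness-of-maximizer argument); strict positivity of $\lambda$ is not the right tool.
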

\begin{proof}
	See Appendix \ref{proof of TT1}.
\end{proof}
\begin{remark}
{$<\cdot,\cdot>$ is not a scalar product in the usual sense of the word as it may assume the value $+\infty$. But the expression  is a well-defined element of $[0,+\infty]$ and the application $(\cdot,\cdot)\rightarrow<\cdot,\cdot>$ has, with necessary modifications, the obvious properties of a bilinear function.
}
%Here $\mu$ is not restricted to be a continuous distribution, as such, we deal with both continuous case and discrete case of SOD. For $f_1,f_2\in \mathbf{L}^2(\mathbb{D}, \mathcal{B}(\mathbb{D}),   \mathbb{F})$, the inner product is defined by  the Lebesgue-Stieltjes integral $<f_1(\cdot),f_2(\cdot)>=\int_{\mathbb{D}}f_1(\mu)f_2(\mu)\mathrm{d} F(\mu)  $.
% When $\mu$ is discretely distributed with $\mu_1,\mu_2,\cdots,\mu_n$, for $f_1,f_2\in \mathbf{L}^2(\mathbb{D}, \mathcal{B}(\mathbb{D}),  {\color{orange}{ \mathbb{F} }} )$, the inner product is defined by  $<f_1(\cdot),f_2(\cdot)>=\sum\limits_{i=1}\limits^n f_1(\mu_i)f_2(\mu_i) {\color{orange}{ p(\mu_i) }} $. When $\mu$ is continuously distributed, for $f_1,f_2\in \mathbf{L}^2(\mathbb{D}, \mathcal{B}(\mathbb{D}),  {\color{orange}{ \mathbb{F} }} )$, $<f_1(\cdot),f_2(\cdot)>=\int_{\mathbb{D}}f_1(\mu)f_2(\mu)\mathrm{d} {\color{orange}{ F(\mu) }} $. %The norm of $f\in\mathbf{L}^2(\mathbb{D}, \mathcal{B}(\mathbb{D}),  {\color{orange}{ \mathbb{F} }} )$ is defined by $||f||=\sqrt{<f(\cdot),f(\cdot)>}$. %for $ \mathbf{L}^2(\mathbb{D}, \mathcal{B}(\mathbb{D}), \mathrm{d}\mu)$. %Moreover, the density changes to $p(\mu)=\mathop{\sum}\limits_{i}p_i1_{\{\mu_i\}}(\mu)$.
\end{remark}

\vskip 10pt
{{
%Note that
Theorem \ref{TT1} shows that in order to obtain the efficient frontier $\mathbf{B}(x)$, we only need to solve the following problem without function $\phi(\cdot)$ first:
{{
\begin{equation}\label{pr22}
%\left\{
%\begin{array}{lll}
J(x,\lambda)=\max\limits_{{{X\in\mathfrak{X}(x)}}}\left\{<\lambda(\cdot),\mathbf{E}^{Q^{\bigcdot}}\left[U({{X_T}})\right]>\right\}.
%\mbox{s.t.} \quad \mathbf{E}^{Q}[{{X_T}}]{{\leq}} x e^{rT},~\forall Q\in\mathcal{M}.%=\mathbf{E}^{Q^{\mu_0}}[{{X_T}}{{\eta_T}}]{{\leq}} x e^{rT}.
%\end{array}
%\right.
\end{equation}
For every fixed $\lambda\in{\mathbf{L}^0_+(\mathbb{D}, \mathcal{B}(\mathbb{D}), \mathbb{F})}$ satisfying $<\lambda,\mathbb{I}_{\mathbb{D}}>=1$, %corresponding to some point on the efficient frontier, if the DM does not invest in the risky asset, the terminal wealth will be $X_T=xe^{rT}$, and
we know 
\begin{equation}\nonumber
\begin{split}
\mathbf{E}^{\hat{Q}}\left[\int_\mathbb{D}\lambda(\mu)\eta^\mu_T\mathrm{d}\mathbb{F}\right]&=\int_\mathbb{D}\lambda(\mu)\mathbf{E}^{\hat{Q}}\left[\eta^\mu_T\right]\mathrm{d}\mathbb{F}\\
&=\int_\mathbb{D}\lambda(\mu)\mathbf{E}^{Q^{\mu}}\left[~1~\right]\mathrm{d}\mathbb{F}\\
&=1.
\end{split}
\end{equation}
%As such, $\int_\mathbb{D}\lambda(\mu)\eta^\mu_T\mathrm{d}\mu\in\mathbf{L}^1({{\hat{Q}}})$.
Denote
\begin{equation}\nonumber
{}^\lambda\eta_T=\int_\mathbb{D}\lambda(\mu)\eta^\mu_T\mathrm{d}\mathbb{F},
\end{equation}
with ${}^\lambda Q$ satisfying
\begin{equation}\nonumber
\frac{\mathrm{d}{}^\lambda Q}{\mathrm{d}Q^{\mu_0}}|_{\mathcal{F}_T^S}={}^\lambda\eta_T.
\end{equation}
Then we have
\begin{equation}\nonumber
\begin{split}
<\lambda(\cdot),\mathbf{E}^{Q^{\bigcdot}}\left[U({{X_T}})\right]>&=\int_\mathbb{D}\lambda(\mu)\mathbf{E}^{\hat{Q}}\left[U(X_T)\right]\mathrm{d}\mathbb{F}\\
&=\mathbf{E}^{\hat{Q}}\left[\int_\mathbb{D}\lambda(\mu)U(X_T)\eta^\mu_T\mathrm{d}\mathbb{F}\right]\\
&=\mathbf{E}^{\hat{Q}}\left[U(X_T)\int_\mathbb{D}\lambda(\mu)\eta^\mu_T\mathrm{d}\mathbb{F}\right]\\
&=\mathbf{E}^{\hat{Q}}\left[U(X_T){}^\lambda\eta_T\right]\\
&=\mathbf{E}^{{}^\lambda Q}\left[U(X_T)\right].
\end{split}
\end{equation}
Problem~(\ref{pr22}) can be written as
\begin{equation}\label{pr2}
J(x,\lambda)=\max\limits_{X\in\mathfrak{X}(x)}\left\{\mathbf{E}^{{}^\lambda Q}\left[U(X_T)\right]\right\},
\end{equation}
which is a classical EUT problem for fixed $\lambda$.

Based on Assumption \ref{local}, for every $Q^\mu\in\mathcal{P}$, $\exists Q^\mu_*\in\mathcal{M}$ such that $Q^\mu_*\ll Q^\mu$. Then, $\forall A\in\mathcal{F}^S_T$, denote
\begin{equation}\nonumber
{}^\lambda Q^*(A)=\int_{\mathbb{D}}\lambda(\mu)Q^\mu_*(A)\mathrm{d}\mathbb{F},
\end{equation}
and ${}^\lambda Q^*$ is also a local martingale measure. For any $A\in\mathcal{F}^S_T$ satisfying ${}^\lambda Q(A)=0$, as $\lambda(\mu)\geq 0$ and $Q^\mu(A)\geq 0$ for any $\mu\in\mathbb{D}$, we have
\begin{equation}\nonumber
Q^\mu(A)=0~\text{on}~\Lambda,~\text{a.s., }~\mathbb{F},
\end{equation}
where $\Lambda$ is the support set of $\lambda$. As $Q^\mu_*\ll Q^\mu$ for all $\mu\in\mathbb{D}$, we have
\begin{equation}\nonumber
Q^\mu_*(A)=0~\text{on}~\Lambda,~\text{a.s.}~\mathbb{F}.
\end{equation}
As such, ${}^\lambda Q^*(A)=0$, which means ${}^\lambda Q^*\ll {}^\lambda Q$, i.e., there exists a local martingale measure for Problem~(\ref{pr2}). {Based on \cite{25}, $\tilde{\mathcal{D}}$ is the subset of $\mathcal{D}$ consisting of the functions $h$ of the form $h = \frac{\mathrm{d}Q}{\mathrm{d}\mathbf{P}}$, for some $Q\in\mathcal{M}^e$, where $\mathcal{M}^e$ is the set of equivalent martingale measures. Similarly, denote $\bar{\mathcal{D}}$ the subset of $\mathcal{D}$ consisting of the functions $h$ of the form $h = \frac{\mathrm{d}Q}{\mathrm{d}\mathbf{P}}$, for some $Q\in\mathcal{M}$. Note that
\begin{equation}\nonumber
	\mathcal{D}\subset\bar{\mathcal{D}}^{00}\subset\mathcal{D}^{00},
\end{equation}
where $\mathcal{D}^{00}=\left(\mathcal{D}^0\right)^0$, and $\mathcal{D}^0$ is the polar of $\mathcal{D}$ defined by
\begin{equation}\nonumber
	\mathcal{D}^0\triangleq\left\{f\in\mathbf{L}^0_+(\Omega,\mathcal{F}^S,{}^\lambda Q):\mathbf{E}^{{}^\lambda Q}[fg]\leq 1, \forall g\in \mathcal{D}\right\}.
\end{equation}
Based on the results in Section 4 in \cite{25}, we have
\begin{equation}\nonumber
	\mathcal{D}=\bar{\mathcal{D}}^{00}=\mathcal{D}^{00},
\end{equation}
which means that the results in \cite{25} still hold if there exists at least one local martingale measure.
}

Based on {Theorem 2.2} in \cite{25}, for any fixed $\lambda\in{\mathbf{L}^0_+(\mathbb{D}, \mathcal{B}(\mathbb{D}), \mathbb{F})}$ satisfying $<\lambda,\mathbb{I}_{\mathbb{D}}>=1$, $J(x,\lambda)$ is a continuously differentiable strictly concave function of $x$ satisfying the Inada condition, and the terminal wealth $\hat{X}_T$ of the point on the efficient frontier corresponding to $\lambda$ is given by
\begin{equation}
\hat{X}^\lambda_T=I(\hat{Y}^\lambda_T)e^{rT},
\end{equation}
where $I(\cdot)=(U')^{-1}(\cdot)$, and $\hat{Y}^\lambda_T$ is the solution of the following problem
\begin{equation}\label{infY}
\inf\limits_{Y\in\mathscr{Y}(y)}\mathbf{E}^{{}^\lambda Q}[V(Y_T)],
\end{equation}
where $y=J_x(x,\lambda)$, $V(y)=\sup\limits_{x>0}\left[U(x)-xy\right]$, for $y>0$, and
\begin{equation}\nonumber
\mathscr{Y}(y;\lambda)=\left\{Y\!\geq 0:Y_0=y~\text{and}~\{X_tY_te^{-rt}\!,0\!\leq\!t\!\leq\!T\}~\text{is a supermartingale under}~{}^\lambda Q~\text{for all}~X\!\in\!\mathfrak{X}(1)\right\}.
\end{equation}

%we can obtain an optimal wealth of Problem~(\ref{pr2}), which is corresponding to a point on the efficient frontier $\mathbf{B}(x)$. In \cite{Zhou}, the original two-goal mean-variance problem is equivalent to the problem maximizing the mean minus weighted variance. Our work considers finite/infinite (depending on the index set $\mathbb{D}$) goals. Using the convexity, we know that Problem~(\ref{pr2}) is equivalent to the problem maximizing the sum of weighted goals. %Problem~(\ref{pr2}) is a one-goal optimization problem and can be easily solved.
%To solve Problem~(\ref{pr2}), we use the Lagrange dual method:
%\begin{equation}\nonumber
%\begin{array}{lll}
%\mathop {\sup }\limits_{{{X_T}}} \left\{<\lambda(\cdot),\mathbf{E}^{Q^{\bigcdot}}\left[U({{X_T}})\right]> \text{ s.t. }\mathbf{E}^{Q^{\mu_0}}[{{X_T}}{{\eta_T}}]= x e^{rT} \right\}\\
%=\mathop {\inf }\limits_{\kappa \geq 0}
%\left\{\mathop {\sup }\limits_{{{X_T}}}\left\{ <\lambda(\cdot),\mathbf{E}^{Q^{\bigcdot}}[U({{X_T}})]>+\kappa(x e^{rT}-\mathbf{E}^{Q^{\mu_0}}[{{X_T}}{{\eta_T}}]) \right\} \right\}\\
%=\mathop {\inf }\limits_{\kappa \geq 0}
%\left\{\mathop {\sup }\limits_{{{X_T}}}\left\{ <\lambda(\cdot),\mathbf{E}^{Q^{\mu_0}}[U({{X_T}}){{\eta^{\cdot}_T}}]>+\kappa(x e^{rT}-\mathbf{E}^{Q^{\mu_0}}[{{X_T}}{{\eta_T}}]) \right\} \right\}\\
%=\mathop {\inf }\limits_{\kappa \geq 0}\left\{\kappa x e^{rT}+\mathop {\sup }\limits_{{{X_T}}}\mathbf{E}^{Q^{\mu_0}}\left[<\lambda(\cdot),U({{X_T}}){{\eta^{\cdot}_T}}>-\kappa {{X_T}}{{\eta_T}}\right]\right\}.
%\end{array}
%\end{equation}
}}

We have shown that the efficient claim ${{\hat{X}^\lambda_T}}$ exists for any $\lambda\in{\mathbf{L}^0_+(\mathbb{D}, \mathcal{B}(\mathbb{D}), \mathbb{F})}$ satisfying $<\lambda,\mathbb{I}_{\mathbb{D}}>=1$. %Because the reference financial model is complete, we can obtain the efficient strategy related with any ${{\hat{X}_T}}$  on the efficient  frontier.
Next, we replicate ${{\hat{X}^\lambda_T}}$ and derive the strategy $\hat{\pi}^\lambda$ corresponding to ${{\hat{X}^\lambda_T}}$.
%Beacuse there exists a ${}^\lambda Q^*\in\mathcal{M}$ such that ${}^\lambda Q^*\ll {}^\lambda Q$, $\{e^{-rt}{{\hat{X}_t}}, {0\leq t\leq T} \}$ is a martingale under ${}^\lambda Q^*$, based on % Eq.~(\ref{xg}) and
Based on \cite{25} (assertion (ii) of Theorem 2.2), $\left\{\hat{X}^\lambda_t\hat{Y}^\lambda_t e^{-rt},0\leq t\leq T\right\}$ is a uniformly integrable martingale under ${}^\lambda Q$ and $\hat{X}^\lambda$ is strictly positive. Denote a probability measure ${}^\lambda Q'$ by
\begin{equation}\label{mart}
\frac{\mathrm{d}{}^\lambda Q'}{\mathrm{d}{}^\lambda Q}|_{\mathcal{F}^S_T}=\frac{\hat{X}^\lambda_T\hat{Y}^\lambda_T}{xye^{rT}}.
\end{equation}
Based on \cite{DS95}, the discounted processes of $1/\hat{X}^\lambda$ and $S^1/\hat{X}^\lambda$ are both local martingales under ${}^\lambda Q'$.
As such, by the martingale representation theorem, there exists a unique stochastic process $\{\hat{f}^\lambda_t, 0\leq t\leq T  \}$ such that
\begin{equation}\label{XX2}
\begin{split}%{ll}
\frac{e^{rt}}{\hat{X}^\lambda_t} %&= \mathbf{E}^{{}^\lambda Q'}\left[\frac{e^{rT}}{\hat{X}^\lambda_T}
%e^{-rT}I\left(\frac{\hat{\kappa}{{\eta_T}}}{<\lambda(\cdot),{{\eta^\cdot_T}}>}\right){{\eta_T}}
%|\mathcal{F}^S_t\right]\\
&= \frac{1}{x} + \int_0^t \hat{f}^\lambda_u\mathrm{d}\left(\frac{S^1_u e^{ru}}{\hat{X}^\lambda_u}\right).
\end{split}
\end{equation}
Based on SDE~(\ref{xg}), we have
\begin{equation}\label{XX3}
\left\{
\begin{split}%{ll}
&\mathrm{d}\left(\frac{e^{rt}}{\hat{X}_t}\right)=\frac{r\hat{\pi}_t e^{rt}}{\hat{X}_t^2}\mathrm{d}t+\frac{e^{rt}}{{\hat{X}_t}^3}\frac{\hat{\pi}_t}{S^1_t}\frac{\mathrm{d}[S^1]_t}{S^1_t}-\frac{\hat{\pi}_t e^{rt}}{{\hat{X}_t}^2}\mathrm{d}S^1_t,\\
&\mathrm{d}\left(\frac{S^1_t e^{rt}}{\hat{X}_t}\right)=\frac{r\hat{\pi}_t e^{rt}}{{\hat{X}_t}^2}S^1_t\mathrm{d}t+\frac{\hat{\pi}_t e^{rt}}{{\hat{X}_t}^2}\left(\frac{\hat{\pi}_t}{4\hat{X}_t}-1\right)\frac{\mathrm{d}[S^1]_t}{S^1_t}+\frac{e^{rt}}{{\hat{X}_t}}\left(\frac{\hat{\pi}_t}{\hat{X}_t}-1\right)\mathrm{d}S^1_t.
\end{split}
\right.
\end{equation}
Comparing the diffusion parts in Eq.~(\ref{XX2}) and
Eq.~(\ref{XX3}), we obtain the efficient strategy at time $t$:
\begin{equation*}%\label{ops22}
\hat{\pi}^\lambda_t=\frac{\hat{f}^\lambda_t}{\hat{f}^\lambda_t-1} \hat{X}^\lambda_t.
\end{equation*}

}}

We have proved that every efficient claim ${{\hat{X}^\lambda_T}}$ and the related  strategy {{$\hat{\pi}^\lambda$}} can be obtained for all $\lambda\in{\mathbf{L}^0_+(\mathbb{D}, \mathcal{B}(\mathbb{D}), \mathbb{F})}$ satisfying $<\lambda,\mathbb{I}_{\mathbb{D}}>=1$. Varying $\lambda\in{\mathbf{L}^0_+(\mathbb{D}, \mathcal{B}(\mathbb{D}), \mathbb{F})}$, we can obtain the efficient frontier $\mathbf{B}(x)$. As $\phi(\cdot)$ is an increasing function, the solution of Problem~(\ref{p0}) is obviously on the efficient frontier. Thus, we only need to search the optimal strategy of Problem~(\ref{p0}) on the efficient frontier $\mathbf{B}(x)$ later.

\subsection{{{\bf Duality of Problem~(\ref{p0}) on the efficient frontier}}}
After characterizing the efficient frontier $\mathbf{B}(x)$, we now solve Problem~(\ref{p0}) on $\mathbf{B}(x)$ by extending the duality theorem. Because $\phi(\cdot)$ is strictly increasing, Problem~(\ref{p0}) is equivalent to the following problem
\begin{equation}\label{p2}
u(x)=\mathop {\max }\limits_{{{\mathbf{b}(\cdot,\hat{X})}}\in \mathbf{B}(x)}\left\{\int_{\mathbb{D}}\phi\left({{\mathbf{b}(\mu,\hat{X})}}\right)\mathrm{d}F(\mu)\right\},
\end{equation}
where $\mathbf{B}(x)$  is the efficient frontier with initial wealth $x$.
%We cannot present explicit form of the solution of Problem~(\ref{p2}) for general $\phi$ and $U$. But we can show the existence and uniqueness of the problem under some suitable assumptions.
{{In \cite{Bartl21}, %\cite{Sch05}, the dual relation
the duality theory is derived for the robust utility maximization problem, %under the worst case scenario,
in which the duality theorem and the bipolar relation are established on the sets of super-hedgeable claims and separating measures. However, in our model, the DM is no longer only concerned with the expected utility under the worst case scenario, but the nonlinear weighted average of the expected utilities w.r.t. different priors. As such, we aim to establish the duality theorem and the bipolar relation on the sets of expected utilities w.r.t. different priors and weight functions rather than directly on the sets of the claims and probability measures. %and we use properties of the efficient frontier $\mathbf{B}(x)$ to find the dual relation on the functional of different performances.
}}
\vskip 5pt
Noting that Problem~(\ref{p2}) is an expected utility optimization problem on $\mathbf{B}(x)$,  because there are more than one weight function $\lambda$ satisfying $\lambda\in{\mathbf{L}^0_+(\mathbb{D}, \mathcal{B}(\mathbb{D}), \mathbb{F})}$ and $<\lambda,\mathbb{I}_{\mathbb{D}}>=1$, we first compare Problem~(\ref{p2}) with the classical EUT problem in an incomplete market. Then we derive the dual admissible set
and {{establish the bipolar relation and the dual theorem to solve Problem~(\ref{p2}).}}%solve the dual problem.

%\vskip 5pt
Based on Theorem \ref{TT1}, %and Proposition \ref{kappa}
every $\lambda\!\in\!{\mathbf{L}^0_+(\mathbb{D}, \mathcal{B}(\mathbb{D}), \mathbb{F})}$ satisfying $<\lambda,\mathbb{I}_{\mathbb{D}}>=1$ is related with a unique point ${{\mathbf{b}(\cdot,\hat{X}^\lambda)}}\in\mathbf{B}(x)$, {{which solves Problem~(\ref{pr2}).}} Let $\mathbf{S}=\{\lambda\in{\mathbf{L}^0_+(\mathbb{D}, \mathcal{B}(\mathbb{D}), \mathbb{F})}, <\lambda,\mathbb{I}_{\mathbb{D}}>=1\}$. %\\ %and  $l(x,\lambda) =\mathbf{E}^{Q^\mu}[U({{X^\lambda_T}})]$, where
%and ${{\hat{X}^\lambda_T=I\left(\hat{Y}^\lambda_T\right)}}$.\\% , where $\kappa(\lambda)$ is determined by the budget constraint Eq.~(\ref{XT_kappa}).\\
	%Define
{{Recall the fact that, for any $\lambda\in\mathbf{S}$,}}
	\begin{equation}\nonumber
		J(x,\lambda)=<{{\mathbf{b}(\cdot,\hat{X}^\lambda)}},\lambda(\cdot)>.
	\end{equation}
	Then, based on Theorem \ref{TT1}, for any %$\mathbf{b}=\mathbf{E}^{Q^{\bigcdot}}\left[U({{\hat{X}_T}})\right]\in\mathbf{B}(x)$ and
$\lambda\in\mathbf{S}$ and any ${{\mathbf{b}(\cdot,\hat{X})}}\in\mathbf{B}(x)$ {{{(not necessarily the point related to $\lambda$)}}, we have
	\begin{equation}\label{budget}
		\begin{split}
			\int_{\mathbb{D}}{{\mathbf{b}(\mu,\hat{X})}}\lambda(\mu)\mathrm{d}F(\mu)&=<{{\mathbf{b}(\cdot,\hat{X})}},\lambda(\cdot)>\\
			&\leq<{{\mathbf{b}(\cdot,\hat{X}^\lambda)}},\lambda(\cdot)>\\
			&=J(x,\lambda).
		\end{split}
	\end{equation}
%	Then, rewrite Problem~(\ref{p2}) as
%	\begin{equation}\nonumber
%	\begin{split}
%		&\max \limits_{{{\mathbf{b}\in\mathbf{B}(x)}}}\left\{\int_{\mathbb{D}}{{\phi(\mathbf{b})\mathrm{d}\mathbb{F}}}\right\},
%		\\&\text{subject to}~\int_{\mathbb{D}}{{\lambda\mathbf{b} \mathrm{d}\mathbb{F}}}\leq J(x,\lambda),~\forall \lambda\in\mathbf{S}.
%	\end{split}
%	\end{equation}
The form of Eq.~(\ref{budget}) is similar to the budget constraint of the classical EUT problem.
	%the distorted budget constraint of the efficient frontier $\mathbf{B}(x)$.
	To be more clear, the point $\mathbf{b}$ on the efficient frontier corresponds to the terminal wealth ${{X_T}}$ in classical EUT problem; the weight function $\lambda(\cdot)$ corresponds to ${{\xi_T}}$, the terminal variable of the density process of the equivalent local martingale measure in incomplete market in classical EUT problem because $\lambda(\cdot)$ is not unique; the integration $\int_{\mathbb{D}}\cdot~\mathrm{d}F(\mu)$ corresponds to the expectation $\mathbf{E}^{\mathbf{P}}[~\cdot~]$ in classical EUT problem. The main difference between Eq.~(\ref{budget}) and the budget constraint of classical EUT problem is that the constraint becomes $J(x,\lambda)$, no longer a value of $xe^{rT}$ only depending on $x$ in the classical EUT problem. The dependence of $J(x,\lambda)$  on $\lambda$ makes it impossible to derive {{the bipolar relation}} directly as the classical EUT problem.
\vskip 5pt
If $J(x,\lambda)$ is independent of $\lambda$, Eq.~(\ref{budget}) becomes the distorted budget constraint of the efficient frontier $\mathbf{B}(x)$, and $J(x)$ is the constraint value corresponding to $xe^{rT}$ in classical EUT problem. As such, the structures of $\mathbf{B}(x)$ and the admissible set $\mathfrak{X}(x)$ in incomplete market in %{\color{blue}{Kramkov and Schachermayer}}
\cite{25} are similar. Then Problem~(\ref{p2}) has the same form as the EUT problem in an incomplete market in %{\color{blue}{Kramkov and Schachermayer}}
\cite{25}, which is
	\begin{equation}\nonumber
		\max\limits_{X\in\mathfrak{X}(x)}\left\{\mathbf{E}^{\mathbf{P}}\left[U({{X_T}})\right]\right\}.
	\end{equation}
	%The similarity between the two problems makes it possible to introduce the distorted conjugation and extend the dual method in Kramkov and Schachermayer {\color{blue}{Kramkov and Schachermayer}}\cite{25} to solve Problem~(\ref{p2}).
The original optimization problem~(\ref{p0}) is a combination of two utility functions. Using the above arguments, the optimal terminal claim can be obtained on $\mathbf{B}(x)$ by studying Problem~(\ref{p2}). Problem~(\ref{p2}) is similar to the problem under an incomplete market in %{\color{blue}{Kramkov and Schachermayer}}
\cite{25} {{with a distorted initial value. However, it is fairly unrealistic that $J(x,\lambda)$ is independent of $\lambda$.}}%while with a more complicated structure of the budget constraint. Next we investigate Problem~(\ref{p2}) by dual method.
%Then we introduce the distorted conjugation and extend the dual method in Kramkov and Schachermayer {\color{blue}{Kramkov and Schachermayer}}\cite{25} to solve Problem~(\ref{p2}).
\vskip 5pt
In most cases, $J(x,\lambda)$ depends on $\lambda$,  see Section 4 for details. If $J(x,\lambda)$ depends on $\lambda$, it is impossible to derive the {{classical}} duality theorem for Problem~(\ref{p2}). However, if $x$ and $\lambda$ can be separated in $J(x,\lambda)$, we {{are able to}} extend the dual method in %{\color{blue}{Kramkov and Schachermayer}}
\cite{25} {{and establish a distorted duality theorem}} for Problem~(\ref{p2}). %In order to show the existence and uniqueness of the solution of Problem~(\ref{p2}),
As such, we assume that $x$ and $\lambda$ can be separated in the following assumption. And we will see in Section 4 that most of the utility functions satisfy the assumption.

\begin{assumption}\label{Ass}
For any initial value $x\!>\!0$ and non-negative  weight function {{$\lambda\in\mathbf{S}$,}} %$\lambda\!\in \!\mathbf{L}^2(\mathbb{D}, \!\mathcal{B}(\mathbb{D}),  \mathbb{F})$,
there exist {a function $h(x)$ and a %{{continuous}}
functional $\rho(\lambda)$} such that
\begin{equation}\label{asp}
\begin{split}
J(x,\lambda)=~<{{\mathbf{b}(\cdot,\hat{X}^\lambda)}}%\mathbf{E}^{Q^{\bigcdot}}\left[U({X^\lambda}(T))\right]
,\lambda(\cdot)>~=~h(x)\rho(\lambda).
\end{split}
\end{equation}
\end{assumption}
{{If $h(x)$ and $\rho(\lambda)$ can be arbitrage chosen, the uniqueness  may not exist because the pair $\{\tilde{h}\triangleq kh,\tilde{\rho}\triangleq\frac{1}{k}\rho\}$ also satisfies Assumption \ref{Ass} for any $k>0$, $k\neq 1$. As such, we suppose that every $h(x)$ mentioned below satisfies $h(1)=U(e^{rT})$.

%{In the rest of this section, we consider the case that $U$ takes value in $[0,+\infty)$ (i.e., $h>0$) without further explanation. }

Noting that $xS^0=\{xe^{rt},0\leq t\leq T\}\in\mathfrak{X}(x)$, we know $\mathbf{b}(\cdot,xS^0)\in\mathbf{H}(x)$ and $\mathbf{b}(\mu,xS^0)=U(xe^{rT})$, $\forall \mu\in\mathbb{D}$. Based on Eq.~(\ref{asp}), for any %non-negative
 weight function $\lambda\in \mathbf{S}$, the functional $\rho$ is supposed to satisfy
\begin{equation}\nonumber
\begin{split}
\rho(\lambda)&=\frac{<\mathbf{b}(\cdot,\hat{X}^\lambda),\lambda(\cdot)>}{h(x)}\\
&\geq\frac{<\mathbf{b}(\cdot,xS^0),\lambda(\cdot)>}{h(x)}\\
&=\frac{U(xe^{rT})}{h(x)}.
\end{split}
\end{equation}
Let $x=1$, and we have
\begin{equation}\nonumber
\rho(\lambda)\geq 1,~\forall \lambda\in\mathbf{S},
\end{equation}
and the equality holds if and only if $\mathbf{b}(\cdot,S^0)\in\mathbf{B}(1)$ and $S^0=\hat{X}^\lambda$.
}}
%{Similarly, if $U$ takes value in $(-\infty,0]$, we have $h<0$ and $\rho(\lambda)\in(0,1]$, $\forall \lambda\in\mathbf{S}$, and the equality holds if and only if $\mathbf{b}(\cdot,S^0)\in\mathbf{B}(1)$ and $S^0=\hat{X}^\lambda$.}

In Assumption \ref{Ass}, if $\lambda$ is divided by $\rho(\lambda)$, {{for any $\mathbf{b}(\cdot,\hat{X})\in\mathbf{B}(x)$}}, we have
\begin{equation}\label{truebudget}
\begin{split}
&\int_{\mathbb{D}}{{\mathbf{b}(\mu,\hat{X})}}
\frac{\lambda(\mu)}{\rho(\lambda)}\mathrm{d}F(\mu)
=<{{\mathbf{b}(\cdot,\hat{X})}},\frac{\lambda(\cdot)}{\rho(\lambda)}>\\
&\leq<{{\mathbf{b}(\cdot,\hat{X}^\lambda)}},\frac{\lambda(\cdot)}{\rho(\lambda)}>
=h(x).
\end{split}
\end{equation}
As such, Eq.~(\ref{truebudget}) remains the form of the distorted budget constraint, and $h(x)$ becomes the distorted value of  the new budget constraint corresponding to $xe^{rT}$ in  a classical EUT problem, {{which helps us derive the dual admissible set and the bipolar relation}}.

{{Before the definition of the dual admissible set, we first present some properties of  $h$ and $\rho$. For the distortion function $h$, recall that for any fixed $\lambda\in\mathbf{S}$, $J(x,\lambda)$ is a continuously differentiable strictly concave function of $x$ and $J(x,\lambda)=h(x)\rho(\lambda)$. As such, $h$ is continuously differentiable and strictly concave and satisfies the Inada condition. As a result, both $h$ and $h'$ are invertible on $(0,\infty)$. %it is strictly increase because the utility function $U$ is strictly increase.
}} Besides, in some particular cases, $h(x)$ is just {{equal}} to $U(xe^{rT})$, which is shown in the following proposition.
%we view $h(x)$ as a distortion function of initial value $x$ because $U(x)$ is non-linear, which is different from existing dual method applied in the EUT problem. Based on Assumption \ref{Ass}, we are able to find the relationship between $\mathbf{B}(x)-\mathbf{D}$ and the set $\{\frac{\lambda}{\rho(\lambda)}:\lambda\in L^2(\mathbb{D})\}$. On the one hand, the distortion function $h(x)$ is the constraint of the efficient frontier $\mathbf{B}(x)$ corresponding to the weight function $\frac{\lambda}{\rho(\lambda)}$. On the other hand, in some specific cases, $h(x)$ can also be viewed as the  distorted initial value to Problem~(\ref{p3}), which is discussed in the following proposition. %Section 4. in details
\vskip 10pt
\begin{proposition}\label{Prop1}
{{If $\mathcal{P}\cap\mathcal{M}\neq \emptyset$, i.e., there exists a local martingale measure $Q^*$ satisfying $Q^*\in\mathcal{P}$ and Assumption \ref{Ass} holds, then $h(x)=U(xe^{rT})$, $\forall x> 0$.}}
 %the financial market is complete and the equivalent martingale measure $Q \in \{Q^{\mu},~\mu \in \mathbb{D}\}$ and Assumption \ref{Ass} holds, $h(x)$ is proportional to $U(xe^{rT})$, i.e., there exists $k>0$ such that
%\begin{equation}\nonumber
%<\mathbf{E}^{Q^{\bigcdot}}\left[U({X^\lambda}(T))\right],\lambda(\cdot)>~=~U(xe^{rT})\cdot k\rho(\lambda).
%\end{equation}
\end{proposition}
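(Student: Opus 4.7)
The plan is to exhibit a single weight function $\lambda^{*}\in\mathbf{S}$ for which $J(x,\lambda^{*})$ can be computed directly as $U(xe^{rT})$, and then exploit the normalisation $h(1)=U(e^{rT})$ to force $\rho(\lambda^{*})=1$, so that the separation formula in Assumption \ref{Ass} collapses to the desired identity $h(x)=U(xe^{rT})$.

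Since $Q^{*}\in\mathcal{P}$, we can write $Q^{*}=Q^{\mu^{*}}$ for some $\mu^{*}\in\mathbb{D}$; assuming $\mathbb{F}(\{\mu^{*}\})>0$, the natural choice is $\lambda^{*}(\mu)=\mathbb{I}_{\{\mu^{*}\}}(\mu)/\mathbb{F}(\{\mu^{*}\})$, which lies in $\mathbf{S}$ since $<\lambda^{*},\mathbb{I}_{\mathbb{D}}>=1$. A direct computation from the definition of ${}^{\lambda^{*}}\eta_{T}$ shows that ${}^{\lambda^{*}}\eta_{T}=\eta^{\mu^{*}}_{T}$, so ${}^{\lambda^{*}}Q=Q^{*}$, and Problem~(\ref{pr2}) reduces to $J(x,\lambda^{*})=\sup_{X\in\mathfrak{X}(x)}\mathbf{E}^{Q^{*}}[U(X_{T})]$.

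The key step is to evaluate this supremum. Because $Q^{*}\in\mathcal{M}$, the discounted price $S^{1}e^{-rt}$ is a local martingale under $Q^{*}$, and the identity $X_{t}e^{-rt}=x+\int_{0}^{t}(\pi_{s}/S^{1}_{s})\,\mathrm{d}(S^{1}_{s}e^{-rs})$ (which follows from SDE~(\ref{xg})) exhibits the discounted wealth as a stochastic integral of a local martingale, hence itself a local martingale; being non-negative it is a supermartingale under $Q^{*}$, giving $\mathbf{E}^{Q^{*}}[X_{T}]\leq xe^{rT}$. Jensen's inequality together with the concavity of $U$ then yields $\mathbf{E}^{Q^{*}}[U(X_{T})]\leq U(xe^{rT})$, and the bound is saturated by the all-risk-free strategy $\pi\equiv 0$ (for which $X_{T}=xe^{rT}$). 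Hence $J(x,\lambda^{*})=U(xe^{rT})$.

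Finally, Assumption \ref{Ass} gives $h(x)\rho(\lambda^{*})=U(xe^{rT})$ for every $x>0$. Specialising to $x=1$ and invoking $h(1)=U(e^{rT})$ forces $\rho(\lambda^{*})=1$, and substituting back delivers $h(x)=U(xe^{rT})$ for all $x>0$. The most delicate point is the implicit requirement $\mathbb{F}(\{\mu^{*}\})>0$ so that $\lambda^{*}$ is a bona fide element of $\mathbf{S}$; should $\mu^{*}$ fail to be an $\mathbb{F}$-atom, one would approximate $\lambda^{*}$ by weights supported on shrinking neighbourhoods of $\mu^{*}$, verify via dominated convergence that the associated ${}^{\lambda_{n}}Q$ converge to a local martingale measure for the purposes of the Jensen estimate, and pass to the limit in $J(x,\lambda_{n})$.
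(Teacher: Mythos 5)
Your core analytic ingredient -- the supermartingale property of discounted wealth under the local martingale measure $Q^{*}$ combined with Jensen's inequality, yielding $\mathbf{E}^{Q^{*}}[U(X_T)]\leq U(xe^{rT})$ with equality only for $\pi\equiv 0$ -- is exactly the estimate the paper uses. But your route from that estimate to $h(x)=U(xe^{rT})$ has a genuine gap: it hinges on the weight $\lambda^{*}=\mathbb{I}_{\{\mu^{*}\}}/\mathbb{F}(\{\mu^{*}\})$ being a legitimate element of $\mathbf{S}$, which requires $\mathbb{F}(\{\mu^{*}\})>0$. In the paper's own examples the SOD is Gaussian, so \emph{every} singleton is $\mathbb{F}$-null and $\lambda^{*}$ is the zero element of $\mathbf{L}^0_+(\mathbb{D},\mathcal{B}(\mathbb{D}),\mathbb{F})$; your main construction is vacuous precisely where the proposition is meant to be applied. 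The approximation you sketch does not repair this: if $\lambda_n$ is supported on a shrinking neighbourhood of $\mu^{*}$, the mixture ${}^{\lambda_n}Q=\int\lambda_n(\mu)Q^{\mu}\,\mathrm{d}\mathbb{F}$ is in general \emph{not} a local martingale measure (only $Q^{\mu^{*}}$ is), so the supermartingale-plus-Jensen bound $J(x,\lambda_n)\leq U(xe^{rT})$ simply does not hold along the approximating sequence -- typically $J(x,\lambda_n)>U(xe^{rT})$ strictly -- and proving $J(x,\lambda_n)\to U(xe^{rT})$ would require continuity of $\mu\mapsto Q^{\mu}$ (or of $\mu\mapsto\eta^{\mu}_T$) that is nowhere assumed.

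The paper avoids the need to localize any weight at $\mu^{*}$. It uses your Jensen estimate under the single measure $Q^{*}\in\mathcal{P}$ only to show that no admissible $X\neq xS^0$ can dominate the risk-free claim in the sense of Definition \ref{D1}, i.e.\ that the constant function $\mathbf{b}(\cdot,xS^0)\equiv U(xe^{rT})$ lies on the efficient frontier $\mathbf{B}(x)$. Theorem \ref{TT1} then supplies, abstractly via the separation argument, \emph{some} supporting weight $\lambda_0\in\mathbf{S}$ for this point, and since the pairing of a constant function with any normalized weight is that constant, $h(x)\rho(\lambda_0)=U(xe^{rT})\langle\mathbb{I}_{\mathbb{D}},\lambda_0\rangle=U(xe^{rT})$ follows with no atomicity assumption whatsoever; the normalization $h(1)=U(e^{rT})$ (together with the $x$-independence of $\inf_{\lambda}\rho(\lambda)$) then pins down $h$. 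If you want to salvage your argument, replace the explicit point mass by this existence statement: you already proved the efficiency of $xS^0$, which is all that Theorem \ref{TT1} needs.
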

\begin{proof}
	See Appendix \ref{proof of Prop1}.
\end{proof}

\begin{remark}
The condition {{$\mathcal{P}\cap\mathcal{M}\neq \emptyset$}} in Proposition \ref{Prop1} holds in the examples  given in Section 4.
\end{remark}
\vskip 10pt
{{For the functional $\rho$, it is currently defined on $\mathbf{S}$, the set of normalized non-negative weight functions. However, Problem~(\ref{pr2}) and Eq.~(\ref{asp}) are also well-defined for $\tilde{\lambda}=k\lambda$ for $k>0$ and $\lambda\in\mathbf{S}$. %non-negative $\lambda\in \mathbf{L}^2(\mathbb{D}, \mathcal{B}(\mathbb{D}), \mathbb{F} )$.
It is obvious that $\hat{X}^{\tilde{\lambda}}=\hat{X}^\lambda$, then $\rho$ can be defined on ${\mathbf{L}^0_+(\mathbb{D}, \mathcal{B}(\mathbb{D}), \mathbb{F})}$ as follows
\begin{equation}\nonumber
\begin{split}
\rho(\tilde{\lambda})&=\frac{<\mathbf{b}(\cdot,\hat{X}^{\tilde{\lambda}}),\tilde{\lambda}(\cdot)>}{h(x)}\\
&=\frac{<\mathbf{b}(\cdot,\hat{X}^\lambda),k\lambda(\cdot)>}{h(x)}\\
&=k\rho(\lambda).
\end{split}
\end{equation}
As such, for any $\tilde{\lambda}\in{\mathbf{L}^0_+(\mathbb{D}, \mathcal{B}(\mathbb{D}), \mathbb{F})}$, $\frac{\tilde{\lambda}}{\rho(\tilde{\lambda})}=\frac{k\lambda}{k\rho(\lambda)}=\frac{\lambda}{\rho(\lambda)}$ and $k=<\tilde{\lambda},\mathbb{I}_{\mathbb{D}}>$, which means that $\rho$ is homogeneous. For any $\lambda_1$, $\lambda_2\in\mathbf{S}$ and $\alpha\in(0,1)$, denote $\bar{\lambda}=\alpha\lambda_1+(1-\alpha)\lambda_2$ we have
\begin{equation}\nonumber
\begin{split}
h(x)\rho(\bar{\lambda})&=%\max\limits_{X\in\mathfrak{X}(x)}
<\mathbf{b}(\cdot,\hat{X}^{\bar{\lambda}}),\bar{\lambda}>\\
&=\alpha<\mathbf{b}(\cdot,\hat{X}^{\bar{\lambda}}),\lambda_1>+(1-\alpha)<\mathbf{b}(\cdot,\hat{X}^{\bar{\lambda}}),\lambda_2>\\
&\leq\alpha<\mathbf{b}(\cdot,\hat{X}^{\lambda_1}),\lambda_1>+(1-\alpha)<\mathbf{b}(\cdot,\hat{X}^{\lambda_2}),\lambda_2>\\
&=\alpha h(x)\rho(\lambda_1)+(1-\alpha)h(x)\rho(\lambda_2).
\end{split}
\end{equation}
Therefore, $\rho\left(\alpha\lambda_1+(1-\alpha)\lambda_2\right)\leq\alpha\rho(\lambda_1)+(1-\alpha)\rho(\lambda_2)$, which means that $\rho$ is convex.
}}
%{If $U$ takes value in $(-\infty,0]$, $\rho$ is concave as $h<0$.}
%Moreover, %it is not difficult for us to observe the properties in the proposition.
%if Assumption \ref{Ass} holds, because of the linearity of inner product, for fixed $x$ and $\lambda_2=k\lambda_1, k\in \mathbb{R}_+$, the two weight functions $\lambda_1$ and $\lambda_2$ will correspond to the same point on the frontier, i.e., $\mathbf{E}^{Q^{\bigcdot}}[U({X^{\lambda_2}}(T))]=\mathbf{E}^{Q^{\bigcdot}}[U({X^{\lambda_1}}(T))]$, which means that $\rho(k\lambda)=k\rho(\lambda)$ holds for any $k\in \mathbb{R}_+$ and non-negative $\lambda\in \mathbf{L}^2(\mathbb{D}, \mathcal{B}(\mathbb{D}), \mathbb{F} )$.

Assumption \ref{Ass} is necessary to derive the dual admissible set and the bipolar relation. %the dual problem, %solve Problem~(\ref{p3})
Otherwise, the dual method cannot be applied. However, we later show that the assumption holds when the utility function $U$ is either CARA function or HARA function (including CRRA function except for $\log(x)$).
\vskip 5pt
In the following, we define the dual admissible set {{and derive the bipolar relation.}}
 %show the dual problem of Problem~(\ref{p3}).
Based on Assumption \ref{Ass} and Eq.~(\ref{truebudget}), we first  define the dual admissible set $\mathbf{G}(y)$ %of $\mathbf{B}(x)-\mathbf{D}$, %and observe that $\mathbf{G}(y)=y\mathbf{G}(1)$
for $y>0$.
\vskip 10pt
\begin{definition}\label{DD2}
\begin{equation}\nonumber
\mathbf{G}(y)\!=\!\left\{g\!\in\! {\mathbf{L}^0_+(\mathbb{D},\mathcal{B}(\mathbb{D}),\mathbb{F})}%\mathbf{L}_+^2\!(\mathbb{D},\!\mathcal{B}(\mathbb{D})\!,\!\mathrm{d}\mu)
:\exists \lambda\!\in\!\mathbf{S} %\mathbf{L}_+^2(\mathbb{D},\mathcal{B}(\mathbb{D}),  \mathbb{F})
~\text{s.t.}~%<\!\lambda(\cdot),1\!>\leq\! 1~\text{and}~
g(\mu)\!\leq\!\frac{y\lambda(\mu)}{\rho(\lambda)},~\forall\mu\in\mathbb{D}\right\}\!.
\end{equation}
\end{definition}
%\vskip 10pt

%When $\mu$ has continuous distribution, the probability density function of $\mu$ is denoted by $p(\mu)$. For the discrete distribution case, $p(\mu_i)=\mathbf{P}[\mu=\mu_i]$, for $\mu_i\in\mathbb{D}$.
{
Define $\mathbf{D}=\mathbf{L}^0_+(\mathbb{D}, \mathcal{B}(\mathbb{D}), \mathbb{F})$, and
\begin{equation}\nonumber
	\mathbf{B}(x)-\mathbf{D}=\left\{f\in\mathbf{L}^0_+(\mathbb{D}, \mathcal{B}(\mathbb{D}), \mathbb{F}):\exists \mathbf{b}(\cdot,\hat{X})\in\mathbf{B}(x),~\text{s.t.}~f(\mu)\leq\mathbf{b}(\mu,\hat{X}),~\forall\mu\in\mathbb{D}\right\}.
\end{equation}
%If $U$ takes value in $(-\infty,0]$, $\mathbf{B}(x)-\mathbf{D}$ and $\mathbf{G}(y)$ are defined by
%\begin{equation}\nonumber
%	\begin{split}
%		&\mathbf{B}(x)-\mathbf{D}=\left\{f\in\mathbf{L}^0_-(\mathbb{D}, \mathcal{B}(\mathbb{D}), \mathbb{F}):\exists \mathbf{b}\in\mathbf{B}(x),~\text{s.t.}~f\leq\mathbf{b}\right\},\\
%		&\mathbf{G}(y)\!=\!\left\{g\!\in\! {\mathbf{L}^0_+(\mathbb{D},\mathcal{B}(\mathbb{D}),\mathbb{F})}:\exists \lambda\!\in\!\mathbf{S} ~\text{s.t.}~ g(\mu)\!\geq\!\frac{y\lambda(\mu)}{\rho(\lambda)},~\forall\mu\in\mathbb{D}\right\}.
%	\end{split}
%\end{equation}
}
Based on Assumption \ref{Ass}, it follows that, for $f\in \mathbf{B}(x)-\mathbf{D}, g\in\mathbf{G}(y)$,
\begin{eqnarray*}\nonumber
\int_{\mathbb{D}}{{fg~\mathrm{d}\mathbb{F}}}%f(\mu)g(\mu)~\mathrm{d}F(\mu)
&&=<f(\cdot),g(\cdot)>\\
&&\leq<f(\cdot),\frac{y\lambda(\cdot)}{\rho(\lambda)}>\\
&&\leq<{{\mathbf{b}(\cdot,\hat{X}^\lambda)}},\frac{y\lambda(\cdot)}{\rho(\lambda)}>\\
&&=h(x)y.
\end{eqnarray*}
%$\mathbf{G}(y)$ can be seen as a dual set of $\mathbf{B}(x)-\mathbf{D}$. The duality between $\mathbf{B}(x)-\mathbf{D}$ and $\mathbf{G}(y)$ enables us to search the optimal wealth based on the dual set $\mathbf{G}(y)$.
\vskip 4pt
Summarizing the former results, we obtain the following relation%ship
 between $\mathbf{B}(x)-\mathbf{D}$ and $\mathbf{G}(y)$.
\vskip 4pt
\begin{proposition}\label{set}
For any fixed $x>0,y>0$,
\begin{equation}\label{main}
\begin{split}
&f\in\mathbf{B}(x)-\mathbf{D}~\text{iff}~\int_{\mathbb{D}}{{fg~\mathrm{d}\mathbb{F}}}\leq h(x)y ~\text{ for all }~ g\in \mathbf{G}(y),\\
&g\in\mathbf{G}(y)~\text{iff}~\int_{\mathbb{D}}{{fg~\mathrm{d}\mathbb{F}}}\leq h(x)y ~\text{ for all }~ f\in\mathbf{B}(x)-\mathbf{D}.
\end{split}
\end{equation}
\end{proposition}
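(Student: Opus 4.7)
The chain of inequalities displayed just before the proposition already establishes the ``only if'' direction of both equivalences: whenever $f\in\mathbf{B}(x)-\mathbf{D}$ and $g\in\mathbf{G}(y)$, one has $\int_{\mathbb{D}}fg\,\mathrm{d}\mathbb{F}\leq h(x)y$. What remains is to prove the two converses, which together assert that $\mathbf{B}(x)-\mathbf{D}$ and $\mathbf{G}(y)$ are mutually polar under the pairing $\langle f,g\rangle=\int_{\mathbb{D}} fg\,\mathrm{d}\mathbb{F}$ with respect to the bound $h(x)y$.

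My plan is to reduce the converses to a bipolar theorem in $\mathbf{L}^0_+(\mathbb{D},\mathcal{B}(\mathbb{D}),\mathbb{F})$ of the Brannath-Schachermayer type. For this I need to verify that both $\mathbf{B}(x)-\mathbf{D}$ and $\mathbf{G}(y)$ are convex, solid, and closed in the topology of convergence in measure. Solidity is built into both definitions. Convexity of $\mathbf{B}(x)-\mathbf{D}$ follows from convexity of $\mathfrak{X}(x)$ (SDE~(\ref{xg}) is linear in $\pi$, and nonnegativity of terminal wealth is preserved under convex combinations) together with concavity of $U$, which gives $\alpha\mathbf{b}(\cdot,X_1)+(1-\alpha)\mathbf{b}(\cdot,X_2)\leq\mathbf{b}(\cdot,\alpha X_1+(1-\alpha)X_2)\in\mathbf{H}(x)\subset\mathbf{B}(x)-\mathbf{D}$. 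Convexity of $\mathbf{G}(y)$ follows from the $1$-homogeneity and convexity of $\rho$ established after Definition~\ref{DD2}: if $g_i\leq y\lambda_i/\rho(\lambda_i)$ for $i=1,2$, then $\alpha g_1+(1-\alpha)g_2$ is majorized by $y\lambda/\rho(\lambda)$ after renormalizing $\alpha\lambda_1/\rho(\lambda_1)+(1-\alpha)\lambda_2/\rho(\lambda_2)$ back into $\mathbf{S}$.

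The main obstacle is closedness. For $\mathbf{B}(x)-\mathbf{D}$ I would follow the Koml\'{o}s/Fatou pattern of \cite{25}: given $f_n\in\mathbf{B}(x)-\mathbf{D}$ converging in $\mathbb{F}$-measure to $f$, I pick dominating efficient claims $\mathbf{b}(\cdot,\hat{X}^{\lambda_n})$ and extract a subsequence whose Ces\`{a}ro averages of $\hat{X}^{\lambda_n}$ converge $\hat{Q}$-a.s.\ to some $\hat{X}\in\mathfrak{X}(x)$; this uses Assumption~\ref{local} and the Fatou-closedness of attainable wealths inherited from the classical EUT setting under each measure ${}^\lambda Q$. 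Fatou's lemma applied to each $Q^\mu$ then gives $\mathbf{b}(\cdot,\hat{X})\in\mathbf{B}(x)$ dominating $f$. A parallel Koml\'{o}s argument, together with $1$-homogeneity and convexity of $\rho$, yields closedness of $\mathbf{G}(y)$.

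With all three properties in place, the converses follow from a Hahn-Banach separation/bipolar argument. Concretely, if $f\in\mathbf{L}^0_+$ satisfies $\int_{\mathbb{D}} fg\,\mathrm{d}\mathbb{F}\leq h(x)y$ for every $g\in\mathbf{G}(y)$, then specializing to $g=y\lambda/\rho(\lambda)$ with $\lambda\in\mathbf{S}$ yields $\langle f,\lambda\rangle\leq h(x)\rho(\lambda)=J(x,\lambda)$ for all $\lambda\in\mathbf{S}$; but were $f\notin\mathbf{B}(x)-\mathbf{D}$, closedness, convexity and solidity of the set would produce, via the separation theorem, some $\lambda\in\mathbf{S}$ with $\langle f,\lambda\rangle>J(x,\lambda)$, a contradiction. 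The symmetric argument delivers the converse of the second equivalence, completing the proof.
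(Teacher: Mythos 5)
Your treatment of the first equivalence is essentially the paper's own argument: the ``membership implies inequality'' direction is the displayed chain preceding the proposition, and for the converse the paper likewise separates $\{f\}+\mathbf{L}^0_+(\mathbb{D},\mathcal{B}(\mathbb{D}),\mathbb{F})$ from the convex, measure-closed set $\mathbf{B}(x)-\mathbf{D}$ (these properties are already established in the proof of Theorem \ref{TT1}, so they need not be re-derived here), obtains $\lambda\in\mathbf{S}$ with $\int_{\mathbb{D}}f\lambda\,\mathrm{d}\mathbb{F}>h(x)\rho(\lambda)$, and tests against $y\lambda/\rho(\lambda)\in\mathbf{G}(y)$ to reach a contradiction. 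That half of your proposal is correct and coincides with the paper's.

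The second equivalence is where you diverge, and where there is a genuine gap. Your ``symmetric'' bipolar argument requires $\mathbf{G}(y)$ to be convex, solid and closed in the topology of convergence in measure, and you justify closedness only by ``a parallel Koml\'{o}s argument, together with $1$-homogeneity and convexity of $\rho$.'' Homogeneity and convexity are not enough: after passing to Ces\`{a}ro averages $\tilde{\lambda}_n\rightarrow\tilde{\lambda}$ a.s.\ of the dominating weights $\lambda_n/\rho(\lambda_n)$, you would still need $\rho(\tilde{\lambda})\leq\liminf_{n}\rho(\tilde{\lambda}_n)$, i.e.\ lower semicontinuity of $\rho$ under convergence in measure --- a genuinely additional property that you neither state nor prove (it does hold, because $\rho(\lambda)=h(x)^{-1}\sup_{X}\int_{\mathbb{D}}\mathbf{b}(\mu,X)\lambda(\mu)\,\mathrm{d}\mathbb{F}$ is a supremum of functionals that are l.s.c.\ by Fatou, but the argument must be supplied). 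Moreover, in the paper's own logical order the closedness of $\mathbf{G}(y)$ (Lemma \ref{l1}) is \emph{deduced from} the bipolar relation that this very proposition produces, so your route risks circularity unless that independent closedness proof is given. The paper sidesteps all of this with a short direct computation: given $g$ with $\int_{\mathbb{D}}fg\,\mathrm{d}\mathbb{F}\leq h(x)y$ for all $f\in\mathbf{B}(x)-\mathbf{D}$, set $g^{*}=g/\!<g,\mathbb{I}_{\mathbb{D}}>$, test the hypothesis against the single element $\mathbf{b}(\cdot,\hat{X}^{g^{*}})\in\mathbf{B}(x)-\mathbf{D}$, use the defining identity $h(x)\rho(g^{*})=\int_{\mathbb{D}}\mathbf{b}(\mu,\hat{X}^{g^{*}})g^{*}(\mu)\,\mathrm{d}\mathbb{F}$ to conclude $\rho(g^{*})\leq y/\!<g,\mathbb{I}_{\mathbb{D}}>$, and hence $g\leq yg^{*}/\rho(g^{*})$, which is membership in $\mathbf{G}(y)$ by Definition \ref{DD2}. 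You should replace the second half of your argument with this direct verification.
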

\begin{proof}
	See Appendix \ref{proof of set}.
\end{proof}
{{The relation (\ref{main}) is close to the bipolar relation and the difference is the dependence of $\mathbf{B}(x)-\mathbf{D}$ on $x$. In order to derive the standard bipolar relation, denote $\mathbf{G}=\mathbf{G}(1)$ and its polar by
\begin{equation}\nonumber
\mathbf{B}=\left\{f\in{\mathbf{L}^0_+(\mathbb{D},\mathcal{B}(\mathbb{D}),\mathbb{F})}:\int_\mathbb{D}fg~\mathrm{d}\mathbb{F}\leq 1~\text{for all}~g\in\mathbf{G}\right\}.
\end{equation}
Based on Proposition \ref{set}, we have
\begin{equation}\label{setdistort}
\left\{
\begin{split}
&\mathbf{B}(x)-\mathbf{D}=h(x)\mathbf{B},~\forall x>0,\\
&\mathbf{G}(y)=y\mathbf{G},~\forall y>0.
\end{split}
\right.
\end{equation}
Eq.~(\ref{setdistort}) shows that $\mathbf{B}(x)$ no longer  depends linearly on $x$ while depends on the distortion function $h(x)$.
Especially, %if $1$ is in the range of $h$,
$\mathbf{B}=\mathbf{B}(h^{-1}(1))-\mathbf{D}$, where $h^{-1}$ is the right inverse of $h$.

Recall $h(1)=U(e^{rT})$, we know $\mathbf{B}=\frac{1}{U(e^{rT})}\mathbf{B}(1)-\mathbf{D}$, and the bipolar relation is directly obtained and shown %given
in the following theorem.
\begin{theorem}[\bf{bipolar relation}]\label{bipolarThm}
\begin{equation}\label{bipolar}
\begin{split}
&f\in\frac{1}{U(e^{rT})}\mathbf{B}(1)-\mathbf{D}~\text{iff}~\int_{\mathbb{D}}fg~\mathrm{d}\mathbb{F}\leq 1 ~\text{ for all }~ g\in \mathbf{G},\\
&g\in\mathbf{G}~\text{iff}~\int_{\mathbb{D}}fg~\mathrm{d}\mathbb{F}\leq 1~\text{ for all }~ f\in\frac{1}{U(e^{rT})}\mathbf{B}(1)-\mathbf{D}.
\end{split}
\end{equation}
\end{theorem}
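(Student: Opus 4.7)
The plan is to derive Theorem \ref{bipolarThm} as a direct corollary of Proposition \ref{set} combined with the scaling identities (\ref{setdistort}) and the normalization $h(1)=U(e^{rT})$ enforced right after Assumption \ref{Ass}. The conceptual content (the polar relation between efficient claims and weight-type densities) has already been packaged into Proposition \ref{set}; what remains is a routine rescaling argument, so I do not anticipate a real obstacle, only a bookkeeping step to make sure the set operations interact correctly with the positive cone $\mathbf{D}$.

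First I would specialize Proposition \ref{set} to $x=y=1$. This immediately gives: $f\in \mathbf{B}(1)-\mathbf{D}$ iff $\int_{\mathbb{D}} fg\,\mathrm{d}\mathbb{F}\leq h(1)=U(e^{rT})$ for every $g\in \mathbf{G}(1)=\mathbf{G}$, and the symmetric statement with the roles of $f$ and $g$ interchanged. I would then observe that, because $\mathbf{D}=\mathbf{L}^0_+(\mathbb{D},\mathcal{B}(\mathbb{D}),\mathbb{F})$ is a convex cone invariant under multiplication by any positive scalar, one has the identity
\begin{equation*}
\frac{1}{U(e^{rT})}\bigl(\mathbf{B}(1)-\mathbf{D}\bigr)=\frac{1}{U(e^{rT})}\mathbf{B}(1)-\mathbf{D},
\end{equation*}
so that $f\in \tfrac{1}{U(e^{rT})}\mathbf{B}(1)-\mathbf{D}$ is equivalent to $U(e^{rT})f\in \mathbf{B}(1)-\mathbf{D}$.

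Second, substituting $U(e^{rT})f$ for $f$ in the specialization of Proposition \ref{set} obtained above, the inequality $\int_{\mathbb{D}} U(e^{rT})fg\,\mathrm{d}\mathbb{F}\leq U(e^{rT})$ divides through by $U(e^{rT})>0$ to yield $\int_{\mathbb{D}} fg\,\mathrm{d}\mathbb{F}\leq 1$. This produces exactly the first equivalence of (\ref{bipolar}). The second equivalence, characterizing elements of $\mathbf{G}$ by testing against $\tfrac{1}{U(e^{rT})}\mathbf{B}(1)-\mathbf{D}$, follows by the same substitution applied to the symmetric half of Proposition \ref{set}, replacing $f'\in\mathbf{B}(1)-\mathbf{D}$ by $f=f'/U(e^{rT})$ and again dividing both sides of the test inequality by $U(e^{rT})$.

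The only subtlety worth flagging is the cone property used to move $\mathbf{D}$ through the scalar multiplication; once that is noted, the proof is a two-line rescaling argument and the theorem follows at once. There is no need to invoke a separate bipolar theorem of functional-analytic type here, because Proposition \ref{set} has already supplied the polarity in both directions (via the efficient-frontier characterization and Assumption \ref{Ass}), so Theorem \ref{bipolarThm} merely rewrites that content in normalized form.
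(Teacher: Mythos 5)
Your proposal is correct and matches the paper's own derivation: the paper likewise obtains Theorem \ref{bipolarThm} directly from Proposition \ref{set} via the scaling identities $\mathbf{B}(x)-\mathbf{D}=h(x)\mathbf{B}$, $\mathbf{G}(y)=y\mathbf{G}$ and the normalization $h(1)=U(e^{rT})$, with no separate functional-analytic bipolar argument. Your explicit remark that the cone property of $\mathbf{D}$ lets the scalar pass through the set difference is the same bookkeeping the paper performs implicitly.
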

}}
\vskip 10pt

As $\mathbf{B}(x)-\mathbf{D}$ %is dominated by $\mathbf{B}(x)$ and
 has the relation with $\mathbf{G}(y)$ in Proposition \ref{set}, we consider %transform Problem~(\ref{p2}) equivalently to
the following problem on $\mathbf{B}(x)-\mathbf{D}$, which is equivalent to Problem~(\ref{p2}).
\begin{equation}\label{p3}
u(x)=\max\limits_{f \in \mathbf{B}(x)-\mathbf{D}}\left\{ \int_{\mathbb{D}}{{\phi(f)\mathrm{d}\mathbb{F}}}%\!\circ\! f(\mu)\mathrm{d}F(\mu)
\right\}.%=\int_{\mathbb{D}}\phi\!\circ\! \hat{f}(\mu;x)\mathrm{d}F(\mu).
\end{equation}
Denote by $\hat{f}(x)$ the solution to Problem~(\ref{p3}). %Based on Proposition \ref{set} and methods in %{\color{blue}{Kramkov and Schachermayer}}
%\cite{25}, we %have the fact that $\mathbf{B}(x)-\mathbf{D}$ is closed and
%show the existence and uniqueness of the solution of Problem~(\ref{p3}) in the following lemma.% The proof is similar as  and we omit it here.
%\vskip 10pt
%\begin{lemma}\label{uni}
%For  $x>0$ satisfying $u(x)<\infty$, the optimal solution $\hat{f}(\mu;x)$ of Problem~(\ref{p3}) exists and is unique. $u(x)$ is strictly concave on $\{x:u(x)<\infty\}$. Moreover, for $x>0$, the family $\{\phi^+\!\circ\! f\}_{f\in\mathbf{B}(x)-\mathbf{D}}$ is uniformly integrable.
%\end{lemma}
\vskip 4pt
Using the duality relation between  $\mathbf{B}(x)-\mathbf{D}$ and $\mathbf{G}(y)$ in Proposition \ref{set}, we study the dual problem of Problem~(\ref{p3}): %as follows
\begin{equation}\label{D2}
v(y)=\mathop{ \min } \limits_{g\in G(y)}\left\{\int_{\mathbb{D}}{{\psi(g)\mathrm{d}\mathbb{F}}}%\!\circ\! g(\mu)\mathrm{d}F(\mu)
\right\}.%=\int_{\mathbb{D}}\psi\!\circ\! \hat{g}(\mu;y)\mathrm{d}F(\mu), %\text{where}~ \psi(y)=\mathop{\sup}\limits_{x>0}[\phi(x)-xy].
\end{equation}
%We also derive the distorted conjugation between the two value functions:
%\begin{equation}\nonumber%\label{dist}
%\begin{split}
%u(x)=\mathop {\inf}\limits_{y>0}[v(y)+h(x)y],\\
%v(y)=\mathop {\sup}\limits_{x>0}[u(x)-h(x)y].
%\end{split}
%\end{equation}
%The two equations  are equivalent if and only if $h(x)$ is invertible.
%Then we apply the relationship in Eq.~(\ref{main})
where $\psi$ is the Legendre transformation of $\phi$, i.e., $\psi(y)=\mathop{\sup}\limits_{x>0}[\phi(x)-xy]$, $y>0$. %{If $U$ takes value in $(-\infty,0]$, $\psi(y)=\mathop{\sup}\limits_{x<0}[\phi(x)-xy]$, $y>0$.}
Denote by $\hat{g}(y)$ the solution to Problem~(\ref{D2}).

Next, we study the duality relation between the value functions and solutions of the two dual problems (\ref{p3}) and (\ref{D2}) to derive the solution to Problem~(\ref{p0}).
Based on Proposition \ref{set} and dual method, we investigate the distorted conjugation between the two value functions $u(\cdot)$ and $v(\cdot)$ and obtain the following distorted duality theorem.
\vskip 10pt
{{
\begin{theorem}[\bf{distorted duality theorem}]\label{duality}
Assume that the ambiguity attitude $\phi$ is strictly concave and satisfies the Inada condition ($\phi'(0)=\infty$, $\phi'(\infty)=0$) and that $u(x)<\infty$ for some $x>0$. Suppose also that Assumption \ref{Ass} holds  and $h(x)$ is invertible. Then

% the distorted conjugation between the two value functions $u(\cdot)$ and $v(\cdot)$ is
(i) $u(x)<\infty$, for all $x > 0$, and there exists $y_0 > 0$ such that $v(y)$ is finitely
valued for $y>y_0$. The value functions $u$ and $v$ are distorted conjugate, i.e.,
\begin{equation}\label{dual}
\begin{cases}
u(x)=\mathop {\inf}\limits_{y>0}[v(y)+h(x)y],\\
v(y)=\mathop {\sup}\limits_{x>0}[u(x)-h(x)y].
\end{cases}
\end{equation}

The function $u$ is continuously differentiable on $(0,\infty)$ and the function $v$ is strictly convex on $\{v<\infty\}$.
The functions $u'$ and $-v'$ satisfy
\begin{equation}\nonumber
u'(0)=\lim\limits_{x\rightarrow 0}u'(x)=\infty,~\text{   }~v'(\infty)=\lim\limits_{y\rightarrow\infty}v'(y)=0.
\end{equation}

(ii) If $v(y)<\infty$, then the solution $\hat{g}(y)\in\mathbf{G}(y)$ to Problem~(\ref{D2}) exists and is unique.
\end{theorem}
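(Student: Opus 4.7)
The plan is to deduce the distorted duality theorem from the classical abstract Kramkov--Schachermayer duality (Theorem 3.1 in \cite{25}) via the change of variable $a=h(x)$. The key observation is that by Eq.~(\ref{setdistort}) the primal feasible set is $\mathbf{B}(x)-\mathbf{D}=h(x)\mathbf{B}$ while the dual feasible set is $\mathbf{G}(y)=y\mathbf{G}$, and Theorem \ref{bipolarThm} says that $(\mathbf{B},\mathbf{G})$ is a bipolar pair in the ambient space $\mathbf{L}^0_+(\mathbb{D},\mathcal{B}(\mathbb{D}),\mathbb{F})$. Introducing the auxiliary value function
\begin{equation*}
\tilde{u}(a)\;=\;\sup_{f\in a\mathbf{B}}\int_{\mathbb{D}}\phi(f)\,\md\mathbb{F},\qquad a>0,
\end{equation*}
I have $u(x)=\tilde{u}(h(x))$, while $v(y)=\inf_{g\in y\mathbf{G}}\int_{\mathbb{D}}\psi(g)\,\md\mathbb{F}$ is already in the scaled form treated by K--S. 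Thus the pair $(\tilde{u},v)$ is exactly the abstract primal--dual pair of \cite{25}, only with $\mathbf{L}^0_+$ taken over the second-order measure space $(\mathbb{D},\mathcal{B}(\mathbb{D}),\mathbb{F})$ rather than over the physical probability space.

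First, I would record the weak-duality inequality: for $f\in h(x)\mathbf{B}$ and $g\in y\mathbf{G}$, the pointwise Fenchel bound $\phi(f)\le \psi(g)+fg$ together with $\int_{\mathbb{D}}fg\,\md\mathbb{F}\le h(x)y$ from Proposition \ref{set} gives $u(x)\le v(y)+h(x)y$. Next I would verify the hypotheses of the abstract K--S framework for $(\mathbf{B},\mathbf{G})$: convexity and solidity are immediate from the definitions; closedness in the topology of convergence in measure follows from Theorem \ref{bipolarThm} since each set is the polar of the other; and the finiteness assumption $u(x_0)<\infty$ is inherited by $\tilde{u}$ at $a_0=h(x_0)$. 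Applying the abstract K--S result then yields, for the $\tilde{u}$--$v$ pair, finiteness of $\tilde{u}$ on $(0,\infty)$, the existence of $y_0>0$ with $v$ finite on $(y_0,\infty)$, the conjugacy $\tilde{u}(a)=\inf_{y>0}[v(y)+ay]$ and $v(y)=\sup_{a>0}[\tilde{u}(a)-ay]$, continuous differentiability of $\tilde{u}$ and strict convexity of $v$ on $\{v<\infty\}$, the boundary values $\tilde{u}'(0^+)=\infty$ and $v'(\infty)=0$, and for part~(ii) the existence and uniqueness of the minimizer $\hat{g}(y)\in\mathbf{G}(y)$.

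Finally I would pull everything back through $h$. Since Assumption~\ref{Ass} makes $h$ continuously differentiable, strictly increasing, strictly concave and satisfying the Inada conditions, the map $h:(0,\infty)\to(0,\infty)$ is a $C^1$ bijection. Substituting $a=h(x)$ in the conjugacy relations for $\tilde{u}$ and $v$ produces the distorted conjugacy~(\ref{dual}); the chain rule gives continuous differentiability of $u=\tilde{u}\circ h$; and $u'(0^+)=\lim_{x\to 0}\tilde{u}'(h(x))\,h'(x)=\infty$ follows from $\tilde{u}'(0^+)=h'(0^+)=\infty$. Strict convexity of $v$, finiteness of $v$ on $(y_0,\infty)$, and the relation $v'(\infty)=0$ are preserved because they concern $v$ alone. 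Existence and uniqueness of $\hat{g}(y)$ in part~(ii) are inherited directly from the K--S conclusion.

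The step I expect to be the main obstacle is justifying that the pair $(\mathbf{B},\mathbf{G})$ satisfies all the structural hypotheses of the abstract K--S theorem in the present ambient space $\mathbf{L}^0_+(\mathbb{D},\mathcal{B}(\mathbb{D}),\mathbb{F})$: in particular, closedness in the topology of convergence in measure and the fact that $\mathbf{B}$ contains a strictly positive element (so that the dual $v$ is not identically $+\infty$ on a half-line). The bipolar statement of Theorem \ref{bipolarThm} provides the essential ingredient, but one must still check these properties in the second-order-space setting rather than simply invoking them by analogy with the physical-probability setting of \cite{25}.
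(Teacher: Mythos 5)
Your proposal is correct and follows essentially the same route as the paper: the paper likewise observes that Theorem \ref{duality} is Kramkov--Schachermayer's Theorem 3.1 with $x$ replaced by $h(x)$, and uses the scaling $\mathbf{B}(x)-\mathbf{D}=h(x)\mathbf{B}$, $\mathbf{G}(y)=y\mathbf{G}$ together with the bipolar relation of Theorem \ref{bipolarThm} on the second-order space. The only difference is presentational: where you invoke the abstract K--S theorem as a black box after checking its hypotheses (closedness via the bipolar relation, $1\in\mathbf{B}$, $\mathbf{L}^1$-boundedness via the constant element $1/\rho(\mathbb{I}_{\mathbb{D}})\in\mathbf{G}$), the paper re-derives the needed lemmas explicitly (convexity/closedness of $\mathbf{G}(y)$, uniform integrability of $\psi^-$, and the minimax argument over $\mathscr{B}_n=\{f:0\le f\le n\}$) in the space $(\mathbb{D},\mathcal{B}(\mathbb{D}),\mathbb{F})$.
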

\begin{proof}
	See Appendix \ref{proof of duality}.
\end{proof}

%{If $U$ takes value in $(-\infty,0]$, }

Theorem \ref{duality} shows that the value functions $u$ and $v$ are distorted conjugate and the solution to Problem~(\ref{D2}) for $y\in\{v<\infty\}$ exists and is unique. Let $z=h(x)$, and we have
\begin{equation}\label{dualnormal}
\begin{cases}
u\left(h^{-1}(z)\right)=\mathop {\inf}\limits_{y>0}[v(y)+zy],\\
v(y)=\mathop {\sup}\limits_{h^{-1}(z)>0}\left[u\left(h^{-1}(z)\right)-{z}y\right],
\end{cases}
\end{equation}
which shows that $u\!\circ\!h^{-1}$ and $v$ are conjugate. %{If $U$ takes value in $(-\infty,0]$, the results in Theorem \ref{duality} still hold, and the differences are that the Inada condition becomes $\phi'(-\infty)=+\infty$, $\phi'(0)=0$, and that $z>0$ becomes $z<0$.}

In order to derive the solution to Problem~(\ref{p3}), we assume also that $AE(\phi)<1$, which makes $v$ well-defined and continuously differentiable on $(0,\infty)$. Then the existence and uniqueness of the solution to Problem~(\ref{p3}) and the dual relation of the solutions to Problems (\ref{p3}) and (\ref{D2}) are given in the following theorem.

\begin{theorem}\label{solution}
In addition to the assumptions in Theorem \ref{duality}, suppose that $AE(\phi)\!\triangleq\limsup\limits_{x\rightarrow+\infty}\frac{x\phi'(x)}{\phi(x)}<1$ (see \cite{25}).
Then in addition to the assertions of Theorem \ref{duality}, we have:

(i) $v(y)<\infty$, for all $y > 0$. The value functions $u$ and $v$ are continuously differentiable on $(0,\infty)$ and the functions $u'$ and $-v'$ are strictly decreasing and satisfy
\begin{equation}\nonumber
u'(\infty)=\lim\limits_{x\rightarrow \infty}u'(x)=0,~\text{   }~v'(0)=\lim\limits_{y\rightarrow0}v'(y)=\infty.
\end{equation}

(ii) The optimal solution $\hat{f}(x)\in\mathbf{B}(x)-\mathbf{D}$ to Problem~(\ref{p3}) exists and is unique. If $\hat{g}(y)\in\mathbf{G}(y)$ is the optimal solution to Problem~(\ref{D2}), where $y=\frac{u'\left(x\right)}{h'(x)}$, we have the dual relation
\begin{equation}\label{dualsol}
\hat{f}(x)=\left(\phi'\right)^{-1}\!\left(\hat{g}(y)\right),~\text{   }~\hat{g}(y)=\phi'\left(\hat{f}(x)\right).
\end{equation}
Moreover,
\begin{equation}
\int_{\mathbb{D}}\hat{f}(x)\hat{g}(y)\mathrm{d}\mathbb{F}=h(x)y.
\end{equation}

(iii) We have the following relations between $u'$, $\hat{f}(x)$ and $v'$, $\hat{g}(y)$, respectively,
\begin{equation}\nonumber
u'(x)=\int_{\mathbb{D}}\frac{h'(x)}{h(x)}\hat{f}\phi'(\hat{f})\mathrm{d}\mathbb{F},~\text{   }~v'(y)=\int_{\mathbb{D}}\frac{\hat{g}\psi'(\hat{g})}{y}\mathrm{d}\mathbb{F}.
\end{equation}
\end{theorem}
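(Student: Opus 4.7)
The strategy is to transfer the Kramkov–Schachermayer program of \cite{25} to the distorted setting established in Theorems \ref{bipolarThm} and \ref{duality}, using the asymptotic elasticity hypothesis $AE(\phi)<1$ exactly where their argument uses $AE(U)<1$. Because the bipolar relation already gives $\mathbf{B}(x)-\mathbf{D}=h(x)\mathbf{B}$ and $\mathbf{G}(y)=y\mathbf{G}$ as two convex, solid, $\mathbb{F}$-closed polar sets in $\mathbf{L}^0_+(\mathbb{D},\mathcal{B}(\mathbb{D}),\mathbb{F})$, the remaining task is essentially the classical scalar one, with the distortion $h$ replacing the linear map $x\mapsto x$ on the primal side.

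For part (i), I would first upgrade Theorem \ref{duality} by showing $v(y)<\infty$ for every $y>0$. The standard input is the polynomial growth bound that $AE(\phi)<1$ forces on $\psi$: there exist $C>0$ and $\gamma\in(0,1)$ with $\psi(\lambda y)\leq C\lambda^{-\gamma/(1-\gamma)}(\psi(y)+1)$ for $\lambda\in(0,1]$ and all $y>0$. Combining this estimate with the scaling $\mathbf{G}(y)=y\mathbf{G}$, finiteness at a single point $y_0$ (guaranteed by Theorem \ref{duality}) propagates to the whole half-line. The limits $u'(\infty)=0$ and $v'(0)=\infty$ then follow from strict concavity/convexity together with the distorted conjugation in (\ref{dual}): at a point where the infimum in $u(x)=\inf_{y>0}[v(y)+h(x)y]$ is attained by $\hat y(x)$, one gets $u'(x)=h'(x)\hat y(x)$, and letting $x\to\infty$ forces $\hat y(x)\to 0$ (otherwise the first-order condition $v'(\hat y(x))+h(x)=0$ fails eventually), yielding $u'(\infty)=0$; symmetrically for $v'$.

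For part (ii), existence is the technical heart. I would take a maximizing sequence $\{f_n\}\subset \mathbf{B}(x)-\mathbf{D}$ and invoke a Komlós-type lemma in $\mathbf{L}^0_+$ to pass to forward convex combinations $\tilde f_n\in \mathrm{conv}(f_n,f_{n+1},\dots)$ converging $\mathbb{F}$-a.s.\ to some $\hat f\geq 0$. The bipolar relation makes $\mathbf{B}(x)-\mathbf{D}$ closed under $\mathbb{F}$-a.s.\ convergence, so $\hat f\in \mathbf{B}(x)-\mathbf{D}$. To conclude $\int\phi(\hat f)\,d\mathbb{F}\geq u(x)$, I would use the $AE(\phi)<1$ estimate to obtain a uniform integrable majorant for $\phi(\tilde f_n)^+$ and apply Fatou's lemma, with concavity of $\phi$ ensuring that convex combinations do not decrease the objective; strict concavity yields uniqueness. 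The dual characterization then comes from the Fenchel inequality $\phi(f)\leq \psi(g)+fg$: integrating and using the duality pairing from Proposition \ref{set} gives $u(x)\leq v(y)+h(x)y$ for every admissible pair, and the Lagrangian choice $y=u'(x)/h'(x)$ (well-defined by (i)) saturates (\ref{dual}), forcing pointwise equality $\hat g(y)=\phi'(\hat f(x))$ $\mathbb{F}$-a.s.\ and the complementary-slackness identity $\int \hat f\hat g\,d\mathbb{F}=h(x)y$.

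Part (iii) is an envelope-style differentiation. Using $\phi'(\hat f)=\hat g$ and the binding constraint $\int \hat f\hat g\,d\mathbb{F}=h(x)y$ together with $y=u'(x)/h'(x)$ yields
\[
u'(x)=h'(x)y=\frac{h'(x)}{h(x)}\int_{\mathbb{D}}\hat f\,\phi'(\hat f)\,d\mathbb{F},
\]
and the analogous computation for $v'(y)$ uses $\hat f=-\psi'(\hat g)$ together with the same budget identity. The main obstacle I anticipate is the existence step in (ii): verifying that $\mathbf{B}(x)-\mathbf{D}$ is genuinely closed under $\mathbb{F}$-a.s.\ convergence (this must be deduced cleanly from Theorem \ref{bipolarThm} via Fatou on the bilinear pairing) and that $AE(\phi)<1$ supplies the uniform control on the positive parts $\phi(\tilde f_n)^+$ needed to exchange limit and integral. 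Once these two ingredients are in place, parts (i) and (iii) are short transcriptions of the convex-duality calculus in \cite{25} with $h$ playing the role of the identity on the budget side.
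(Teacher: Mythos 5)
Your proposal reaches all three assertions, but for the existence claim in (ii) it takes a genuinely different route from the paper. The paper never compactifies a primal maximizing sequence: it already has the dual optimizer $\hat g(y)$ from Theorem \ref{duality}(ii), shows (Lemmas \ref{l7}--\ref{l8}) that $y\mapsto\hat g(y)\psi'(\hat g(y))$ is $\mathbf{L}^1$-continuous --- this is the only place $AE(\phi)<1$ enters, via the bound $-\psi'(y)<C\psi(y)/y$ near $0$ --- and hence that $v$ is continuously differentiable with $-yv'(y)=\int_{\mathbb{D}}\hat g\,(\phi')^{-1}(\hat g)\,\mathrm{d}\mathbb{F}$; it then \emph{defines} $\hat f(x)\triangleq(\phi')^{-1}(\hat g(y))$, verifies budget feasibility through the bipolar relation (the perturbation $g_\delta=(1-\delta)\hat g+\delta g$ in Lemma \ref{l9}), and gets optimality from the Fenchel inequality. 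Your route obtains $\hat f$ as an a.s.\ limit of forward convex combinations of a maximizing sequence, which makes feasibility trivial (closedness of $\mathbf{B}(x)-\mathbf{D}$ under a.s.\ convergence does follow from Theorem \ref{bipolarThm} and Fatou applied to the pairing) but shifts all the difficulty onto the reverse Fatou step, and there your justification is slightly misplaced: $AE(\phi)<1$ does not by itself furnish an integrable majorant for $\phi^{+}(\tilde f_n)$. The correct mechanism is to first get $v(y)<\infty$ for all $y>0$ (your part (i)) and then use $\int_A\phi^{+}(f)\,\mathrm{d}\mathbb{F}\le\int_A\psi^{+}(\hat g(y))\,\mathrm{d}\mathbb{F}+h(x)y$ uniformly over $f\in\mathbf{B}(x)-\mathbf{D}$, sending $y\downarrow0$ and then $\mathbb{F}(A)\downarrow0$ --- the Kramkov--Schachermayer (2003) uniform-integrability argument rather than the 1999 one. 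Once that is supplied, your saturation chain $u(x)=\int\phi(\hat f)\le v(y)+\int\hat f\hat g\le v(y)+h(x)y=u(x)$ correctly forces both identities of (ii), and your part (iii) is the same bookkeeping as the paper's. Two smaller remarks: in part (i) your inference ``$\hat y(x)\to0$, otherwise the first-order condition fails'' tacitly assumes $h(\infty)=\infty$, which the standing hypotheses on $h$ (concave, $h'(\infty)=0$) do not guarantee; the conclusion $u'(\infty)=0$ still holds (the paper derives it from $\phi(x)\le C(\epsilon)+\epsilon x$ and the $\mathbf{L}^1$-bound $\int f\,\mathrm{d}\mathbb{F}\le h(x)\rho(\mathbb{I}_{\mathbb{D}})$), so this is a fixable slip rather than a gap. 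What each approach buys: yours delivers existence of $\hat f$ without first proving differentiability of $v$, whereas the paper's delivers the explicit dual representation of $\hat f$ for free and concentrates all the $AE(\phi)<1$ work in a single $\mathbf{L}^1$-continuity lemma.
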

\begin{proof}
	See Appendix \ref{proof of solution}.
\end{proof}
%{Similarly, if $U$ takes value in $(-\infty,0]$, the results in Theorem \ref{solution} still hold.}

In Theorem \ref{solution}, Eq.~(\ref{dualsol}) is similar to the results in the classical EUT problem, while the only difference is that $y=\frac{u'(x)}{h'(x)}$ instead of $y=u'(x)$ because the value functions $u$ and $v$ become distorted conjugate. Moreover, as $\hat{g}(y)\in\mathbf{G}(y)$, there exists $\lambda\in\mathbf{S}$ such that $\hat{g}(y)\leq \frac{\lambda}{\rho(\lambda)}$. Then
\begin{equation}\nonumber
%\begin{split}
\int_{\mathbb{D}}\frac{\hat{g}(y)}{y}\mathrm{d}\mathbb{F}\leq \int_{\mathbb{D}}\frac{y\lambda}{\rho(\lambda)}\mathrm{d}\mathbb{F}%\\
=\frac{1}{\rho(\lambda)}
\leq 1,
%\end{split}
\end{equation}
and the equality holds if and only if $\hat{g}(y)=\frac{\lambda}{\rho(\lambda)}$ and $\hat{X}^\lambda=S^0$, which only holds for some specific cases. In most cases, the equality does not hold, and we have
\begin{equation}\nonumber
\int_{\mathbb{D}}\frac{\hat{g}(y)}{y}\mathrm{d}\mathbb{F}< 1,
\end{equation}
which is similar to the result $\mathbf{E}[\hat{Y}_T]<y$ of the classical EUT problem in an incomplete market.
%\begin{equation}\nonumber.
%\end{equation}
}}

%\begin{theorem}\label{TT4}
%If Assumption \ref{Ass} holds and $h(x)$ is invertible,  the distorted conjugation between the two value functions $u(\cdot)$ and $v(\cdot)$ is
%\begin{equation}\label{dual}
%\begin{cases}
%u(x)=\mathop {\inf}\limits_{y>0}[v(y)+h(x)y],\\
%v(y)=\mathop {\sup}\limits_{x>0}[u(x)-h(x)y].
%\end{cases}
%\end{equation}
%\end{theorem}

\vskip 4pt
%Theorem \ref{TT4} states that the two value functions of the two dual problems (\ref{p3}) and (\ref{D2}) are distorted conjugate. And the difference between the distorted conjugation and Legendre conjugation is that $xy$ becomes  $h(x)y$. There is a distortion in the initial value $x$, which is because that the utility function $U$ of terminal wealth is non-linear. %Next we give some properties of the two value functions.
\vskip 3pt
%To obtain the solution to Problem~(\ref{p3}), we present Lemmas \ref{l6}-\ref{p9} as follows. Lemma \ref{l6} ensures that $u(x)$ is well-defined on $(0,v'(0))$ and $v(y)$ is well-defined on $(-u'(+\infty),0)$. The proof of Lemma \ref{l6} is similar to %{\color{blue}{Kramkov and Schachermayer}}
%\cite{25} and we omit it here.
\vskip 4pt

\vskip 4pt
In this subsection, we have transformed the original problem~(\ref{p0}) to an equivalent one on the efficient frontier. %Then we show the dual relationship for two problems (\ref{p3}) and (\ref{D2}). Different from %{\color{blue}{Kramkov and Schachermayer}}
%\cite{25}, the budget constraint in Problem~(\ref{p3}) is quite different.
{{
Then we introduce Assumption \ref{Ass} and derive  the bipolar relation, the distorted duality theorem and the dual relation of the solutions to the equivalent problem~(\ref{p3}) and the dual problem~(\ref{D2}). %extend the dual method to solve the problem. Based on the above arguments, we show the uniqueness, existence and the relationship of the original and dual problems.
Finally,}}
we summarize the above results {{in the whole section}} and present the formulation of the solution to Problem~(\ref{p0}) as follows. %{which holds for both cases that $U$ takes value in $(-\infty,0]$ and $[0,+\infty)$}.
\vskip 10pt
\begin{theorem}\label{TT5}
Under the assumptions of {{Theorem \ref{solution}}}, the optimal terminal  wealth of Problem~(\ref{p0}) is
\begin{equation}{\label{XThati}}
{{\hat{X}^{\hat{\lambda}}_T}} = I\left(\hat{Y}^{\hat{\lambda}}_T\right){e^{rT}},
\end{equation}
where $\hat{\lambda}$ satisfies
%$||\hat{\lambda}(\mu)||=1$, and $\kappa(\hat{\lambda})$ satisfies
\begin{equation}{\label{equa1i}}
%\left\{
%\begin{split}
%&
\hat{\lambda}(\mu)=\frac{\phi'\left({{\mathbf{b}(\mu,\hat{X}^{\hat{\lambda}})}}\right)}{<\phi'\left({{\mathbf{b}(\cdot,\hat{X}^{\hat{\lambda}})}}\right),\mathbb{I}_{\mathbb{D}}>},\\
%&{{
%\int_{\mathbb{D}}\mathbf{b}(\cdot,\hat{X}^{\hat{\lambda}})\frac{\hat{\lambda}}{\rho(\hat{\lambda})}\mathrm{d}\mathbb{F}=h(x),
%}}
%\mathbf{E}^{Q^\mu}[U({{\hat{X}_T}})]=g(\hat{\lambda}),\\
%&\mathbf{E}^{Q}\left[I\left(\frac{\kappa(\hat{\lambda}){{\eta_T}}}{<\hat{\lambda}(\cdot),{{\eta^\cdot_T}}>}\right)\right]=x e^{rT}.
%\end{split}
%\right.
\end{equation}
{{
and $I(\cdot)=(U')^{-1}(\cdot)$, $\hat{Y}^{\hat{\lambda}}_T$ is the solution to Problem~(\ref{infY}).

}}
%where $\chi$ is a positive constant.
The optimal investment strategy of Problem~(\ref{p0}) is
{{
\begin{equation*}
\hat{\pi}^{\hat{\lambda}}_t=\frac{\hat{f}^{\hat{\lambda}}_t}{\hat{f}^{\hat{\lambda}}_t-1} \hat{X}^{\hat{\lambda}}_t,
\end{equation*}
}}
where $\hat{f}$ is determined by
{{
\begin{equation}\nonumber%\label{repli}
\mathbf{E}^{{}^{\hat{\lambda}} Q'}\left[\frac{e^{rT}}{\hat{X}^{\hat{\lambda}}_T}\mid\mathcal{F}^S_t\right]= \frac{1}{x} + \int_0^t \hat{f}^{\hat{\lambda}}_u\mathrm{d}\left(\frac{S^1_u e^{ru}}{\hat{X}^{\hat{\lambda}}_u}\right),
\end{equation}
and ${}^{\hat{\lambda}} Q'$ is defined in Eq.~(\ref{mart}).
}}
Moreover, the value function is
{{
\begin{equation}\nonumber
u(x)=\mathop {\max }\limits_{\pi\in\mathcal{V}[0,T]}\left\{\Phi({{X_T}})\right\} = \int_{\mathbb{D}}\phi\left(\mathbf{b}(\cdot,\hat{X}^{\hat{\lambda}})\right)\mathrm{d}\mathbb{F}.
\end{equation}
}}
%where $\mathbf{E}^{Q^\mu}[U({{\hat{X}_T}})]$ is given by Eq.~(\ref{equa1}).
\end{theorem}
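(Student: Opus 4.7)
The plan is to assemble Theorem \ref{TT5} by chaining together the pieces built up in Section 3: the reduction to the efficient frontier via Theorem \ref{TT1}, the characterization of each efficient claim through the classical EUT problem~(\ref{pr2}), and the dual identification of the optimal weight through the distorted duality Theorems \ref{duality} and \ref{solution} on the second-order space. By Theorem \ref{TT1} together with the discussion following Definition \ref{D1}, any maximizer of $\Phi$ must lie on the efficient frontier $\mathbf{B}(x)$ and hence corresponds to some $\hat{\lambda}\in\mathbf{S}$, so the task reduces to pinning down $\hat{\lambda}$ and then reading off the claim and the strategy from the $\lambda$-parametrized problem.

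The core step is the extraction of $\hat{\lambda}$ from the dual optimizer produced by Theorem \ref{solution}(ii). That theorem delivers the unique primal optimizer $\hat{f}(x)\in\mathbf{B}(x)-\mathbf{D}$ and the unique dual optimizer $\hat{g}(y)\in\mathbf{G}(y)$ with $y=u'(x)/h'(x)$, related through $\hat{g}(y)=\phi'(\hat{f}(x))$ and the saturation $\int_{\mathbb{D}}\hat{f}(x)\hat{g}(y)\,\mathrm{d}\mathbb{F}=h(x)y$. By Definition \ref{DD2} there exists $\hat{\lambda}\in\mathbf{S}$ with $\hat{g}(y)\le y\hat{\lambda}/\rho(\hat{\lambda})$; combining this domination with Eq.~(\ref{truebudget}) applied at $\hat{f}(x)=\mathbf{b}(\cdot,\hat{X}^{\hat{\lambda}})$ forces the equality $\hat{g}(y)=y\hat{\lambda}/\rho(\hat{\lambda})$, and the normalization $<\hat{\lambda},\mathbb{I}_{\mathbb{D}}>=1$ then fixes the proportionality constant:
\[
\hat{\lambda}(\mu)=\frac{\hat{g}(y)(\mu)}{<\hat{g}(y),\mathbb{I}_{\mathbb{D}}>}=\frac{\phi'\!\left(\mathbf{b}(\mu,\hat{X}^{\hat{\lambda}})\right)}{<\phi'(\mathbf{b}(\cdot,\hat{X}^{\hat{\lambda}})),\mathbb{I}_{\mathbb{D}}>},
\]
which is exactly the fixed-point equation~(\ref{equa1i}).

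Once $\hat{\lambda}$ is identified, both the claim and the strategy follow from constructions already carried out earlier in the section. For the claim, Problem~(\ref{pr2}) becomes a classical EUT problem under the weighted measure ${}^{\hat{\lambda}}Q$; applying Theorem 2.2 of \cite{25} yields the representation~(\ref{XThati}), with $\hat{Y}^{\hat{\lambda}}_T$ solving the dual Problem~(\ref{infY}). For the strategy, define ${}^{\hat{\lambda}}Q'$ via~(\ref{mart}); the martingale property of $1/\hat{X}^{\hat{\lambda}}$ and $S^1/\hat{X}^{\hat{\lambda}}$ under ${}^{\hat{\lambda}}Q'$ together with the martingale representation theorem produce $\hat{f}^{\hat{\lambda}}$ via~(\ref{XX2}), and comparing the diffusion parts with~(\ref{XX3}) delivers $\hat{\pi}^{\hat{\lambda}}_t=\hat{f}^{\hat{\lambda}}_t\hat{X}^{\hat{\lambda}}_t/(\hat{f}^{\hat{\lambda}}_t-1)$. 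The value function is obtained by direct substitution of $\hat{X}^{\hat{\lambda}}_T$ into $\Phi$.

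The main obstacle is the saturation step that upgrades the inequality $\hat{g}(y)\le y\hat{\lambda}/\rho(\hat{\lambda})$ to equality at the optimum. Without it, the extracted $\hat{\lambda}$ would merely dominate $\hat{g}(y)$ and the fixed-point equation~(\ref{equa1i}) would not close. The argument should combine the strict concavity of $\phi$ (hence strict convexity of $\psi$) with the homogeneity and convexity of $\rho$ on $\mathbf{L}^0_+(\mathbb{D},\mathcal{B}(\mathbb{D}),\mathbb{F})$ established in Section 3.2, so that any slack in the domination would strictly decrease the dual objective $\int_{\mathbb{D}}\psi(g)\,\mathrm{d}\mathbb{F}$, contradicting the optimality and uniqueness of $\hat{g}(y)$ granted by Theorem \ref{solution}(ii).
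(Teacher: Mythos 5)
Your proposal is correct and follows essentially the same route as the paper: identify the dual optimizer of Problem~(\ref{D2}) with $y\hat{\lambda}/\rho(\hat{\lambda})$ by a saturation argument, read off $\hat{\lambda}\propto\phi'(\hat{f})$ from the dual relation in Theorem \ref{solution}, verify $<\hat{f},\hat{\lambda}>=h(x)\rho(\hat{\lambda})=J(x,\hat{\lambda})$ so that Theorem \ref{TT1} identifies $\hat{f}$ with $\mathbf{b}(\cdot,\hat{X}^{\hat{\lambda}})$, and then invoke \cite{25} and the replication computation of Section 3.1. The only cosmetic difference is that the paper justifies the saturation step by the strict monotonicity of $\psi$ (a strictly dominated $\hat{g}$ could be enlarged to strictly lower the dual objective), whereas your primary argument uses the complementary-slackness identity $\int_{\mathbb{D}}\hat{f}\hat{g}\,\mathrm{d}\mathbb{F}=h(x)y$ together with the polar inequality; both work, though your closing remark should credit the monotonicity of $\psi$ rather than its strict convexity.
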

\begin{proof}
	See Appendix \ref{proof of TT5}.
\end{proof}
\vskip 4pt
{{
Different from the classical EUT problem, in the pre-commitment KMM model, the DM considers the nonlinear weighted average of the expected utilities, where there are two composite expectations of nonlinear functions of the variables. Theorem \ref{TT5} shows that the optimal wealth process $\hat{X}^{\hat{\lambda}}$ is the combination of the solutions to Problems (\ref{pr2}) and (\ref{p3}), which are determined by their dual problems (\ref{infY}) and (\ref{D2}), respectively.
Because the solution to Problem~(\ref{infY}) is fairly complicated in an incomplete market, we also present the results in Theorem \ref{TT5} in a complete market in the following corollary.

\begin{corollary}\label{complete}
Under the assumptions of Theorem \ref{TT5}, and assume also that the financial market is complete, i.e., $\mathcal{M}=\{Q\}$, then the optimal terminal wealth of Problem~(\ref{p3}) is
\begin{equation}{\label{XThat}}
\hat{X}_T^{\hat{\lambda}} = I\left(\kappa(\hat{\lambda})\frac{\mathrm{d}Q}{\mathrm{d}{}^{\hat{\lambda}}Q}\right)e^{rT},
\end{equation}
where $\hat{\lambda}$ and $\kappa(\hat{\lambda})$ satisfy
\begin{equation}{\label{equa1}}
\left\{
\begin{split}
&\hat{\lambda}(\mu)=\frac{\phi'\left({{\mathbf{b}(\mu,\hat{X}^{\hat{\lambda}})}}\right)}{<\phi'\left({{\mathbf{b}(\cdot,\hat{X}^{\hat{\lambda}})}}\right),\mathbb{I}_{\mathbb{D}}>},\\
&\mathbf{E}^{Q}\left[I\left(\kappa(\hat{\lambda})\frac{\mathrm{d}Q}{\mathrm{d}{}^{\hat{\lambda}}Q}\right)\right]=x.
\end{split}
\right.
\end{equation}
%where $\chi$ is a positive constant.
The optimal investment strategy of Problem~(\ref{p3}) is
\begin{equation*}
\hat{\pi}(t)=\hat{f}(t),
\end{equation*}
where $\hat{f}$ is determined by
\begin{equation}\nonumber%\label{repli}
\mathbf{E}^{Q}\left[e^{-rT}\hat{X}^{\hat{\lambda}}_T|\mathcal{F}^S_t\right]
=x + \int_0^t \hat{f}_u\frac{\mathrm{d}\left[e^{-rt}S^1_t\right]}{S^1_t}.
\end{equation}
Moreover, the value function is
\begin{equation}\nonumber
u(x)=\mathop {\max }\limits_{\pi\in\mathcal{V}[0,T]}\left\{\Phi(X_T)\right\} = \int_{\mathbb{D}}\phi\left(\mathbf{b}(\cdot,\hat{X}^{\hat{\lambda}})\right)\mathrm{d}\mathbb{F}.
\end{equation}
%where $\mathbf{E}^{Q^\mu}[U(\hat{X}(T))]$ is given by Eq.~(\ref{equa1}).
\end{corollary}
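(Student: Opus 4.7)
My plan is to specialize Theorem~\ref{TT5} to the complete market $\mathcal{M}=\{Q\}$. The key conceptual reduction is that completeness collapses the incomplete-market dual problem~\eqref{infY} to a one-parameter family: the terminal dual variable $\hat Y^{\hat\lambda}_T$ becomes a scalar multiple of $\mathrm{d}Q/\mathrm{d}{}^{\hat\lambda}Q$, with the scalar $\kappa(\hat\lambda)$ playing the role of a Lagrange multiplier. Once this identification is made, the formula~\eqref{XThat} is immediate from~\eqref{XThati}, the system~\eqref{equa1} separates into an outer second-order equation inherited from Theorem~\ref{TT5} and an inner complete-market budget constraint, and the strategy follows from a single predictable representation under $Q$.

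First I would analyse $\mathscr{Y}(y;\hat\lambda)$ in the complete case. By the definition just after~\eqref{infY}, $Y\in\mathscr{Y}(y;\hat\lambda)$ requires $X_t Y_t e^{-rt}$ to be a ${}^{\hat\lambda}Q$-supermartingale for every $X\in\mathfrak{X}(1)$. Writing $Y_t=y D_t$ and exploiting the fact that the discounted wealth $e^{-rt}X_t$ is a $Q$-martingale for every $X\in\mathfrak{X}(1)$, one checks that $D_t$ must coincide with the density process of a local martingale measure absolutely continuous with respect to ${}^{\hat\lambda}Q$. Assumption~\ref{local} together with the construction of ${}^{\hat\lambda}Q^*$ following~\eqref{pr2} guarantees the existence of such a measure, and completeness, $\mathcal{M}=\{Q\}$, then forces it to be $Q$ itself, so $\mathscr{Y}(y;\hat\lambda)$ collapses to the single candidate $\hat Y^{\hat\lambda}_T=\kappa(\hat\lambda)\,\mathrm{d}Q/\mathrm{d}{}^{\hat\lambda}Q$. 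Substituting into~\eqref{XThati} produces the first line of~\eqref{XThat}.

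Next, the two equations in~\eqref{equa1} arise from two independent conditions. The first equation is not specific to completeness: it is the second-order characterization $\hat g(y)=\phi'(\hat f(x))$ of Theorem~\ref{solution}(ii), combined with activeness at the optimum of the defining constraint $\hat g(y)\leq y\hat\lambda/\rho(\hat\lambda)$ in $\mathbf{G}(y)$, the normalization $\langle\hat\lambda,\mathbb{I}_{\mathbb{D}}\rangle=1$, and the identification $\hat f(x)(\mu)=\mathbf{b}(\mu,\hat X^{\hat\lambda})$. The second equation is the complete-market budget constraint: since $e^{-rt}X_t^{\hat\lambda}$ is a $Q$-martingale with $X_0=x$, applying $\mathbf{E}^Q[e^{-rT}\,\cdot\,]$ to the explicit form of $\hat X_T^{\hat\lambda}$ yields $\mathbf{E}^Q[I(\kappa(\hat\lambda)\,\mathrm{d}Q/\mathrm{d}{}^{\hat\lambda}Q)]=x$, pinning down $\kappa(\hat\lambda)$ for each fixed $\hat\lambda$.

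Finally, completeness lets me bypass the replication argument of~\eqref{XX2}--\eqref{XX3}: since $Q$ is the unique equivalent martingale measure, the $Q$-martingale $M_t=\mathbf{E}^Q[e^{-rT}\hat X^{\hat\lambda}_T\mid\mathcal{F}^S_t]$ admits, by predictable representation, a unique $\hat f$ satisfying $M_t=x+\int_0^t\hat f_u\,\mathrm{d}[e^{-ru}S^1_u]/S^1_u$, and matching this with the self-financing dynamics of $\hat X^{\hat\lambda}$ forces $\hat\pi_t=\hat f_t$; the value function formula then follows directly by plugging $\hat X^{\hat\lambda}$ into~\eqref{ut2}. The main obstacle I anticipate is the first step---verifying rigorously that completeness, together with Assumption~\ref{local}, genuinely reduces $\mathscr{Y}(y;\hat\lambda)$ to a single element up to scaling, and in particular that $Q\ll{}^{\hat\lambda}Q$ so that $Q$ itself is admissible in the definition of $\mathscr{Y}$; the remaining steps are essentially bookkeeping built on Theorems~\ref{TT5} and~\ref{solution}.
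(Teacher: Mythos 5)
Your proposal is correct and follows essentially the route the paper intends: Corollary~\ref{complete} is obtained by specializing Theorem~\ref{TT5} to $\mathcal{M}=\{Q\}$, where the dual minimizer of Problem~(\ref{infY}) is $\kappa(\hat\lambda)\,\mathrm{d}Q/\mathrm{d}{}^{\hat\lambda}Q$ (note that all priors are equivalent in the complete case, so $Q\approx{}^{\hat\lambda}Q$ and your admissibility worry is harmless), the constant $\kappa(\hat\lambda)$ is fixed by the budget constraint, and the strategy comes from predictable representation of the $Q$-martingale $\mathbf{E}^Q[e^{-rT}\hat X^{\hat\lambda}_T\mid\mathcal{F}^S_t]$ --- exactly the computations carried out in Section~4. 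The only imprecision is your phrasing that $\mathscr{Y}(y;\hat\lambda)$ ``collapses to a single candidate'': the dual domain still contains many supermartingale deflators even in a complete market, but since $V$ is decreasing the minimizer is the density of the unique martingale measure, which is all that is needed.
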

}}
%Theorem \ref{TT5} shows that the optimal strategy of  Problem~(\ref{p0}) can be obtained for some $\kappa(\hat{\lambda})$ and $\hat{\lambda}$. The optimal strategy can also be obtained by martingale method. In the EUT case, the optimal strategy is only related with some Lagrange multiplier $\kappa$. However, in the pre-commitment KMM model, because the optimization problem has two utility functions, we need a function $\hat{\lambda}$ to determine the multiplier $\kappa(\hat{\lambda})$ and the optimal strategy $\hat{\pi}$.

%\vskip 10pt
\setcounter{equation}{0}
\section{{{\bf Black-Scholes Financial Market}}}
In this section, we  introduce specific settings of the financial market and the DM's attitudes towards risk and ambiguity. For simplicity, to characterize ambiguity attitude, we choose $\phi(x)=x^\gamma/\gamma$, $\gamma<1$  {if $U$ takes value in $[0,+\infty)$, and $\phi(x)=-(-x)^\gamma/\gamma$, $\gamma<1$ if $U$ takes value in $[-a,0]$ for some $a>0$}. For the utility function $U$, we consider three different cases: CARA function, HARA function, and CRRA function. We check  Assumption~\ref{Ass}  and apply  Theorem \ref{TT5} to derive the optimal strategies of Problem~(\ref{p0}) for these three utility functions.

%\vskip 5pt
%\subsection{{ {\bf Specific Settings}}}
\cite{Blanchard93} and \cite{BT18} both show that the first moments of stock returns are hard to be estimated accurately. As such, the stocks' yields are often ambiguous. We consider a Black-Scholes financial market  consisting of a risk-free asset ${{S^0}}$ and an ambiguity asset ${{S^1}}$ as follows:
\begin{equation*}
\left\{
\begin{array}{ll}
\mathrm{d}{{S^0_t}}=r{{S^0_t}}\mathrm{d}t,&{{S^0_0=s^0_0}},\\
\mathrm{d}{{S^1_t}}={{S^1_t}}(\mu_* \mathrm{d}t+\sigma\mathrm{d}{{W^*_t}}),~&{{S^1_0=s^1_0}},
\end{array}
\right.
\end{equation*}
where $r>0$ is the risk-free interest rate, the real yield ${\mu_*}$ is {{only $\mathcal{F}$-measurable but not $S^1$-measurable, which means that the DM can}} not estimate it accurately. The volatility $\sigma>0$ is a known constant, and $W^*$ is a standard $\mathbf{P}$-Brownian motion. We assume that the DM is ambiguous about the real yield ${\mu_*}$ of asset $S_1$. The DM has some subjective  estimates or priors towards $\mu_*$. ${\mu_*}$ takes values in $\mathbb{R}$ and has SOD $F(\cdot)$ whose probability density function w.r.t. Lebesgue measure is denoted by $p(\cdot)$.
\vskip 4pt
For any fixed realization $\mu\in\mathbb{R}$, define ${{{}^*\eta_t^\mu}}=\exp\left\{-\frac{\mu_*-\mu}{\sigma}{{W^*_t}}-\frac{(\mu_*-\mu)^2}{2\sigma^2}t\right\}$, and $\frac{\mathrm{d}Q^\mu}{\mathrm{d}\mathbf{P}}|_{\mathcal{F}_t}={{{}^*\eta_t^\mu}}$. Then the process $\left\{{{W^{Q^{\mu}}_t}}\triangleq {{W^*_t}}+\frac{\mu_*-\mu}{\sigma}t,~0\leq t\leq T\right\}$ is a {{$\mathcal{F}^S_t$-}}Brownian motion under $Q^\mu$, {{which is observable for the DM}}. %we have $\mathrm{d}W(t)=\frac{\mu-\mu_*}{\sigma}\mathrm{d}t+\mathrm{d}W^{Q^{\mu}}(t)$.
%\vskip 5pt
%{{
%If $\mu$ is a constant, we have
%\begin{equation*}
%\mathrm{d}S_1(t)=S_1(t)[\mu\mathrm{d}t+\sigma\mathrm{d}W(t)],~S_1(0)=s_1,
%\end{equation*}
%i.e.,
%\begin{equation*}
%S_1(t)=s_1e^{\sigma W(t)+(\mu-\frac{1}{2}\sigma^2)t}.
%\end{equation*}
%In order to transform the ambiguity of the yield $\mu$ to the real probability measure $\mathbb{P}$, $\forall \mu\in\mathbb{R}$, denote that
%$$
As such, under $Q^\mu$, the dynamic process of the ambiguity asset $S_1$ has the form:
\begin{equation*}
\mathrm{d}{{S^1_t}}={{S^1_t}}\left(\mu\mathrm{d}t+\sigma\mathrm{d}{{W^{Q^\mu}_t}}\right),~{{S^1_0=s^1_0}}.%,
\end{equation*}

%where $\{{{W^{Q^\mu}_t}},0\leq t\leq T\}$ is a standard Brownian motion under probability measure $Q^\mu$.
%There is a probability measure induced by the SOD $F(\mu)$ on the set $$.
{{
As such, we successfully transform the ambiguity of the yield $\mu_*$ to the ambiguity on $\mathcal{P}=\{Q^\mu, \mu \in \mathbb{R}\}$ equivalently, which is a specific case of the model in Section 2. %is equivalent  to the ambiguity of probability measures.
%The DM is  uncertain about the real probability measure and  assigns a family of  probability measures $\{Q^\mu, \mu \in \mathbb{R}\}$ with probability distribution function $F(\mu)$ of the variable $\mu$.
Observing that $Q^\mu\approx\mathbf{P}$, $\forall Q^\mu\in\mathcal{P}$, we know that all the measures in $\mathcal{P}$ are equivalent and any $Q^\mu\in\mathcal{P}$ is an domination of $\mathcal{P}$. Assume that
%We also  assume that
the DM has a reference point described by $\mu_0$ and $\hat{Q}=Q^{\mu_0}$ (denote $W^{Q^{\mu_0}}$ by $\hat{W}$ for simplicity).}}
In statistics,  $\mu_0\in\mathbb{R}$ represents the statistical estimation of the past stock yield, and $Q^{\mu_0}$ represents the subjective estimation of the real probability $\mathbf{P}$ of the  financial market.
{{
For $\mu\in\mathbb{R}$, we know
\begin{equation*}
{{\eta^\mu_t}}=\exp\left\{-\nu_\mu \hat{W}_t-\frac{1}{2}\nu_\mu^2t\right\},
\end{equation*}
where
\begin{equation*}
\nu_\mu=\frac{\mu_0-\mu}{\sigma}.
\end{equation*}
}}
%\begin{remark}
%In the financial market, the DM acts as $\mu=\mu_0$ and the evolution of the ambiguous asset follows the reference model. As such, when regarding the prior $\mu=\mu_0$, the reference financial model is similar with the standard Black-Scholes model with constant drift and volatility terms. The DM then observes $W^{Q^{\mu_0}}(t)$ in the market regardness of the estimation error.
%\end{remark}

%\vskip 5pt
Let $\pi=\{{{\pi_t}},{0\leq t\leq T}\}$ be the money invested in ${{S^1}}$. The admissible strategy $\pi$ is  required to satisfy that ${{\pi_t}}$ is $\mathcal{F}_t^S-$measurable, and  $\int_0^T\!{{\pi^2_t}}\mathrm{d}t\!<\!+\infty,\!~a.s.\!~{{\hat{Q}}}$. %Q^{\mu},\!~\forall\mu\in\mathbb{R}$. For convenience, we denote $\mathcal{V}[0,T]$ as the collection of all the admissible strategies.
Then the DM's wealth process $X=\left\{{{X_t}},0\leq t\leq T \right\}$  with strategy $\pi$ under $Q^\mu$ is given by
 \begin{equation}\label{xi}
 \left\{
 \begin{array}{ll}
 \mathrm{d}{{X_t}}&=[r{{X_t}}+{{\pi_t}}(\mu-r)]\mathrm{d}t+{{\pi_t}}\sigma\mathrm{d}{{\hat{W}_t}},\\
 {{X_0}}&=x.
 \end{array}
 \right.
 \end{equation}
Define
\begin{equation*}
%\left\{
%\begin{array}{ll}
{{\eta_t}}=\exp\{-\nu {{\hat{W}_t}}-\frac{1}{2}\nu^2t\},\\
%{{\eta^\mu_T}}=\exp\{-\nu_\mu W^{Q^{\mu_0}}(t)-\frac{1}{2}\nu_\mu^2t\},
%\end{array}
%\right.
\end{equation*}
where
\begin{equation*}
%\begin{array}{ll}
\nu=\frac{\mu_0-r}{\sigma},\\
%\nu_\mu=(\mu_0-\mu)/\sigma.
%\end{array}
\end{equation*}
% $\frac{\mathrm{d}Q^\mu}{\mathrm{d}Q^{\mu_0}}|_{\mathcal{F}_t}\triangleq{{\eta^\mu_T}}$  describes the Radon-Nikodym derivative from probability measure $Q^{\mu_0}$ to ${Q^\mu}$. Noting that the natural filtration of $S_1$ is the same as the natural filtration of $W^{Q^\mu}$ for any fixed realization $\mu\in\mathbb{R}$,  we know that ${{\eta^\mu_T}}$ is $\mathcal{F}^S_t$ measurable and $\frac{\mathrm{d}Q^\mu}{\mathrm{d}Q^{\mu_0}}|_{\mathcal{F}^S_t}={{\eta^\mu_T}}$.\\
%\vskip 5pt
{{and the probability measure $Q$ satisfies $\frac{\mathrm{d}Q}{\mathrm{d}\hat{Q}}|_{\mathcal{F}^S_T}=\eta_T$}}. Then, under %the equivalent probability measure
$Q$, the process $X$ satisfies
 \begin{equation*}
 \begin{array}{ll}
 \mathrm{d}{{X_t}}&=r{{X_t}}\mathrm{d}t+\sigma{{\pi_t}}\mathrm{d}{{W^{Q}_t}},
 \end{array}
 \end{equation*}
where $\{{{W^Q_t}},0\leq t\leq T\}$ is a Brownian motion under $Q$. Besides,  we have ${{W^Q_t}}-\nu t={{W^{Q^\mu}_t}}-\nu_\mu t={{\hat{W}_t}}$ for $\mu\in\mathbb{R}$. Also, $\{e^{-rt}{{X_t}},0\leq t\leq T\}$ is a {{local}} martingale under probability measure $Q$, i.e., $\{e^{-rt}{{\eta_t}}{{X_t}},0\leq t\leq T\}$ is a {{local}} martingale under probability measure ${{\hat{Q}}}$. Moreover, it is observed that the financial is complete, i.e., $\mathcal{M}=\{Q\}$. %Using Corollary \ref{complete}, %and
Based on the martingale representation theorem,
there exists a stochastic process $\{{{\hat{f}_t}}, 0\leq t\leq T  \}$ such that
\begin{equation}\label{X2}
\begin{array}{ll}
e^{-rt}{{\eta_t}}{{\hat{X}_t}} &= \mathbf{E}^{{{\hat{Q}}}}\left[e^{-rT}I\left(\frac{\hat{\kappa}{{\eta_T}}}{{{{}^\lambda \eta_T}}}%<\lambda(\cdot),{{\eta^\cdot_T}}>}
\right){{\eta_T}}|\mathcal{F}^S_t\right]\\
&=x + \int_0^t {{\hat{f}_u}}\mathrm{d}{{\hat{W}_u}}.
\end{array}
\end{equation}
Using Eq.~(\ref{xi}), we also have
\begin{equation}\label{X3}
\begin{array}{ll}
\mathrm{d}\left[e^{-rt}{{\eta_t}}{{\hat{X}_t}}\right]=e^{-rt}{{\eta_t}}\left(-\nu{{\hat{X}_t}}+ \sigma{{\hat{\pi}_t}}\right)\mathrm{d}{{\hat{W}_t}}.
\end{array}
\end{equation}
Comparing the diffusion coefficients in Eqs.~(\ref{X2}) and (\ref{X3}), we obtain the optimal strategy:
\begin{equation*}%\label{ops22}
{{\hat{\pi}_t}}=\frac{e^{rt}{{\eta^{-1}_t\hat{f}_t}}+\nu\hat{X}_t}{\sigma}.
\end{equation*}
Now we present the optimal strategies for three different utility functions of $U(\cdot)$. In order to apply Corollary \ref{complete}, we need to prove that Assumption~\ref{Ass} holds. Then, we can obtain the explicit forms of the investment strategy in the case that the SOD is normal distribution: $N(\mu_0,\sigma_\mu^2)$.

%\vskip 5pt
\subsection{\bf CARA Utility}
\subsubsection{{\bf Validity of condition~(\ref{asp})}}
We first prove that Assumption~\ref{Ass} holds, which only depends on the utility function $U(\cdot)$ and is not influenced by the specific model of the risky asset and the ambiguity attitudes $\phi(\cdot)$. We will see that in these three cases,  Assumption~\ref{Ass} is mainly related with the form of the utility function $U(\cdot)$. The utility function is $U(x)=-\frac{1}{\alpha}e^{-\alpha x}, \alpha>0$, and the inverse of $U'$ is $I(x)=-\frac{1}{\alpha}\log (x)$. Then, based on Eqs.~(\ref{XThat})-(\ref{equa1}), the terminal wealth ${{\hat{X}^\lambda_T}}$ corresponding to  function $\lambda$ is given by
\[
{{\hat{X}^\lambda_T}}=-\frac{1}{\alpha}\log\left(\frac{\kappa{{\eta_T}}}{{{{}^\lambda\eta_T}}}\right),\]
with budget constraint
\[%~\text{and}
xe^{rT}=-\frac{1}{\alpha}\mathbf{E}^Q\left[\log(\kappa)+\log({{\eta_T}})-\log({{{}^\lambda\eta_T}})
\right].
\]
Solving the last equation, we have
\[\kappa=\exp\left\{-\alpha x e^{rT}-\mathbf{E}^{Q}\left[\log(\frac{{{\eta_T}}}{<\lambda(\cdot),{{\eta^\cdot_T}}>})\right]\right\}.\]
In addition,
\[U\!\circ\! I(x)=-\frac{1}{\alpha}x.\]
As such, %then we can check that the condition in Eq.~(\ref{asp}) holds:
\begin{equation}\nonumber
\begin{split}
<\mathbf{E}^{Q^{\bigcdot}}[U({{\hat{X}^\lambda_T}})],\lambda(\cdot)>&=~<-\frac{1}{\alpha}\mathbf{E}^{{{\hat{Q}}}}\left[\frac{\kappa{{\eta_T}}{{\eta^\cdot_T}}}{{{{}^\lambda\eta_T}}}\right],\lambda(\cdot)>\\
&=-\frac{\kappa}{\alpha}\mathbf{E}^{{{\hat{Q}}}}\left[{{\eta_T}}\frac{{{{}^\lambda\eta_T}}}{{{{}^\lambda\eta_T}}}\right]\\
&=-\frac{1}{\alpha}\exp\left\{-\alpha x e^{rT}-\mathbf{E}^{Q}\left[\log(\frac{{{\eta_T}}}{{{{}^\lambda\eta_T}}})\right]\right\}\\
&=h(x)\rho(\lambda),\\
\end{split}
\end{equation}
where $h(x)=U(xe^{rT})=-\frac{1}{\alpha}e^{-\alpha x e^{rT}}$ and $\rho(\lambda)=\exp\left\{-\mathbf{E}^{Q}\left[\log(\frac{{{\eta_T}}}{{{{}^\lambda\eta_T}}})\right]\right\}$. Thus the condition in Eq.~(\ref{asp}) is satisfied.
%which satisfies the condition in Eq.~(\ref{asp}).

%\vskip 5pt
We observe that not only the condition of Eq.~(\ref{asp}) is satisfied, but also the function $h(x)$ is just equal to  $U(xe^{rT})$. As such, $h(x)$ is invertible and differentiable, which satisfies the assumptions in  Theorem \ref{TT5}. Thus, it allows us to apply Corollary \ref{complete} to present the optimal strategy of Problem~(\ref{p0}).

\subsubsection{{\bf Optimal strategy of Problem~(\ref{p0})}}
Before deriving the strategy, we first simplify the formulations and conditions in Corollary \ref{complete}.  %Theorem \ref{TT5}.
Using the relationship
\begin{equation*}
\frac{\phi'(\mathbf{E}^{Q^\mu}[U(\hat{X}_T)])}{<\phi'(\mathbf{E}^{Q^{\bigcdot}}[U(\hat{X}_T)]),\mathbb{I}_{\mathbb{D}}>}=\hat{\lambda}(\mu),
\end{equation*}
we eliminate $\hat{\lambda}$ in Eq.~(\ref{XThat}) in %the formulation of
Corollary \ref{complete} and {{have the fact that %Theorem \ref{TT5}
\begin{equation}\nonumber
\begin{split}
\kappa \eta_T&=I^{-1}(\hat{X}_T)<\hat{\lambda}(\cdot),\eta^\cdot_T>\\
&=U'(\hat{X}_T)\frac{1}{\tilde{\kappa}}\int_{\mathbb{R}}\phi'\left(\mathbf{E}^{Q^\mu}[U(\hat{X}_T)]\right)\eta^\mu_T \mathrm{d}F(\mu),
\end{split}
\end{equation}
where $\tilde{\kappa}=<\phi'(\mathbf{E}^{Q^{\cdot}}[U(\hat{X})]),\mathbb{I}_{\mathbb{D}}>$. Regarding $\kappa \tilde{\kappa}$ as a new constant $\kappa$ and denoting by $p(\mu)$ the probability density function of $F(\mu)$,}}
we obtain the conditions only related with the optimal terminal wealth ${{\hat{X}^{\hat{\lambda}}_T}}$ of Problem~(\ref{p0}) {(denoted as $\hat{X}_T$ for simplicity in this section)}:
\begin{equation}\label{cond}
\left\{
\begin{split}
&U'(\hat{X}_T)\int_{\mathbb{R}}\phi'\left(\mathbf{E}^{Q^\mu}[U(\hat{X}_T)]\right)\eta^\mu_T p(\mu)\mathrm{d}\mu=\kappa {{\eta_T}},\\
&\mathbf{E}^Q[{{\hat{X}_T}}]=\mathbf{E}^{{{\hat{Q}}}}[{{\hat{X}_T}}{{\eta_T}}]=x_0e^{rT}.
\end{split}
\right.
\end{equation}
\vskip 4pt
In statistics, normal distribution is often adopted in parameter estimation. For all examples in this section, we assume that ambiguous yield $\mu$ follows the normal distribution %is normally distributed $\mu \sim$
$N(\mu_0, \sigma_\mu^2 )$, which is equivalent to  $\nu_\mu\sim N(0, \frac{\sigma_\mu^2}{\sigma^2} )$. In order to simplify the formulations of the results, we simply denote $\sigma_0^2=\frac{\sigma^2}{\sigma_\mu^2T}$.

\vskip 4pt
%\vskip 5pt
{Noting that $U(x)=-\frac{1}{\alpha} e^{-\alpha x}\in[-\frac{1}{\alpha},0]$ for $x>0$, we choose $\phi(x)=-\frac{(-x)^\gamma}{\gamma}$, $\gamma<1$, and use Eq. (\ref{cond}) to derive the optimal solution.}
In order to obtain explicit form of the optimal strategy, we guess that the optimal terminal wealth has the form ${{\hat{X}_T}}=\frac{1}{\alpha}(\frac{p}{2T}{{\hat{W}_T^2}}+q{{\hat{W}_T}}+c)$. As such,%Denoting $W^{{{\hat{Q}}}}(T)$ by $W_T$, we have
\begin{equation}\nonumber
\begin{split}
\mathbf{E}^{Q^\mu}[U(\hat{X}_T)]&=\mathbf{E}^{{{\hat{Q}}}}\left[-\frac{1}{\alpha}e^{-\alpha\hat{X}_T}{{\eta^\mu_T}}\right]\\
&=\mathbf{E}^{{{\hat{Q}}}}\left[-\frac{1}{\alpha}\exp\left\{-\frac{p}{2T}{{\hat{W}_T}}-q{{\hat{W}_T}}-\nu_\mu {{\hat{W}_T}}-\frac{1}{2}\nu_\mu^2T-c\right\}\right]\\
&=-\frac{1}{\alpha\sqrt{1+p}}\exp\left\{-\frac{pT}{2(p+1)}\nu_\mu^2+\frac{qT}{p+1}\nu_\mu+\frac{q^2T}{2(p+1)}-c\right\}.
\end{split}
\end{equation}
%where it is supposed that $p>-1$ so that the former expectation is finite. Then we have
As such,
\begin{equation}\label{eq41}
\begin{split}
&\int_{\mathbb{R}}\phi'\left(\mathbf{E}^{Q^\mu}[U(\hat{X}_T)]\right)\eta^\mu_T p(\mu)d\mu\\
\propto&\int_{\mathbb{R}}e^{-(\gamma-1)\frac{pT}{2(p+1)}\nu_\mu^2+(\gamma-1)\frac{qT}{p+1}\nu_\mu-\frac{T}{2}\nu_\mu^2-{{\hat{W}_T}}\nu_\mu}e^{-\frac{\sigma_0^2T}{2}\nu_\mu^2}\mathrm{d}\nu_\mu\\
\propto&\exp\left\{\frac{1}{2T(\frac{1+\gamma p}{1+p}+\sigma_0^2)}\left({{\hat{W}_T^2}}-\frac{2(\gamma-1)Tq}{1+p}{{\hat{W}_T}}\right)\right\}.
\end{split}
\end{equation}

Substituting Eq.~(\ref{eq41}) into Eq.~(\ref{cond}), we obtain the following equations:
\begin{equation}\label{cara}
\left\{
\begin{split}
&(\gamma+\sigma_0^2)p^2+\sigma_0^2 p-1=0,\\
&\frac{1+\gamma p}{1+p}q=\nu,\\
&\frac{1}{2}(\nu^2T+1)p-\nu Tq+c=\alpha xe^{rT}.
\end{split}
\right.
\end{equation}
By the solution $(p,q,c)$ of Eq.~(\ref{cara}) with $p>-1$, we obtain the optimal terminal wealth ${{\hat{X}_T}}=\frac{1}{\alpha}(\frac{p}{2T}{{\hat{W}_T^2}}+q{{\hat{W}_T}}+c)$, where
\begin{equation}\label{result1}
\left\{
\begin{split}
&p=\frac{\sqrt{\sigma_0^4+4\sigma_0^2+4\gamma}-\sigma_0^2}{2(\sigma_0^2+\gamma)},\\
&q=\frac{1+p}{1+\gamma p}\nu,\\
&c=\alpha xe^{rT}-\frac{p}{2}+\frac{1+\frac{1}{2}p-\frac{1}{2}\gamma p^2}{1+\gamma p}\nu^2 T.
\end{split}
\right.
\end{equation}
\vskip 4pt
We see from Eq.~(\ref{result1}) that $p>0$, which means that the DM prepares for bad cases of the future market (i.e., ${{\hat{W}_T}}<0$). In the traditional models ignoring ambiguity, the variance of the yield $\sigma_\mu^2=0$, i.e., $\sigma_0^2\rightarrow \infty$, as such, $p\rightarrow 0$ and $q=\frac{1+p}{1+\gamma p}\nu\rightarrow \nu$. Then, the optimal terminal wealth reduces to the traditional form ${{X_T}}=\frac{\nu}{\alpha} {{\hat{W}_T}}+c'$ in non-ambiguity case. In non-ambiguity  case, $p=0$, and ${{X_T}}$ is a linear function of ${{\hat{W}_T}}$. However, wee see that, in ambiguity case, $0<p<1$, and ${{X_T}}$ is a quadratic function of ${{\hat{W}_T}}$. As such, when the DM considers ambiguity of the yield, he/she tends to give up some benefits of normal situations with small bias  to ensure the benefits of extreme situations with large bias.
%a part of his/her benefit of some good cases in order to ensure the benefit of bad cases.
\vskip 4pt
In order to obtain the optimal strategy, based on martingale method, we first calculate ${{\hat{X}_T}}$ as follows:
\begin{equation}\nonumber
\begin{split}
{{\hat{X}_t}}=~&\mathbf{E}^{Q}[e^{-r(T-t)}{{\hat{X}_T}}|\mathcal{F}^S_t]\\
=~&\mathbf{E}^{{{\hat{Q}}}}\left[\frac{1}{\alpha}e^{-r(T-t)}\left(\frac{p}{2T}{{\hat{W}_T^2}}+q{{\hat{W}_T}}+c\right)e^{-\nu ({{\hat{W}_T}}-{{\hat{W}_t}})-\frac{1}{2}\nu^2(T-t)}|\mathcal{F}^S_t\right]\\
=~&\frac{1}{\alpha}e^{-r(T-t)}\left(\frac{p}{2T}{{\hat{W}_t^2}}+q{{\hat{W}_t}}+c\right)-\frac{\nu(T-t)p}{\alpha T}e^{r(T-t)}{{\hat{W}_t}}.\\
\end{split}
\end{equation}
{Notice that $X_t$ is a quaduatic function of $W_t$, as such, the wealth process becomes nonnegative if $x$ is large enough, which is because $U'(x)=e^{-\alpha x}$ is not a surjection from $(0,+\infty)$ to $(0,+\infty)$.}
As such, $e^{-rT}{{\hat{X}_T}}{{\eta_T}}$ is replicated by
\begin{equation}\nonumber
\begin{split}
e^{-rT}{{\hat{X}_T}}{{\eta_T}}=x+\int_0^T&\left[\frac{\nu^2(T-t)p}{\alpha T}{{\hat{W}_t}}-\frac{\nu(\nu^2(T-t)+3)(T-t) p}{2\alpha T}\right.\\
&\left.+\frac{\nu^2(T-t)q}{\alpha}\right]e^{-\nu {{\hat{W}_t}}-\frac{1}{2}\nu^2t}\mathrm{d}{{\hat{W}_t}}.
\end{split}
\end{equation}
Thus, the optimal strategy is given by
\begin{equation*}
\begin{split}
\hat{\pi}_t=\frac{1}{\sigma}e^{rt}\left(\frac{\nu^2(T-t)p}{\alpha T}{{\hat{W}_t}}-\frac{\nu(\nu^2(T-t)+3)(T-t) p}{2\alpha T}+\frac{\nu^2(T-t)q}{\alpha}\right)+\frac{\nu}{\sigma}{{\hat{X}_t}},
\end{split}
\end{equation*}
where $p$ and $q$ are given by Eq.~(\ref{result1}). In the traditional expected utility optimization problem with CARA utility function, the optimal investment strategy is only a function of time. However, we see that when considering ambiguity, the DM should adjust the strategy relying on the states ${{\hat{W}_t
}}$ and ${{\hat{X}_T}}$ at time $t$.

\subsection{\bf CRRA Utility}
\subsubsection{{\bf Validity of condition~(\ref{asp})}}
%{{The validity of condition~(\ref{asp}) only depends on the utility function and is free of the specific model of the risky asset and the ambiguity attitude.}}
In this case, we suppose that the utility function is $U(x)=\frac{1}{\beta}x^\beta, \beta<1$, and the inverse of $U'$ is $I(x)=x^{\frac{1}{\beta-1}}$. Based on Eqs.~(\ref{XThat})-(\ref{equa1}), the terminal wealth ${{\hat{X}^\lambda_T}}$ corresponding to a fixed weight function $\lambda$ is
\[
{{\hat{X}^\lambda_T}}=\left(\frac{\kappa{{\eta_T}}}{{{{}^\lambda\eta_T}}}\right)^{\frac{1}{\beta-1}}\]
with budget constraint
\[xe^{rT}=\kappa^{\frac{1}{\beta-1}}
\mathbf{E}^Q\left[\left(\frac{{{\eta_T}}}{{{{}^\lambda\eta_T}}}\right)^{\frac{1}{\beta-1}}\right].
\]
It follows that
\[\kappa=(xe^{rT})^{\beta-1}\left(\mathbf{E}^Q
\left[\left(\frac{{{\eta_T}}}{{{{}^\lambda\eta_T}}}\right)^{\frac{1}{\beta-1}}%<\lambda(\cdot),{{\eta^\cdot_T}}>^{-\frac{1}{\beta-1}}
\right]\right)^{-(\beta-1)}.\]
Besides,
\[U\!\circ\! I(x)=\frac{1}{\beta}x^{\frac{\beta}{\beta-1}}.\]
As such, %Then we check the condition in Eq.~(\ref{asp}):
\begin{equation}\nonumber
\begin{split}
<\mathbf{E}^{Q^{\bigcdot}}[U({{\hat{X}^\lambda_T}})],\lambda(\cdot)>&=~<\frac{1}{\beta}\mathbf{E}^{{{\hat{Q}}}}\left[\left(\frac{\kappa{{\eta_T}}}{{{{}^\lambda\eta_T}}}\right)^{\frac{\beta}{\beta-1}}{{\eta^\cdot_T}}\right],\lambda(\cdot)>\\
&=\frac{1}{\beta}\kappa^{\frac{\beta}{\beta-1}}\mathbf{E}^{{{\hat{Q}}}}\left[{{\eta_T}}\left(\frac{{{\eta_T}}}{{{{}^\lambda\eta_T}}}\right)^{\frac{1}{\beta-1}}\right]\\
&=\frac{1}{\beta}(xe^{rT})^\beta\left(\mathbf{E}^{Q}\left[\left(\frac{{{\eta_T}}}{{{{}^\lambda\eta_T}}}\right)^{\frac{1}{\beta-1}}\right]\right)^{1-\beta}\\
&=h(x)\rho(\lambda),\\
\end{split}
\end{equation}
where $h(x)=U(xe^{rT})=\frac{1}{\beta}(xe^{rT})^\beta$ and $\rho(\lambda)=\left(\mathbf{E}^{Q}\left[\left(\frac{{{\eta_T}}}{{{{}^\lambda\eta_T}}}\right)^{\frac{1}{\beta-1}}\right]\right)^{1-\beta}$. Thus the condition in Eq.~(\ref{asp}) follows.
\vskip 4pt
The result is similar to the former case that not only the condition  of Eq.~(\ref{asp})  is satisfied, but also the function $h(x)$ is just equal to  $U(xe^{rT})$. As such, $h(x)$ is also invertible and differentiable, which satisfies the assumptions in  Theorem \ref{TT5}. Therefore, it allows us to use the formulation in Theorem \ref{TT5} to  present  the optimal strategy of Problem~(\ref{p0}).
\vskip 4pt
\subsubsection{\bf Optimal strategy of Problem~(\ref{p0})}
Based on Eq.~(\ref{cond}), we guess that the optimal terminal wealth has the form : \[{{\hat{X}_T}}=\exp{\left\{\frac{1}{\beta}\left(\frac{p}{2T}{{\hat{W}_T^2}}+q{{\hat{W}_T}}+c\right)\right\}}.\]
As such, %Denoting $W^{{{\hat{Q}}}}(T)$ by $W_T$, we have
\begin{equation}\label{EXT}%\nonumber
\begin{split}
\mathbf{E}^{Q^\mu}[U(\hat{X}_T)]&=\mathbf{E}^{{{\hat{Q}}}}\left[\frac{1}{\beta}\hat{X}_T^\beta{{\eta^\mu_T}}\right]\\
&=\mathbf{E}^{{{\hat{Q}}}}\left[\frac{1}{\beta}\exp\left\{\frac{p}{2T}{{\hat{W}_T^2}}+q{{\hat{W}_T}}-\nu_\mu {{\hat{W}_T}}-\frac{1}{2}\nu_\mu^2T+c\right\}\right]\\
&=\frac{1}{\beta\sqrt{1-p}}\exp\left\{\frac{pT}{2(1-p)}
\nu_\mu^2-\frac{qT}{1-p}\nu_\mu+\frac{q^2T}{2(1-p)}+c\right\}.
\end{split}
\end{equation}
As such,
%where it is supposed that $p<1$ so that the former expectation is finite. Besides, we have
\begin{equation}\label{eq42}
\begin{split}
&\int_{\mathbb{R}}\phi'\left(\mathbf{E}^{Q^\mu}[U(\hat{X}_T)]\right)\eta^\mu_T p(\mu)d\mu\\
\propto&\int_{\mathbb{R}}e^{(\gamma-1)\frac{pT}{2(1-p)}\nu_\mu^2-(\gamma-1)\frac{qT}{1-p}\nu_\mu-\frac{T}{2}\nu_\mu^2-{{\hat{W}_T}}\nu_\mu}e^{-\frac{\sigma_0^2T}{2}\nu_\mu^2}\mathrm{d}\nu_\mu\\
\propto&\exp\left\{\frac{1}{2T(\frac{1-\gamma p}{1-p}+\sigma_0^2)}\left({{\hat{W}_T^2}}+\frac{2(\gamma-1)Tq}{1-p}{{\hat{W}_T}}\right)\right\}.
\end{split}
\end{equation}
Substituting Eq.~(\ref{eq42}) into Eq.~(\ref{cond}), we obtain the following equations:
\begin{equation}\nonumber
\left\{
\begin{split}
&(\gamma+\sigma_0^2)p^2-(\sigma_0^2+\frac{1}{1-\beta}) p+\frac{\beta}{1-\beta}=0,\\
&\frac{(1-\beta)(1-\gamma p)}{\beta(1-p)}q=\nu,\\
&\exp\left\{\frac{(\nu-\frac{q}{\beta})^2}{2(1-\frac{p}{\beta})}T-\frac{1}{2}\nu^2T+\frac{c}{\beta}\right\}=xe^{rT}.
\end{split}
\right.
\end{equation}
The solution$(p,q,c)$ of the last equation can be obtained if $p<1$. As such,  the optimal terminal wealth is \[{{\hat{X}_T}}=\exp{\left\{\frac{1}{\beta}\left(\frac{p}{2T}{{\hat{W}_T^2}}+q{{\hat{W}_T}}+c\right)\right\}},\]
where
\begin{equation}\label{result2}
\left\{
\begin{split}
&p=\frac{\frac{1}{1-\beta}+\sigma_0^2-\sqrt{\sigma_0^4+\frac{2-4\beta}{1-\beta}\sigma_0^2+\frac{1}{(1-\beta)^2}-\frac{4\beta}{1-\beta}\gamma}}{2(\sigma_0^2+\gamma)},\\
&q=\frac{\beta(1-p)}{(1-\beta)(1-\gamma p)}\nu,\\
&c=\beta\left[\log(x)+rT+\left(\nu^2-\frac{(\nu-\frac{q}{\beta})^2}{(1-\frac{p}{\beta})}\right)\frac{ T}{2}\right].
\end{split}
\right.
\end{equation}

It is similar to the CARA case with $p>0$, which means that the DM prepares for bad cases of the future market (i.e., ${{\hat{W}_T}}<0$). In the traditional models without ambiguity, the variance of the yield $\sigma_\mu^2=0$, i.e., $\sigma_0^2\rightarrow \infty$, as such, $p\rightarrow 0$ and $q=\frac{\beta(1-p)}{(1-\beta)(1-\gamma p)}\nu\rightarrow \frac{\beta}{1-\beta}\nu$. Then, the optimal terminal wealth reduces to the traditional form ${{X_T}}=\exp{\left\{\frac{\nu}{\beta-1} {{\hat{W}_T}}+c'\right\}}$ in non-ambiguity case. In non-ambiguity case, $p=0$, and $\log {{X_T}}$ is a linear function of ${{\hat{W}_T}}$. However, in ambiguity case, $0<p<1$, and $\log {{X_T}}$ is a quadratic function of ${{\hat{W}_T}}$.
%When the DM considers ambiguity of the yield, he/she tends to give up {{some benefits of normal situations close to the estimation to ensure the benefits of extreme situations far from the estimation.
%a part of his/her benefit of some good cases in order to ensure the benefit of the bad cases.
\vskip 5pt
To obtain the optimal strategy, we first calculate ${{\hat{X}_T}}$ as follows:
\begin{equation}\nonumber
\begin{aligned}
{{\hat{X}_t}}=~&\mathbf{E}^{Q}\left[e^{-r(T-t)}{{\hat{X}_T}}|\mathcal{F}_t\right]\\
=~&\mathbf{E}^{{{\hat{Q}}}}\left[e^{-r(T-t)}e^{\frac{1}{\beta}\left(\frac{p}{2T}{{\hat{W}_T^2}}+q{{\hat{W}_T}}+c\right)}e^{-\nu ({{\hat{W}_T}}-{{\hat{W}_t}})-\frac{1}{2}\nu^2(T-t)}|\mathcal{F}_t\right]\\
=~&e^{\frac{1}{\beta}(\frac{p}{2T}{{\hat{W}_t^2}}+q{{\hat{W}_t}}+c)}e^{-r(T-t)}\sqrt{\frac{\beta T}{\beta T-(T-t)p}}e^{\frac{\beta T (T-t)}{2(\beta T-(T-t)p)}(\frac{p}{\beta T}{{\hat{W}_t}}+\frac{q}{\beta}-\nu)^2  -\frac{1}{2}\nu^2(T-t) }.\\
\end{aligned}
\end{equation}
Then $e^{-rT}{{\hat{X}_T}}{{\eta_T}}$ is replicated by
\begin{equation}\nonumber
\begin{split}
e^{-rT}{{\hat{X}_T}}{{\eta_T}}=x+\int_0^T\frac{p{{\hat{W}_t}}+Tq-\beta T\nu}{\beta T-(T-t)p}e^{-rt}{{\eta_t}}{{\hat{X}_t}}\mathrm{d}{{\hat{W}_t}}.
\end{split}
\end{equation}
Finally, the optimal strategy is
\begin{equation}\label{picrra}
\begin{split}
\hat{\pi}_t=\frac{1}{\sigma}\frac{1}{\beta T-(T-t)p}\left[p{{\hat{W}_t}}+Tq-(T-t)p\nu\right]{{\hat{X}_t}},
\end{split}
\end{equation}
where $p$ and $q$ are given by Eq.~(\ref{result2}). In the standard EUT problem with CRRA utility, the optimal strategy is proportional to the wealth. Considering ambiguity, we see from the above equation that the {}{observable}} market state ${{\hat{W}_t}}$ exists in the optimal strategy, which is quite different.
\subsection{\bf HARA Utility}
\subsubsection{{\bf Validity of condition~(\ref{asp})}}
%{{The validity of condition~(\ref{asp}) only depends on the utility function and is free of the specific model of the risky asset and the ambiguity attitude.}}
In order to simplify the formulation, we choose the HARA utility function of the form $U(x)=\frac{1}{\beta}(x+a)^\beta$. The inverse of $U'$ is $I(x)=x^{\frac{1}{\beta-1}}-a$. Then, based on Eqs.~(\ref{XThat})-(\ref{equa1}), the terminal wealth ${{\hat{X}^\lambda_T}}$ corresponding to a fixed weight function $\lambda$ is
\begin{equation}\nonumber
\begin{split}
{{\hat{X}^\lambda_T}}&=\left(\frac{\kappa{{\eta_T}}}{{{{}^\lambda\eta_T}}}\right)^{\frac{1}{\beta-1}}-a,
\end{split}
\end{equation}
with budget constraint
\begin{equation}\nonumber
	\begin{split}
xe^{rT}&=\kappa^{\frac{1}{\beta-1}}\mathbf{E}^Q\left[\left(\frac{{{\eta_T}}}{{{{}^\lambda\eta_T}}}\right)^{\frac{1}{\beta-1}}\right]-a.
\end{split}
\end{equation}
It follows that
\[\kappa=(xe^{rT}+a)^{\beta-1}
\left(\mathbf{E}^Q\left[\left(\frac{{{\eta_T}}}{{{{}^\lambda\eta_T}}}\right)^{\frac{1}{\beta-1}}
%<\lambda(\cdot),{{\eta^\cdot_T}}>^{-\frac{1}{\beta-1}}
\right]\right)^{-(\beta-1)},\]
and
\[U\!\circ\! I(x)=\frac{1}{\beta}x^{\frac{\beta}{\beta-1}}.\]
As such, %Then we check the condition in Eq.~(\ref{asp}):
\begin{equation}\nonumber
\begin{split}
<\mathbf{E}^{Q^{\bigcdot}}[U({{\hat{X}^\lambda_T}})],\lambda(\cdot)>&=~<\frac{1}{\beta}\mathbf{E}^{{{\hat{Q}}}}\left[\left(\frac{\kappa{{\eta_T}}}{{{{}^\lambda\eta_T}}}\right)^{\frac{\beta}{\beta-1}}{{\eta^\cdot_T}}\right],\lambda(\cdot)>\\
&=\frac{1}{\beta}\kappa^{\frac{\beta}{\beta-1}}\mathbf{E}^{{{\hat{Q}}}}\left[{{\eta_T}}\left(\frac{{{\eta_T}}}{{{{}^\lambda\eta_T}}}\right)^{\frac{1}{\beta-1}}\right]\\
&=\frac{1}{\beta}(xe^{rT}+a)^\beta\left(\mathbf{E}^{Q}\left[\left(\frac{{{\eta_T}}}{{{{}^\lambda\eta_T}}}\right)^{\frac{1}{\beta-1}}\right]\right)^{1-\beta}\\
&=h(x)\rho(\lambda),
\end{split}
\end{equation}
where $h(x)=U(xe^{rT})=\frac{1}{\beta}(xe^{rT}+a)^\beta$ and $\rho(\lambda)=\left(\mathbf{E}^{Q}\left[\left(\frac{{{\eta_T}}}{{{{}^\lambda\eta_T}}}\right)^{\frac{1}{\beta-1}}\right]\right)^{1-\beta}$.
\vskip 4pt
Thus the condition in Eq.~(\ref{asp}) follows. We observe that the functions $\rho(\lambda)$ of HARA case and CRRA case are the same because HARA utility is equal to CRRA utility with a
translation of the terminal wealth. % initial value $x$.
As such, any translation of the independent variable %initial value $x$
on the utility function $U(\cdot)$ will not impact the condition in Assumption~\ref{Ass}. And it is obvious that not only the condition  of Eq.~(\ref{asp}) is  satisfied, but also the function $h(x)$ is still equal to the utility function $U(xe^{rT})$. Thus, $h(x)$ is also invertible and differentiable, which satisfies the assumptions in Theorem \ref{TT5}. Also, it allows us to use the formulation in Theorem \ref{TT5} to present the optimal strategy of Problem~(\ref{p0})in next subsection.

\subsubsection{{\bf Optimal strategy of Problem~(\ref{p0})}}
We know that HARA utility just equals to CRRA with a translation of the independent variable.  %initial value $x$.
As such, the optimal terminal wealth should also equal to the CRRA case with a translation. We suppose  \[{{\hat{X}_T}}=e^{\frac{1}{\beta}\left(\frac{p}{2T}{{\hat{W}_T^2}}+q{{\hat{W}_T}}+c\right)}-a.\]
For $p,q,c$, using Theorem \ref{TT5}, we have
\begin{equation}\label{result3}
\left\{
\begin{split}
&p=\frac{\frac{1}{1-\beta}+\sigma_0^2-\sqrt{\sigma_0^4+\frac{2-4\beta}{1-\beta}\sigma_0^2+\frac{1}{(1-\beta)^2}-\frac{4\beta}{1-\beta}\gamma}}{2(\sigma_0^2+\gamma)},\\
&q=\frac{\beta(1-p)}{(1-\beta)(1-\gamma p)}\nu,\\
&c=\beta\left[\log(x+a)+rT+\left(\nu^2-\frac{(\nu-\frac{q}{\beta})^2}{(1-\frac{p}{\beta})}\right)\frac{ T}{2}\right].
\end{split}
\right.
\end{equation}
The results in Eq.~(\ref{result3}) are similar to Eq.~(\ref{result2}) except for $c$. As such, the optimal strategy has  a similar form as in CRRA case, which is
\begin{equation}\nonumber
\begin{split}
\hat{\pi}_t=\frac{1}{\sigma}\frac{1}{\beta T-(T-t)p}\left[p{{\hat{W}_t}}+Tq-(T-t)p\nu\right]{{\hat{X}_t}},
\end{split}
\end{equation}
where $p$ and $q$ are given in Eq.~(\ref{result3}).

\vskip 10pt

Summarizing the former three different cases, we find that $h(x)=U(xe^{rT})$ holds for all these cases. Comparing with the classical EUT problem, we find that the smooth ambiguity problem with the value $xe^{rT}$ of the budget constraint, utility function $U$ and ambiguity attitude $\phi$, can be seen as another expected utility problem such as $\int_{\mathbb{D}}\phi(\cdot)\mathrm{d}\mathbb{F}$ on the range $\mathbb{D}$ of the ambiguous yield $\mu$ with the value $U(xe^{rT})$ of the distorted budget constraint. %, where $\mathbf{E}^\mu[~\cdot~]=\int_{\mathbb{D}}\cdot~\mathrm{d}F(\mu)$.
This means that{{, in a complete market,}} the DM {{with above three kinds of utility functions}} only needs to consider the one-fold expected utility optimization problem with the given ambiguity  attitude and a distorted budget constraint. The distortion function $h(x)$ equals to $U(xe^{rT})$, which can be understood as the utility at time $T$ by only investing in the risk-free asset. Therefore, when the DM makes decisions, it is enough for him/her to consider the objective $\int_{\mathbb{D}}\phi(\cdot)\mathrm{d}\mathbb{F}$ and the value $U(xe^{rT})$ of the distorted budget constraint, then derive the strategy corresponding to solution of $\int_{\mathbb{D}}\phi(\cdot)\mathrm{d}\mathbb{F}$ problem.
\vskip 5pt
In addition, we find that all the optimal wealths in the former three cases with ambiguity contain the term ${{\hat{W}_T^2}}$ with positive coefficient, while those without ambiguity only contain ${{\hat{W}_T}}$ but {do not contain} ${{\hat{W}_T^2}}$. This phenomenon illustrates that, %when
considering model uncertainty, the DM with ambiguity aversion will be cautious about the possible %bad
extreme cases in ambiguous cases. The DM tends to give up some benefits of normal situations with small bias to ensure the benefits of extreme situations far with large bias.
%a part of the benefit of some good cases in order to ensure the benefit of the bad cases.

%\vskip 10pt
\setcounter{equation}{0}
\section{{ {\bf Sensitivity Analysis}}}
In this section we present some numerical examples to show the effects of different parameters on the efficient frontier and optimal investment strategy. We show the influences on efficient frontier and optimal investment strategy separately. If we consider the efficient frontier in continuous case, there exists a map  from the unit sphere in $\mathbf{L}^2(\mathbb{R})$ to the efficient frontier, which is an infinite dimensional manifold and unable to be presented in a figure. In the first subsection, we choose the discrete case to show the effects of different parameters on the efficient frontier. In the second subsection, we compare the DMs with and without ambiguity. In the third subsection, we show the optimal investment strategy in the CRRA case.  Unless otherwise  stated, the basic parameters are given by $\mu_0=0.1$, $ r=0.05$, $\sigma=0.2$, $x=1$, $ \sigma_0=2$, $\beta=\frac{1}{3}$, $\gamma=-0.5$, $T=4$.

\subsection{\bf Efficient frontier}
In this subsection, the model of financial market is the same as in Section 4, and there are only two subjective priors, i.e., $\mathbb{D}=\{\mu_1,~\mu_2\}$. As such, the efficient frontier $\mathbf{B}(x)$ becomes an arc, which is shown in Fig.~\ref{fig:frontier}. The end points of the arcs in Fig.~\ref{fig:frontier} represent the utilities of the cases that the DM optimizes the utility under $Q^{\mu_1}$ or $Q^{\mu_2}$. In the discrete case, we suppose that  $U(x)=\frac{1}{\beta}x^\beta, \phi(x)=\frac{1}{\gamma}x^\gamma$, $\mu_1=0.15$, $\mu_2=0.09$, $p(\mu_1)=\frac{2}{3}$ and $p(\mu_2)=\frac{1}{3}$. Based on Theorem \ref{TT5}, the numerical illustrations for the efficient frontier are presented in Fig.~\ref{fig:frontier}.

\begin{figure}[h]
  \centering
  \includegraphics[totalheight=6cm]{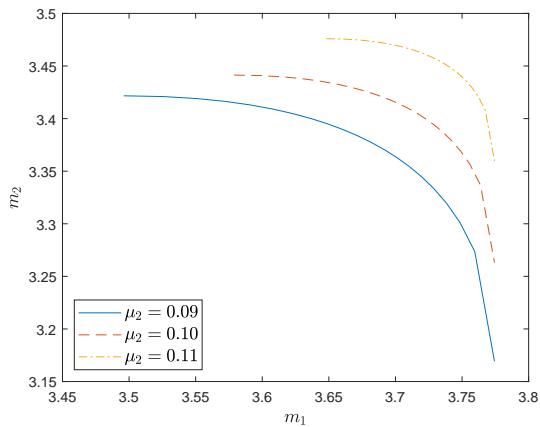}
% \caption{{\tiny The effect of reinsurer's safety loading on the optimal reinsurance strategy.}}
\caption{{  Efficient frontier.}}
  \label{fig:frontier}
\end{figure}

Fig.~\ref{fig:frontier} shows the efficient frontier $\mathbf{B}\!=\!\!\left\{\!(m_1, m_2)\!\!=\!\!\left(\mathbf{E}^{Q^{\mu_1}}\!\!\left[U\!({{X_T}})\right]\!, \mathbf{E}^{Q^{\mu_2}}\!\!\left[U\!({{X_T}})\right]\right)\!\right\}$ for different $\mu_2$. As mentioned in Section 3, %$g:\mathbf{S}\rightarrow \mathbf{B}$ defines a continuous bijection from  to $\mathbf{B}$.
using variation of {{$\lambda=(\lambda_1, \lambda_2)$ in $\mathbf{S}=\{(\lambda_1, \lambda_2)|\lambda_i\geq 0, \lambda_1 p(\mu_1)+\lambda_2 p(\mu_2)=1\}$}} and Monte Carlo simulation, we get the efficient frontier $\mathbf{B}$. For fixed $\mu_2$, we find that there is a trade-off between $m_1$ and $m_2$. $m_2$ becomes smaller as $m_1$ increases. Because if $(m^1_1, m^1_2)$ lies on the efficient frontier $\mathbf{B}$, then for any other $(m^2_1, m^2_2)\in\mathbf{B}$ with $m^2_1>m^1_1$, we must have $m^2_2<m^1_2$. When the DM is more concerned with the utility under some prior, it is natural to expect that the utility under other prior decreases as the DM is ambiguity averse.
%Economically thinking, bigger $m_1$ means that DM allocates more in  the risky asset. The amount exceeding the optimal investment amount given the stock yield is $\mu_2$. Thus, the more invested in the ambiguity asset,  the smaller $m_2$ will be.

\subsection{Comparison between the cases with/without (consideration of) ambiguity}
Next, we suppose that the SOD is Gaussian. The results with Gaussian SOD are presented in the last section with different utility functions.
{{In the case with (consideration of) ambiguity, the DM considers $\mu\in\mathbb{R}$ with Gaussian SOD and $\mathcal{P}=\left\{Q^\mu,\mu\in\mathbb{R}\right\}$, while, in the case without (consideration of) ambiguity, the DM only consider the point $\mu_0$ and $\mathcal{P}=\left\{Q^{\mu_0}\right\}$, which means that the market is still ambiguous but the DM just ignores it.}}

First we compare the optimal terminal wealths with and without ambiguity. Based on Eq.~(\ref{result2}) of the optimal terminal wealth with ambiguity, the coefficient of {{${\hat{W}_T}^2$}} is $\frac{p}{2T\beta}=0.0363$, the coefficient of {{$\hat{W}_T$}} is $\frac{q}{\beta}=0.3230$; while, in the optimal terminal wealth without ambiguity, the coefficient of {{${\hat{W}_T}^2$}} is $0$ and the coefficient of {{$\hat{W}_T$}} is $\frac{1}{1-\beta}\nu=0.3750$. As such, different from the DM ignoring  ambiguity, the DM considering ambiguity lowers the coefficient of {{$\hat{W}_T$}} and adds a small coefficient of {{${\hat{W}_T}^2$}}, which results in some losses when ${{\hat{W}_T}}$ is positive while not too large and more benefit when {{$|\hat{W}_T|$}} is large.
\vskip 4pt
Then, based on Eq.~(\ref{EXT}), the expected utility of the case with ambiguity under different probability measure $Q^\mu$ is given by
\begin{equation}\nonumber
\begin{split}
\mathbf{E}^{Q^\mu}[U(\hat{X}_T)]&=\frac{1}{\beta\sqrt{1-p}}\exp\left\{\frac{pT}{2(1-p)}
\nu_\mu^2-\frac{qT}{1-p}\nu_\mu+\frac{q^2T}{2(1-p)}+c\right\}\\
&=\frac{1}{\beta\sqrt{1-p}}\exp\left\{\frac{pT}{2(1-p)}\frac{(\mu-\mu_0)^2}{\sigma^2}+\frac{qT}{1-p}\frac{\mu-\mu_0}{\sigma}+\frac{q^2T}{2(1-p)}+c\right\};
\end{split}
\end{equation}
the expected utility of the case without ambiguity under different measure $Q^\mu$ is given by
\begin{equation}\nonumber
\begin{split}
\mathbf{E}^{Q^\mu}[U(\hat{X}_T)]&=\frac{1}{\beta}\exp\left\{-\frac{\beta}{1-\beta}\nu T\nu_\mu+\beta\log x+\beta rT+\frac{\beta}{2(1-\beta)}\nu^2 T\right\}\\
&=\frac{1}{\beta}\exp\left\{\frac{\beta\nu T}{1-\beta}\frac{\mu-\mu_0}{\sigma}+\beta\log x+\beta rT+\frac{\beta}{2(1-\beta)}\nu^2 T\right\}.
\end{split}
\end{equation}

The comparison of the expected utilities of the two cases is shown in Fig. \ref{fig:utilitycompare}.
\begin{figure}[h]
  \centering
  \includegraphics[totalheight=6cm]{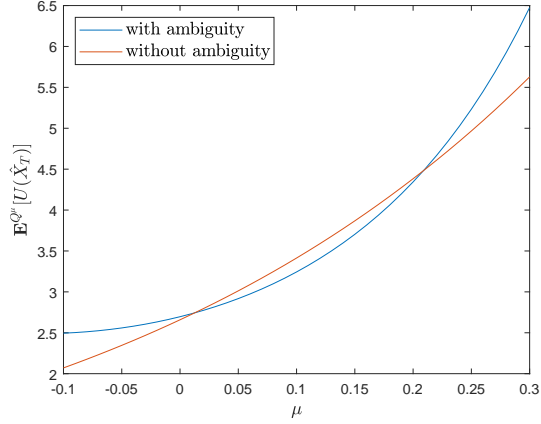}
% \caption{{\tiny The effect of reinsurer's safety loading on the optimal reinsurance strategy.}}
\caption{{  Comparison of $\mathbf{E}^{Q^\mu}[U(\hat{X}_T)]$ in two cases.}}
  \label{fig:utilitycompare}
\end{figure}
In Fig.~\ref{fig:utilitycompare}, compared with the DM without ambiguity, the expected utility of the DM decreases a little when the yield is close to the point estimation $\mu_0=0.1$ but increases when the yield is far from the point estimation $\mu_0=0.1$, which means that the DM with ambiguity attitude tends to give up some benefits of normal situations with small bias to ensure the benefits of extreme situations with large bias.
\vskip 4pt
%As for the value function, we show the u if the DM does not consider the ambiguity in the following figure.
We also compare the value functions with and without ambiguity in Fig.~\ref{fig:loss}.
\begin{figure}[h]
  \centering
  \includegraphics[totalheight=6cm]{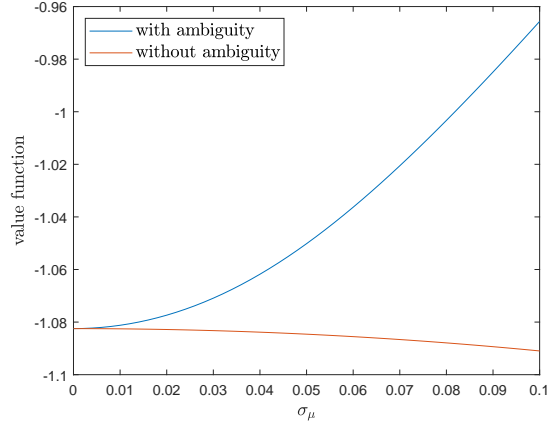}
% \caption{{\tiny The effect of reinsurer's safety loading on the optimal reinsurance strategy.}}
\caption{{  Comparison of value function in two cases.}}
  \label{fig:loss}
\end{figure}
In Fig.~\ref{fig:loss}, the value functions are the same in the two cases if there is no ambiguity in the financial market ($\sigma_\mu=0$). However, when the uncertainty over the financial market increases ($\sigma_\mu$ grows), the value function of the DM considering ambiguity increases, while the value function of the DM ignoring ambiguity decreases, which means that the DM is faced with utility loss when ignoring ambiguity.

%makes the loss of the total utility if the DM does not consider the ambiguity.
\vskip 4pt
Finally, we consider the comparison of the feedback functions of the optimal strategy. Based on Eq.~(\ref{picrra}), {
{
We find that $\hat{\pi}_0$ only depends on the initial value $x$ but no longer depends on $\hat{W}$ because $\hat{W}_0=0$ is a constant, which is different from $\hat{\pi}_t$ depending on $\hat{W}_t$ when $t\in(0,T]$. Here we show the feedback form of $\hat{\pi}$ when $t\in(0,T]$ in Fig.~\ref{fig:feedback} and discuss $\hat{\pi}_0$ in detail in the next subsection.
}
} %
\begin{figure}[h]
  \centering
  \includegraphics[totalheight=6cm]{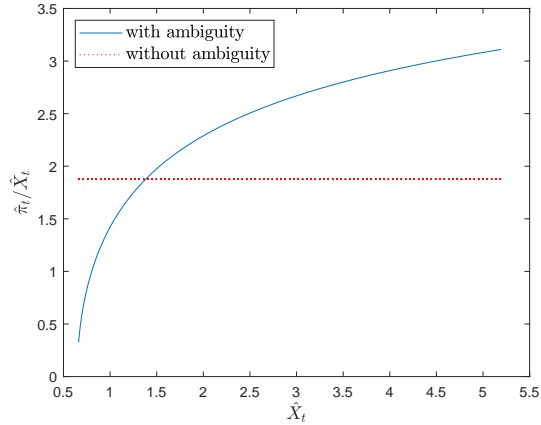}
% \caption{{\tiny The effect of reinsurer's safety loading on the optimal reinsurance strategy.}}
\caption{{  Comparison of feedback functions in two cases.}}
  \label{fig:feedback}
\end{figure}

Fig.~\ref{fig:feedback} shows that the optimal investment proportion is a constant for %ambiguity neutral
the DM {{without ambiguity}}. However, the %ambiguity averse
DM {{considering ambiguity with ambiguity aversion}} invests more (less) in the risky asset when the financial market performs well (worse).
%if the DM does not consider ambiguity, while the DM considering ambiguity invests more in the risky asset when the financial market perform well and less when the financial market perform poorly.

\subsection{\bf Optimal investment strategy at time $t=0$}
%For simplicity,
In this subsection, we study the performance of optimal investment strategy $\hat{\pi}$ at time $t=0$. We first investigate the optimal investment strategy {{$\hat{\pi}_t$}} at time $t=0$, and the optimal strategy in the CRRA case is
\begin{equation}\nonumber
\begin{split}
{{\hat{\pi}_0}}=\frac{1}{\sigma}\cdot\frac{q-p\nu}{\beta-p}x,
\end{split}
\end{equation}
where
\begin{equation}\nonumber
\left\{
\begin{split}
&p=\frac{\frac{1}{1-\beta}+\sigma_0^2-\sqrt{\sigma_0^4+\frac{2-4\beta}{1-\beta}\sigma_0^2+\frac{1}{(1-\beta)^2}-\frac{4\beta}{1-\beta}\gamma}}{2(\sigma_0^2+\gamma)},\\
&q=\frac{\beta(1-p)}{(1-\beta)(1-\gamma p)}\nu,
\end{split}
\right.
\end{equation}
which becomes a linear function of the initial value $x$. Then we show the effects of different parameters on $\hat{\pi}_0$ in the following figures.

\begin{figure}[h]
  \centering
  \includegraphics[totalheight=6cm]{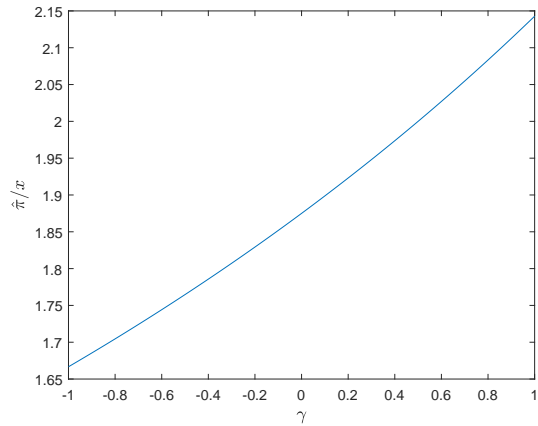}
% \caption{{\tiny The effect of reinsurer's safety loading on the optimal reinsurance strategy.}}
\caption{{  Effect of $\gamma$ on $\hat{\pi}$.}}
  \label{fig:gamma}
\end{figure}

Fig.~\ref{fig:gamma} illustrates that {{$\hat{\pi}_0$}} increases with $\gamma$. As a larger $\gamma$ means less ambiguity aversion, DM with less ambiguity aversion surely invests more in ambiguity asset. Especially when $\gamma=1$, the ambiguity attitude becomes $\phi(x)=x$, which means that DM is no longer ambiguity averse and tends to be ambiguity neutral.

Next, we study the effects of $\beta$ and variance of $\mu$ (i.e.,  $\sigma_\mu^2=\frac{\sigma^2}{\sigma_0^2T}$) on the optimal strategy at time $t=0$ in both ambiguity averse and ambiguity neutral cases, and illustrate differences between the two cases.
\begin{figure}[h]
  \centering
  \includegraphics[totalheight=6cm]{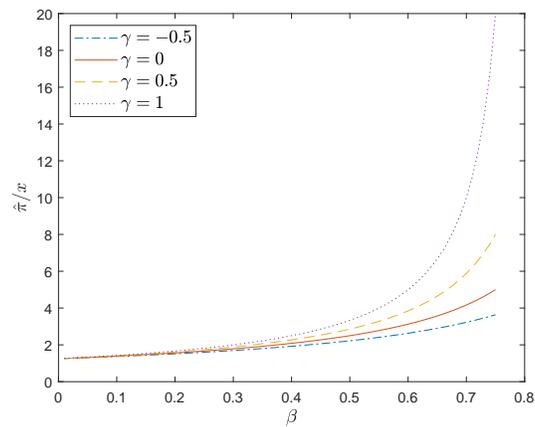}
 %\caption{{\tiny  The effect of interest rate on the optimal reinsurance strategy.}}
  \caption{ {  Effect of $\beta$ on $\hat{\pi}$.}}
  \label{fig:beta}
\end{figure}

Fig.~\ref{fig:beta} shows that {{$\hat{\pi}_0$}} increases when $\beta$ becomes larger in both ambiguity averse and ambiguity neutral cases. As mentioned before, bigger $\beta$ means less risk aversion, DM will increase his/her position in the ambiguity asset. But it is also shown in Fig.~\ref{fig:beta} that {{$\hat{\pi}_0$}} increases more rapidly when $\gamma$ decreases.  In the ambiguity neutral case, the speed of increase is much more rapid than that in ambiguity averse case.

\begin{figure}[h]
  \centering
  \includegraphics[totalheight=6cm]{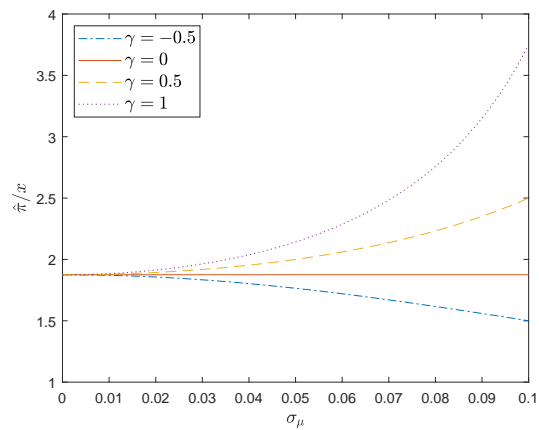}
%\caption{{\tiny  The effect of interest rate on the optimal reinsurance strategy.}}
  \caption{ {  Effect of $\sigma_\mu$ on $\hat{\pi}$.}}
  \label{fig:sigma}
\end{figure}

Fig.~\ref{fig:sigma} shows that {{$\hat{\pi}_0$}} increases as $\sigma_\mu$ becomes larger in ambiguity neutral case and low ambiguity averse case while {{$\hat{\pi}_0$}} decreases with $\sigma_\mu$ in high ambiguity averse case. The critical point of high ambiguity aversion and low ambiguity aversion is $\gamma=0$, i.e., $\phi(x)=\log(x)$. In this example, the ambiguous yield $\mu \sim N(\mu_0,\frac{\sigma^2}{\sigma_0^2T})$. As such, larger $\sigma_\mu$ means larger uncertainty of the yield, DM with high ambiguity aversion will invest less in the ambiguous asset. However, ambiguity neutral DMs and even DMs with low ambiguity aversion will invest more in the ambiguity asset. Besides, the DMs with ambiguity attitude $\phi(x)=\log(x)$ always invest a proportion of  $\frac{\nu x}{\sigma(1-\beta)}$ independent of $\sigma_\mu$. It is also shown in Fig.~\ref{fig:sigma} that {{$\hat{\pi}_0$}} in the ambiguity neutral case varies much more dramatically than that in the ambiguity averse case.
\vskip 5pt
\section{{ {\bf Concluding Remarks}}}
In this paper, we present and solve the pre-commitment KMM problem proposed by %{\color{blue}{Klibanoff, Marinacci and Mukerji}}
\cite{7} {{for dominated priors in an incomplete financial market}} in a continuous-time framework. The existence and uniqueness of the solution of the KMM problem are analysed by convex analysis, distorted Legendre transformation and {{distorted}} duality theorem. We find that: (i) {{The two-fold expected utility optimization problem is equivalent to the combination of two kinds of one-fold expected utility optimization problems (Problem~(\ref{pr2}) on the first-order space and Problem~(\ref{p3}) on the second-order space). %Problem~(\ref{pr2}) on the first-order space is just a classical EUT problem in a possibly incomplete financial market, which ensures that all of the results in \cite{25} holds ture for Problem~(\ref{pr2}) on the first-order space. Then we concentrate on Problem~(\ref{p3}) on the second-order space. Third,
(ii) Under Assumption \ref{Ass}, we introduce the distorted Legendre transformation and derive the bipolar relation in Theorem \ref{bipolarThm} and the distorted duality theorem in Theorem \ref{duality} for Problem~(\ref{p3}). Under the further assumption $AE(\phi)<1$, the existence and uniqueness of the solution to Problem~(\ref{p3}) and the dual relation of the solutions to Problem~(\ref{p3}) and the dual problem (\ref{D2}) are obtained in Theorem \ref{solution}.}} {Finally, the solution of the KMM problem is given in Theorem \ref{TT5}.} %the problem with two-fold expectations and utilities is equivalent to the optimization problem on the efficient frontier. (ii) Under the assumptions of the frontier's properties and  a new distorted  Legendre transform with a distortion of independent variable, the existence and uniqueness of the optimal claim can also be established for the KMM problem.
(iii)  The KMM problem with initial value $x$ is equivalent to the optimization problem $\int_{\mathbb{D}}\phi(\mu)\mathrm{d}F(\mu)$ on the ambiguous range $\mathbb{D}$ with the value $U(xe^{rT})$ of the distorted budget constraint {{for some common utility functions in a complete market.}}  %these specified cases.
(iv) In the case of Gaussian SOD in a Black-Scholes financial market, the optimal terminal wealth relies on a quadratic function
of the market state, which leads to the fact that the DM with ambiguity aversion would rather give up some benefits of normal situations with small bias to ensure the benefits of extreme situations with large bias. {{Moreover, the feedback function of the optimal investment strategy is linear and non-linear of the wealth process for $t=0$ and $t\in(0,T]$, respectively.}}
% a part of the benefit of some good cases to ensure the benefit of the bad cases in certain condition.
\vskip 4pt
In our work, we consider {{dominated priors to formulate the problem clearly. If the non-dominated priors are taken into consideration, it is difficult to construct the efficient frontier and separate the first-order and second-order problems, which is left for further studies.}}%a simple financial market. However, the results may also be extended to the case of an incomplete market, the model uncertainty of other parameters, such as volatility, etc. Besides, more general optimization rules with multiple expectations and utilities that cause time-inconsistency can extend the form shown in \cite{BJ17} and are also interesting. The KMM problem in the game framework is also left for further studies.

\vskip 15pt
{\bf Acknowledgements.}
%The authors acknowledge the support from the National Natural Science Foundation of China (Grant  No.11901574, No.11871036, and No.1147
%1183).
The authors acknowledge the support from the National Natural Science Foundation of China (Grant No.12271290, No.11901574, No.11871036). The authors thank Dr. Fengyi Yuan and  the members of the group of Actuarial Science and Mathematical Finance at the Department of Mathematical Sciences, Tsinghua University for their feedbacks and useful conversations.

\appendix
\section{\bf{ Proof of Theorem \ref{TT1}. } }\label{proof of TT1}
\begin{proof}[{\bf{ Proof of Theorem \ref{TT1} } }]
{\bf\textit{  If side.}} Suppose that {{$\hat{X}\in\mathfrak{X}(x)$}} %${{\mathbf{b}(\cdot,\hat{X})}}\in \mathbf{H}(x)$
satisfies
\begin{equation}\label{T1}
<{{\mathbf{b}(\cdot,\hat{X})}}-{{\mathbf{b}(\cdot,X)}},\lambda(\cdot)>\geq 0.
\end{equation}
for any {{$X\in\mathfrak{X}(x)$.}} %$\mathbf{E}^{Q^{\bigcdot}}\left[U({{X_T}})\right]\in \mathbf{H}(x)$.
If ${{\mathbf{b}(\cdot,\hat{X})}}\notin\mathbf{B}(x)$, then there exists a point ${{\mathbf{b}(\cdot,\tilde{X})}}\in \mathbf{H}(x)$ such that \[{{\mathbf{b}(\cdot,\tilde{X})}}-{{\mathbf{b}(\cdot,\hat{X})}}\in {\mathbf{L}^0_+(\mathbb{D}, \mathcal{B}(\mathbb{D}), \mathbb{F})}\setminus\{0\}.\]
It follows that
\begin{equation}\label{T2}
<{{\mathbf{b}(\cdot,\tilde{X})}}-{{\mathbf{b}(\cdot,\hat{X})}},\lambda(\cdot)>\geq 0.
\end{equation}
Combining Eq.~(\ref{T1}) and Eq.~(\ref{T2}), as $\lambda\in{\mathbf{L}^0_+(\mathbb{D}, \mathcal{B}(\mathbb{D}), \mathbb{F})}\setminus\{0\}$ {{and the support set of $\lambda$ is denoted by $\Lambda\subset\mathbb{D}$}}, we have %\[\mathbf{E}^{Q^\mu}\left[U({{\tilde{X}_T}})\right]-\mathbf{E}^{Q^\mu}\left[U({{\hat{X}_T}})\right]= 0\mbox{ on}\ \mathbb{D}/\mathcal{N}.\]
%Assume that the support set of $\lambda$ is $\Lambda\subset\mathbb{D}$, then
\[{{\mathbf{b}(\cdot,\tilde{X})}}={{\mathbf{b}(\cdot,\hat{X})}}\ \mbox{ on } \Lambda.\]
As such, ${{\hat{X}}}$ and ${{\tilde{X}}}$ both maximize $<{{\mathbf{b}(\cdot,X)}},\lambda(\cdot)>$.

Suppose that $ {{\hat{Q}}}({{\hat{X}_T}}\neq {{\tilde{X}_T}})> 0$, then by the concavity of utility function $U(\cdot)$,
\begin{equation}\nonumber
\begin{split}
<{{\mathbf{b}\left(\cdot,\frac{\hat{X}+\tilde{X}}{2}\right)}},\lambda(\cdot)>
>\frac{1}{2}<{{\mathbf{b}(\cdot,\hat{X})}},\lambda(\cdot)>
+\frac{1}{2}<{{\mathbf{b}(\cdot,\tilde{X})}},\lambda(\cdot)>.
\end{split}
\end{equation}
As such,
\[{{\hat{Q}}}\left({{\hat{X}_T}}\neq {{\tilde{X}_T}}\right)= 0,\]
which implies
{{
\[Q^\mu\left(\hat{X}_T\neq\tilde{X}_T\right)=0,~\forall \mu\in\mathbb{D},\]
}}
and
\[{{\mathbf{b}(\cdot,\tilde{X})}}-{{\mathbf{b}(\cdot,\hat{X})}}=0 \notin \mathbf{L}^0_+(\mathbb{D}, \mathcal{B}(\mathbb{D}), \mathbb{F})\setminus\{0\},\]
which is in contradiction with the condition that ${{\hat{X}}}$ attains the maximum. Thus,
 ${{\mathbf{b}(\cdot,\hat{X})}}\in \mathbf{B}(x)$.
\vskip 5pt
{\bf\textit{ Only if side.}}  We first prove that $\mathbf{B}(x)-\mathbf{D} = \{\mathbf{b}-\mathbf{d}{\in\mathbf{L}^0_+(\mathbb{D}, \mathcal{B}(\mathbb{D}), \mathbb{F})}|\mathbf{b}\in \mathbf{B}(x), \mathbf{d} \in {\mathbf{L}^0_+(\mathbb{D}, \mathcal{B}(\mathbb{D}), \mathbb{F})}\}$ is a convex set. Assume $\mathbf{b}_i={{\mathbf{b}(\cdot,X^i)}}\in \mathbf{B}(x), \mathbf{d}_i\in{\mathbf{L}^0_+(\mathbb{D}, \mathcal{B}(\mathbb{D}), \mathbb{F})}, i=1,2$. $\forall$ $0<\theta<1$ and $\mu\in\mathbb{D}$,
\begin{equation*}
\begin{array}{llll}
\theta  \mathbf{b}_1(\mu) +(1-\theta)\mathbf{b}_2(\mu)
&=&\theta{{\mathbf{b}(\mu,X^1)}}+(1-\theta){{\mathbf{b}(\mu,X^2)}}\\
&=&\theta\mathbf{E}^{Q^\mu}\left[U\left({{X^1_T}}\right)\right]+(1-\theta)\mathbf{E}^{Q^\mu}\left[U\left({{X^2_T}}\right)\right]\\
&\leq& \mathbf{E}^{Q^\mu}\left[U\left(\theta {{X^1_T}}+(1-\theta) {{X^2_T}} \right)\right].
\end{array}
\end{equation*}
We claim that $ \mathbf{E}^{Q^{\bigcdot}}\left[U\left(\theta {{X^1_T}}+(1-\theta) {{X^2_T}}\right)\right]$ can be expressed by $\mathbf{b}_3-\mathbf{d}_3$ with $\mathbf{b}_3\in \mathbf{B}(x), \mathbf{d}_3 \in {\mathbf{L}^0_+(\mathbb{D}, \mathcal{B}(\mathbb{D}), \mathbb{F})}$.

In fact, if  $\mathbf{E}^{Q^{\bigcdot}}\left[U\left(\theta {{X^1_T}}+(1-\theta) {{X^2_T}} \right)\right]\in \mathbf{B}(x)$, let $\mathbf{d}_3=0$. Otherwise, based on Definition \ref{D1}, there exist $\mathbf{b}_3\in\mathbf{B}(x)$ and $\mathbf{d}_3\in{\mathbf{L}^0_+(\mathbb{D}, \mathcal{B}(\mathbb{D}), \mathbb{F})}\setminus\{0\}$ such that
 \[\mathbf{E}^{Q^{\bigcdot}}\left[U\left(\theta {{X^1_T}}+(1-\theta) {{X^2_T}} \right)\right]=\mathbf{b}_3-\mathbf{d}_3.\]
As such, there exists a $\mathbf{d}_4\in{\mathbf{L}^0_+(\mathbb{D}, \mathcal{B}(\mathbb{D}), \mathbb{F})}$ such that
\begin{equation}\nonumber
\theta  \mathbf{b}_1 +(1-\theta)\mathbf{b}_2=\mathbf{b}_3-\mathbf{d}_4.
\end{equation}
Then
\begin{equation}\nonumber
\begin{array}{lll}
\theta  (\mathbf{b}_1-\mathbf{d}_1) +(1-\theta)(\mathbf{b}_2-\mathbf{d}_2)
=\mathbf{b}_3-\left[\theta \mathbf{d}_1+(1-\theta)\mathbf{d}_2+\mathbf{d}_4\right]\in \mathbf{B}(x)-\mathbf{D},
\end{array}
\end{equation}
i.e., $\mathbf{B}(x)-\mathbf{D}$ is a convex set.

{{
Then we prove that $\mathbf{B}(x)-\mathbf{D}$ is closed in the topology of convergence {in measure}. %in measure.
Assume that $f_n\stackrel{\mathbb{F}}
\rightarrow f$, and $f_n\in\mathbf{B}(x)-\mathbf{D}$, $\forall n\geq 1$. There exists a sequence  $\{\mathbf{b}_n=\mathbf{b}(\cdot, X^n)\}_{n\geq1}\subset\mathbf{B}$ such that $\mathbf{b}_n-f_n\in{\mathbf{L}^0_+(\mathbb{D}, \mathcal{B}(\mathbb{D}), \mathbb{F})}$, $\forall n\geq 1$. Because $\{X^n_T\}_{n\geq1}\subset\mathcal{C}(x)$ defined in \cite{25} (see Section 3) and $\mathcal{C}(x)$ is closed in the topology of convergence almost surely under $\hat{Q}$, there exist $Z^n\in\text{conv}(X^n,X^{n+1},\dots)$, $\forall n\geq 1$, $Z_T\in\mathcal{C}(x)$ and $X\in\mathfrak{X}(x)$ such that $Z^n_T\rightarrow Z_T$, a.s. $\hat{Q}$ and $Z_T\leq X_T$, a.s. $\hat{Q}$. As such, by the concavity of $U$, we have
\begin{equation}\nonumber
\begin{split}
\mathbf{b}(\mu,X)&\geq\mathbf{E}^{Q^\mu}\left[U(Z_T)\right]\\
&\geq\liminf_{n\rightarrow\infty}\mathbf{E}^{Q^\mu}\left[U(Z^n_T)\right]\\
&\geq\liminf_{n\rightarrow\infty}\mathbf{E}^{Q^\mu}\left[U(X^n_T)\right]\\
&=\liminf_{n\rightarrow\infty}\mathbf{b}(\mu,X^n)\\
&\geq\liminf_{n\rightarrow\infty}f_n(\mu) \\%\text{ a.s. }\mathbb{F}\\
&=f(\mu) .%\text{ a.s. }\mathbb{F}.
\end{split}
\end{equation}
Noting that $\mathbf{b}(\cdot,X)\in\mathbf{H}(x)\subset\mathbf{B}(x)-\mathbf{D}$, we have $f\in\mathbf{B}(x)-\mathbf{D}$, i.e., $\mathbf{B}(x)-\mathbf{D}$ is closed in the topology of convergence {in measure}.
}}

For any ${{\mathbf{b}(\cdot,\hat{X})}}\in \mathbf{B}(x)$, define $\mathbf{M}\triangleq{{\left\{\mathbf{b}(\cdot,\hat{X})\right\}}}+{\mathbf{L}^0_+(\mathbb{D}, \mathcal{B}(\mathbb{D}), \mathbb{F})}$. Then $\mathbf{M}$ is also a convex {{closed}} set and
\[\mathbf{M}\cap (\mathbf{B}(x)-\mathbf{D})={{\left\{\mathbf{b}(\cdot,\hat{X})\right\}}}.\]
 Based on convex set separation theorem, there exists a $\lambda\in\mathbf{L}^0_+(\mathbb{D}, \mathcal{B}(\mathbb{D}), \mathbb{F})$ with $ <\lambda,\mathbb{I}_{\mathbb{D}}>=1$ such that
\begin{equation}\nonumber%{\label{T4}}
\begin{cases}
<{{\mathbf{b}(\cdot,\tilde{X})}}-{{\mathbf{b}(\cdot,\hat{X})}},\lambda(\cdot)>\leq 0,\\
<\lambda(\cdot),\mathbf{d}(\cdot)>\geq 0,
\end{cases}
\end{equation}
where ${{\mathbf{b}(\cdot,\tilde{X})}}\in \mathbf{B}(x)$ and $\mathbf{d}\in {\mathbf{L}^0_+(\mathbb{D}, \mathcal{B}(\mathbb{D}), \mathbb{F})}$ are arbitrary.  The first inequality implies
\[{{\hat{X}}} = \arg\max\limits_{{{X\in\mathfrak{X}(x)}}}\left\{ <{{\mathbf{b}(\cdot,X)}},\lambda(\cdot)>\right\},\]
and the second one implies $\lambda\in {\mathbf{L}^0_+(\mathbb{D}, \mathcal{B}(\mathbb{D}), \mathbb{F})}$.
\end{proof}

\vskip 5pt

\section{ {\bf{Proof of Proposition \ref{Prop1}.}} }\label{proof of Prop1}
\begin{proof}[{\bf{Proof of Proposition \ref{Prop1}}}]
{{For any $x>0$, we first prove that $\mathbf{b}(\cdot,xS^0)\in\mathbf{B}(x)$, %the terminal utility $U(xe^{rT})$ without risky  investment %w.r.t. the strategy $\pi(t)\equiv 0$
%is on the efficient frontier $\mathbf{B}(x)$,
i.e., $\forall X\in \mathfrak{X}(x)$, $X\neq xS^0$,}}
$\exists Q^{\mu_1} \in \mathcal{P}$ such that
\[{{\mathbf{b}(\mu_1,X)}}< U(xe^{rT}).\]
As $U(x)$ is strictly concave, for any $X\in\mathfrak{X}(x)$, we know
\begin{equation*}
{{\mathbf{b}(\mu,X)=}}\mathbf{E}^{Q^\mu}[U({{X_T}})]\leq U\left(\mathbf{E}^{Q^\mu}[{{X_T}}]\right),~\forall \mu\in\mathbb{D}.
\end{equation*}
%and the equality holds if and only if ${{X_T}}=\mathbf{E}^{Q^\mu}[{{X_T}}]~a.s.$.
Besides, ${{X_T}}$ satisfies the budget constraint $\mathbf{E}^{Q^*}[{{X_T}}]\leq xe^{rT}$. %, and we know that
%$Q=Q^r$ from Eq.~(\ref{eta}) and Eq.~(\ref{nu}).
As such, if {{$Q^* \in \mathcal{P}$, let $Q^{\mu_1}=Q^*$}}, then
\begin{equation*}
\mathbf{E}^{Q^*}[U({{X_T}})]\leq U\left(\mathbf{E}^{Q^*}[{{X_T}}]\right)\leq U(xe^{rT}),
\end{equation*}
and the equality holds if and only if ${{X_T}}=\mathbf{E}^{Q}[{{X_T}}]=xe^{rT},~a.s.$, {{which means $X=xS^0$.}} %, which proves the conclusion above.

Next, we prove that $h(x)=U(xe^{rT})$. Because $U(xe^{rT}) \in \mathbf{B}(x)$, there exists a $\lambda_0 \in \mathbf{S}$ such that
\begin{equation*}
<U(xe^{rT}),\lambda_0>~=~h(x)\rho(\lambda_0).
\end{equation*}
As $U(xe^{rT})$ is independent of $\mu$, we have
\[U(xe^{rT})<\mathbb{I}_{\mathbb{D}},\lambda_0>=h(x)\rho(\lambda_0).\]
Thus, $h(x)=U(xe^{rT})$.
\end{proof}

\vskip 5pt

{{

\section{\bf{ Proof of Proposition \ref{set}. } }\label{proof of set}
\begin{proof}[{ \bf{ Proof of Proposition \ref{set} } }]
{\bf\textit{  If side.}}
Following Definition \ref{DD2}, we have shown the fact that $\int_{\mathbb{D}}fg\mathrm{d}\mathbb{F}\leq h(x)y$ for any $f\in\mathbf{B}(x)-\mathbf{D}$, $g\in\mathbf{G}(y)$,  and the {\bf{if side}} naturally holds.

{\bf\textit{ Only if side.}}
{Let $f{\in\mathbf{L}^0_+(\mathbb{D},\mathcal{B}(\mathbb{D}),\mathbb{F})}$ and satisfy}
\begin{equation}\nonumber
\int_{\mathbb{D}}fg\mathrm{d}\mathbb{F}\leq h(x)y,\forall g\in\mathbf{G}(y).
\end{equation}
If $f\notin\mathbf{B}(x)-\mathbf{D}$, we have $\left(\mathbf{B}(x)-\mathbf{D}\right)\cap\left(\{f\}+\mathbf{L}^0_+(\mathbb{D}, \mathcal{B}(\mathbb{D}), \mathbb{F})\right)=\emptyset$. As $\mathbf{B}(x)-\mathbf{D}$ and $\{f\}+\mathbf{L}^0_+(\mathbb{D}, \mathcal{B}(\mathbb{D}), \mathbb{F})$ are both closed and convex, based on the convex set separation theorem, there exists a $\lambda\in\mathbf{S}$ such that, for any $f'\in\mathbf{B}(x)-\mathbf{D}$, $<f'-f,\lambda><0$. Noting that $\mathbf{b}(\cdot,\hat{X}^\lambda)\in \mathbf{B}(x)-\mathbf{D}$ and $\frac{y\lambda}{\rho(\lambda)}\in\mathbf{G}(y)$, we have
\begin{equation}\nonumber
\begin{split}
\int_{\mathbb{D}}f\frac{y\lambda}{\rho(\lambda)}\mathrm{d}\mathbb{F}&=<f,\frac{y\lambda}{\rho(\lambda)}>\\
&><\mathbf{b}(\cdot,\hat{X}^\lambda),\frac{y\lambda}{\rho(\lambda)}>\\
&=\frac{y}{\rho(\lambda)}h(x)\rho(\lambda)\\
&=h(x)y,
\end{split}
\end{equation}
which contradicts to the condition that $\int_{\mathbb{D}}fg\mathrm{d}\mathbb{F}\leq h(x)y,\forall g\in\mathbf{G}(y)$. As such, $f\in\mathbf{B}(x)-\mathbf{D}$.

{Let $g{\in\mathbf{L}^0_+(\mathbb{D},\mathcal{B}(\mathbb{D}),\mathbb{F})}$ and satisfy}
\begin{equation}\nonumber
\int_{\mathbb{D}}fg\mathrm{d}\mathbb{F}\leq h(x)y,\forall f\in\mathbf{B}(x)-\mathbf{D}.
\end{equation}
{Define} $g^*=\frac{g}{<g,\mathbb{I}_{\mathbb{D}}>}$. We have $\mathbf{b}(\cdot,\hat{X}^{g^*})\in\mathbf{B}(x)-\mathbf{D}$, then
\begin{equation}\nonumber
\begin{split}
\rho(g^*)&=\frac{h(x)\rho(g^*)}{h(x)}\\
&=\frac{1}{h(x)}\int_{\mathbb{D}}\mathbf{b}(\cdot,\hat{X}^{g^*})g^*\mathrm{d}\mathbb{F}\\
&=\frac{1}{h(x)<g,\mathbb{I}_{\mathbb{D}}>}\int_{\mathbb{D}}\mathbf{b}(\cdot,\hat{X}^{g^*})g\mathrm{d}\mathbb{F}\\
&\leq \frac{1}{h(x)<g,\mathbb{I}_{\mathbb{D}}>}h(x)y\\
&=\frac{y}{<g,\mathbb{I}_{\mathbb{D}}>}.
\end{split}
\end{equation}
As such, $g=<g,\mathbb{I}_{\mathbb{D}}>g^*\leq\frac{yg^*}{\rho(g^*)}$ and $<g^*,\mathbb{I}_\mathbb{D}>=1$, which means $g\in\mathbf{G}(y)$.
\end{proof}

%%%%%%%%%%%%%%%%%%%%%%%%

\section{\bf{ Proof of Theorem \ref{duality}}.}\label{proof of duality}
Observing that the results in Theorem \ref{duality} are similar to Theorem 3.1 in \cite{25} with the only difference that $x$ is replaced by $h(x)$, we break the proof into several lemmas as well. First, we prove that $\mathbf{G}(y)$ is convex and closed in the topology of convergence almost surely.
\begin{lemma}\label{l1}
For any $y>0$, $\mathbf{G}(y)$ is convex and closed in the topology of convergence in measure.
\end{lemma}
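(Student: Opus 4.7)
My plan is to reduce both claims to the bipolar-type characterization already proved in Proposition \ref{set}, which states that $g\in\mathbf{G}(y)$ if and only if $\int_{\mathbb{D}}fg\,\mathrm{d}\mathbb{F}\leq h(x)y$ for every $f\in\mathbf{B}(x)-\mathbf{D}$. This characterization is independent of $\mathbf{G}(y)$'s topological structure (it was derived from the convex separation theorem applied to the convex closed set $\mathbf{B}(x)-\mathbf{D}$), so there is no circularity in using it here.

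For convexity, take $g_1,g_2\in\mathbf{G}(y)$ and $\alpha\in[0,1]$. For any $f\in\mathbf{B}(x)-\mathbf{D}$, linearity of the integral yields
\[
\int_{\mathbb{D}}f\bigl(\alpha g_1+(1-\alpha)g_2\bigr)\,\mathrm{d}\mathbb{F}\leq\alpha h(x)y+(1-\alpha)h(x)y=h(x)y,
\]
so $\alpha g_1+(1-\alpha)g_2\in\mathbf{G}(y)$ by Proposition \ref{set}. (A direct argument is also available: writing $g_i\leq y\lambda_i/\rho(\lambda_i)$ and setting $\bar\lambda=\alpha\lambda_1/\rho(\lambda_1)+(1-\alpha)\lambda_2/\rho(\lambda_2)$, the convexity and positive homogeneity of $\rho$ established in the paragraph after Proposition \ref{Prop1} give $\rho(\bar\lambda)\leq 1$, whence $\lambda^*\triangleq\bar\lambda/\langle\bar\lambda,\mathbb{I}_{\mathbb{D}}\rangle\in\mathbf{S}$ and $\alpha g_1+(1-\alpha)g_2\leq y\bar\lambda\leq y\lambda^*/\rho(\lambda^*)$.)

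For closedness in measure, let $\{g_n\}\subset\mathbf{G}(y)$ with $g_n\to g$ in $\mathbb{F}$-measure. Convergence in measure allows us to extract a subsequence, still denoted $\{g_n\}$, with $g_n\to g$ almost surely under $\mathbb{F}$; since each $g_n\geq 0$, the limit $g$ lies in $\mathbf{L}^0_+(\mathbb{D},\mathcal{B}(\mathbb{D}),\mathbb{F})$. Fix $f\in\mathbf{B}(x)-\mathbf{D}$; then $fg_n\geq 0$ and $fg_n\to fg$ a.s.\ $\mathbb{F}$, so Fatou's lemma gives
\[
\int_{\mathbb{D}}fg\,\mathrm{d}\mathbb{F}\leq\liminf_{n\to\infty}\int_{\mathbb{D}}fg_n\,\mathrm{d}\mathbb{F}\leq h(x)y.
\]
As $f$ was arbitrary in $\mathbf{B}(x)-\mathbf{D}$, Proposition \ref{set} yields $g\in\mathbf{G}(y)$.

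The only point requiring care is the passage to the almost-sure subsequence: closedness in measure is equivalent to the statement that every a.s.-convergent subsequence of any measure-convergent sequence has its limit in $\mathbf{G}(y)$, which is exactly what the Fatou argument supplies. I do not anticipate any genuine obstacle here, since the heavy lifting—constructing the duality pairing and proving that the bipolar condition is both necessary and sufficient for membership in $\mathbf{G}(y)$—was already done in Proposition \ref{set}.
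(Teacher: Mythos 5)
Your proof is correct and rests on the same foundation as the paper's, namely the polar characterization of $\mathbf{G}(y)$ in Proposition \ref{set}. The paper proves convexity by the direct construction with the convexity and homogeneity of $\rho$ (which you reproduce in your parenthetical), and proves closedness by citing the bipolar identity $\mathbf{G}^{00}=\mathbf{G}$ from Theorem \ref{bipolarThm} and invoking the general fact that a set equal to its bipolar is closed in measure; your subsequence-plus-Fatou argument simply makes that last step explicit, and your linearity-of-the-pairing argument for convexity is a mild streamlining of the paper's $\rho$-based one. There is no circularity: Proposition \ref{set} is established before this lemma and its proof does not use any topological property of $\mathbf{G}(y)$.
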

\begin{proof}
{\bf {Convexity.}} For $g_1$, $g_2\in\mathbf{G}(y)$, there exist $\lambda_1$, $\lambda_2\in\mathbf{S}$ such that $g_1\leq \frac{y\lambda_1}{\rho(\lambda_1)}$ and $g_2\leq \frac{y\lambda_2}{\rho(\lambda_2)}$. Recall that $\rho$ is convex, and for any $\alpha\in(0,1)$, we have
\begin{equation}\nonumber
\begin{split}
\rho\left(\alpha\frac{\lambda_1}{\rho(\lambda_1)}+(1-\alpha)\frac{\lambda_2}{\rho(\lambda_2)}\right)&\leq\alpha\rho\left(\frac{\lambda_1}{\rho(\lambda_1)}\right)+(1-\alpha)\rho\left(\frac{\lambda_2}{\rho(\lambda_2)}\right)\\
&=\alpha\frac{\rho(\lambda_1)}{\rho(\lambda_1)}+(1-\alpha)\frac{\rho(\lambda_2)}{\rho(\lambda_2)}\\
&=1.
\end{split}
\end{equation}
As such, denote $\lambda^*=\alpha\frac{\lambda_1}{\rho(\lambda_1)}+(1-\alpha)\frac{\lambda_2}{\rho(\lambda_2)}$, and we have $\rho(\lambda^*)\leq 1$ and
\begin{equation}\nonumber
\begin{split}
\alpha g_1+(1-\alpha)g_2&\leq\alpha\frac{y\lambda_1}{\rho(\lambda_1)}+(1-\alpha)\frac{y\lambda_2}{\rho(\lambda_2)}\\
&=y\lambda^*\\
&\leq \frac{y\lambda^*}{\rho(\lambda^*)},
%&\leq\frac{\alpha\frac{\lambda_1}{\rho(\lambda_1)}+(1-\alpha)\frac{\lambda_2}{\rho(\lambda_2)}}{\rho\left(\alpha\frac{\lambda_1}{\rho(\lambda_1)}+(1-\alpha)\frac{\lambda_2}{\rho(\lambda_2)}\right)}.
\end{split}
\end{equation}
which means that $\alpha g_1+(1-\alpha)g_2\in\mathbf{G}(y)$, i.e., $\mathbf{G}(y)$ is convex.

%{If $U$ takes value in $(-\infty,0]$, $\rho$ becomes concave, and all of the ``$\leq$" in the proof of the convexity becomes ``$\geq$", as such, $\mathbf{G}(y)$ is still convex.}

{\bf {Closed.}} Based on the bipolar relation in Theorem \ref{bipolarThm}, we have
\begin{equation}\nonumber
\begin{split}
&\mathbf{G}^0=\frac{1}{U(e^{rT})}\mathbf{B}(1)-\mathbf{D},\\
&\left(\frac{1}{U(e^{rT})}\mathbf{B}(1)-\mathbf{D}\right)^0=\mathbf{G}.
\end{split}
\end{equation}
As such, $\mathbf{G}^{0}=\mathbf{G}$, which means that $\mathbf{G}$ is closed in the topology of convergence {in measure}. For any $y>0$, $\mathbf{G}(y)=y\mathbf{G}$, which is also closed in the topology of convergence {in measure}.

\end{proof}

Denote by $\psi^+$ and $\psi^-$ the positive and negative parts of the function $\psi$, respectively.
\begin{lemma}\label{l2}
Under the assumptions of Theorem \ref{duality}, for any $y>0$, the family $\{\psi^-(g)):g\in\mathbf{G}(y)\}$ is uniformly integrable, and if $\{g_n\}_{n\geq1}$ is a sequence in $\mathbf{G}(y)$ which converges almost surely to a random variable $g$, then $g\in\mathbf{G}(y)$ and
\begin{equation}\label{uniform}
\liminf\limits_{n\rightarrow\infty}\int_{\mathbb{D}}\psi(g_n)\mathrm{d}\mathbb{F}\geq\int_{\mathbb{D}}\psi(g)\mathrm{d}\mathbb{F}.
\end{equation}
\end{lemma}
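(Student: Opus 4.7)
The plan is to follow the strategy of Lemma~3.2 in \cite{25}, adapted to the second-order space, by splitting the argument into three parts: uniform integrability of $\psi^-(g)$ over $\mathbf{G}(y)$, closedness of $\mathbf{G}(y)$ under almost sure convergence, and lower semi-continuity of $g\mapsto\int_{\mathbb{D}}\psi(g)\,\mathrm{d}\mathbb{F}$.

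For the uniform integrability, I begin from the Fenchel inequality $\psi(g)\geq\phi(x)-xg$, valid for every $x>0$, which gives the pointwise bound $\psi^-(g)\leq(xg-\phi(x))^+$. Every $g\in\mathbf{G}(y)$ is dominated by $y\lambda/\rho(\lambda)$ with $\lambda\in\mathbf{S}$; since $\rho(\lambda)\geq 1$ (established earlier in the paper) and $\int_{\mathbb{D}}\lambda\,\mathrm{d}\mathbb{F}=1$, this yields $\int_{\mathbb{D}}g\,\mathrm{d}\mathbb{F}\leq y$, and hence a uniform $L^1$-bound on $\psi^-(g)$. To upgrade $L^1$-boundedness to uniform integrability via the de la Vall\'ee Poussin criterion, I rewrite $-\psi(g)=gI(g)-\phi(I(g))$ with $I=(\phi')^{-1}$; the Inada condition $\phi'(\infty)=0$ forces $I(g)\to 0$ as $g\to\infty$, which provides the sub-linear growth of $\psi^-$ at infinity. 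The finiteness hypothesis $u(x_0)<\infty$, combined with the dual pairing $\phi(f)\leq\psi(g)+fg$, then supplies the uniform tail control $\sup_{g\in\mathbf{G}(y)}\int_{\{g>M\}}\psi^-(g)\,\mathrm{d}\mathbb{F}\to 0$ as $M\to\infty$.

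Closedness of $\mathbf{G}(y)$ under almost sure convergence is immediate: almost sure convergence on $(\mathbb{D},\mathcal{B}(\mathbb{D}),\mathbb{F})$ implies convergence in measure, and Lemma~\ref{l1} already asserts closedness in the latter topology, so $g\in\mathbf{G}(y)$. For the lower semi-continuity inequality~(\ref{uniform}), I split $\psi(g_n)=\psi^+(g_n)-\psi^-(g_n)$; continuity of $\psi$ on $(0,\infty)$ gives $\psi^\pm(g_n)\to\psi^\pm(g)$ almost surely; Fatou's lemma applied to the non-negative sequence $\psi^+(g_n)$ yields $\liminf_n\int\psi^+(g_n)\,\mathrm{d}\mathbb{F}\geq\int\psi^+(g)\,\mathrm{d}\mathbb{F}$; and the uniform integrability of $\psi^-(g_n)$ combined with Vitali's convergence theorem gives $\int\psi^-(g_n)\,\mathrm{d}\mathbb{F}\to\int\psi^-(g)\,\mathrm{d}\mathbb{F}$. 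Subtracting delivers~(\ref{uniform}).

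The main obstacle will be the uniform integrability step in the regime where $\phi(0^+)=-\infty$ (for instance $\phi(x)=x^\gamma/\gamma$ with $\gamma<0$), so that $\psi$ is unbounded below and $\psi^-(g)$ can be large. In that regime the crude $L^1$-bound on $g$ is insufficient, and one must exhibit a convex increasing $\Phi:\mathbb{R}_+\to\mathbb{R}_+$ with $\Phi(z)/z\to\infty$ satisfying $\sup_{g\in\mathbf{G}(y)}\int\Phi(\psi^-(g))\,\mathrm{d}\mathbb{F}<\infty$. Constructing $\Phi$ from the asymptotic behaviour of $\phi$ near $0$ and transferring the primal finiteness $u(x_0)<\infty$ into a dual integrability estimate is where the Fenchel pairing between $\mathbf{B}(x)-\mathbf{D}$ and $\mathbf{G}(y)$ must be exploited most carefully; this is the delicate technical core of the proof, while the other two steps reduce to standard applications of Lemma~\ref{l1} and Vitali's theorem.
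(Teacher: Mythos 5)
Your decomposition into three steps (uniform integrability of $\psi^-$, closedness of $\mathbf{G}(y)$ via Lemma \ref{l1}, then Fatou on $\psi^+$ plus Vitali on $\psi^-$) is exactly the paper's structure, and steps two and three are carried out correctly. The gap is in step one, and you have in fact flagged it yourself: you state that one "must exhibit a convex increasing $\Phi$ with $\Phi(z)/z\to\infty$ satisfying $\sup_{g\in\mathbf{G}(y)}\int\Phi(\psi^-(g))\,\mathrm{d}\mathbb{F}<\infty$" and call its construction "the delicate technical core of the proof" without actually producing it. A proof that defers its own core is not a proof. Worse, the route you sketch for closing it --- transferring the primal finiteness $u(x_0)<\infty$ through the Fenchel pairing into a "uniform tail control" $\sup_{g}\int_{\{g>M\}}\psi^-(g)\,\mathrm{d}\mathbb{F}\to 0$ --- does not obviously work: the pointwise bound $\psi^-(g)\leq (xg-\phi(x))^+$ only controls the tail integral by $x\int_{\{g>M\}}g\,\mathrm{d}\mathbb{F}$, and $\mathbf{G}(y)$ is merely bounded in $\mathbf{L}^1(\mathbb{D},\mathcal{B}(\mathbb{D}),\mathbb{F})$, not uniformly integrable, so this quantity need not vanish uniformly.

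The missing idea is that the de la Vall\'ee--Poussin test function can be written down explicitly: take $\Phi=L\triangleq(-\psi)^{-1}$ on the range where this is defined (if $-\psi(\infty)<\infty$ then $\psi^-$ is bounded and there is nothing to prove). Since $L$ is increasing and $L(-\psi(g))=g$, one gets directly
\begin{equation*}
\int_{\mathbb{D}}L\left(\psi^-(g)\right)\mathrm{d}\mathbb{F}\leq\int_{\mathbb{D}}L\left(-\psi(g)\right)\mathrm{d}\mathbb{F}+L(0)=\int_{\mathbb{D}}g\,\mathrm{d}\mathbb{F}+L(0)\leq y+L(0)
\end{equation*}
uniformly over $g\in\mathbf{G}(y)$, and the superlinearity $L(x)/x\to\infty$ is precisely the sublinearity of $-\psi$ that you already derived from l'Hospital and the Inada condition $\phi'(\infty)=0$ (namely $\lim_{y\to\infty}y/(-\psi(y))=\lim_{y\to\infty}1/(\phi')^{-1}(y)=\infty$). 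So the uniform bound comes for free from the $\mathbf{L}^1$-bound on $\mathbf{G}(y)$ itself; no appeal to $u(x_0)<\infty$ or to the primal side is needed at this point. With that substitution your argument closes and coincides with the paper's proof.
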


\begin{proof}
Assume that $\psi(\infty)< 0$ (otherwise there is nothing to prove). Let $L:(-\psi(0),-\psi(\infty))\rightarrow (0,\infty)$ denote the inverse of $-\psi$. The function $L$ is strictly increasing, and for any $g\in\mathbf{G}(y)$,
\begin{equation}\nonumber
\begin{split}
\int_{\mathbb{D}}L\left(\psi^-(g)\right)\mathrm{d}\mathbb{F}&\leq\int_{\mathbb{D}}L\left(-\psi(g)\right)\mathrm{d}\mathbb{F}+L(0)\\
&=\int_{\mathbb{D}}g\mathrm{d}\mathbb{F}+L(0)\\
&\leq y+L(0),
\end{split}
\end{equation}
and by the Inada condition of $\phi$ and the l'Hospital rule
\begin{equation}\nonumber
\lim\limits_{x\rightarrow-\psi(\infty)}\frac{L(x)}{x}=\lim\limits_{y\rightarrow\infty}\frac{y}{-\psi(y)}=\lim\limits_{y\rightarrow\infty}\frac{1}{\left(\phi'\right)^{-1}(y)}=\infty.
\end{equation}
As $\mathbf{G}(y)$ is bounded in $\mathbf{L}^1(\mathbb{D},\mathcal{B}(\mathbb{D}),\mathbb{F})$, by applying the de la Vall\'{e}e–Poussin theorem, we have the uniform integrability of $\{\psi^-(g)):g\in\mathbf{G}(y)\}$.

Let $\{g_n\}_{n\geq 1}$ be a sequence in $\mathbf{G}(y)$ which converges almost surely to a variable $g$. It follows from the uniform integrability of the sequence $\{\psi^-(g_n))\}_{n\geq 1}$ that
\[\lim\limits_{n\rightarrow\infty}\int_{\mathbb{D}}L\left(\psi^-(g_n)\right)\mathrm{d}\mathbb{F}=\int_{\mathbb{D}}\psi^-(g)\mathrm{d}\mathbb{F}\]
and from Fatou's lemma that
\[\liminf\limits_{n\rightarrow\infty}\int_{\mathbb{D}}L\left(\psi^+(g_n)\right)\mathrm{d}\mathbb{F}\geq\int_{\mathbb{D}}\psi^+(g)\mathrm{d}\mathbb{F},\]
which implies Eq.~(\ref{uniform}). As $\mathbf{G}(y)$ is closed under convergence in probability, we know $g\in\mathbf{G}(y)$.
\end{proof}

We are now able to prove assertion (ii) of Theorem \ref{duality}.
\begin{lemma}\label{l3}
In addition to the assumptions of Theorem \ref{duality}, assume $v(y)<\infty$. Then the optimal solution $\hat{g}(y)$ to Problem~(\ref{D2}) exists and is unique. As a consequence, $v(y)$ is strictly convex on $\{v<\infty\}$.
\end{lemma}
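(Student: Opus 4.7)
The plan is to apply the direct method of the calculus of variations: extract a good minimizer from a minimizing sequence using compactness, and then exploit strict convexity of $\psi$ both for uniqueness and for strict convexity of $v$.

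First I would pick a minimizing sequence $\{g_n\}_{n\geq 1}\subset\mathbf{G}(y)$ with $\int_{\mathbb{D}}\psi(g_n)\,\mathrm{d}\mathbb{F}\downarrow v(y)$. The built-in constraint $\int g_n\,\mathrm{d}\mathbb{F}\leq y$ coming from Definition \ref{DD2} bounds $\{g_n\}$ in $L^1(\mathbb{F})$, so a Koml\'os-type argument (equivalently, the $L^0_+$ compactness lemma obtained by taking forward convex combinations) produces $\tilde g_n\in\mathrm{conv}(g_n,g_{n+1},\ldots)$ and a nonnegative $\hat g(y)$ such that $\tilde g_n\to \hat g(y)$ $\mathbb{F}$-a.s. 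Convexity of $\mathbf{G}(y)$ (Lemma \ref{l1}) gives $\tilde g_n\in\mathbf{G}(y)$, while closedness of $\mathbf{G}(y)$ under convergence in measure (also Lemma \ref{l1}) places $\hat g(y)\in\mathbf{G}(y)$.

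Next, convexity of $\psi$ yields $\int_{\mathbb{D}}\psi(\tilde g_n)\,\mathrm{d}\mathbb{F}\leq \sup_{k\geq n}\int_{\mathbb{D}}\psi(g_k)\,\mathrm{d}\mathbb{F}$, so $\limsup_n\int_{\mathbb{D}}\psi(\tilde g_n)\,\mathrm{d}\mathbb{F}\leq v(y)$. The lower-semicontinuity bound (\ref{uniform}) from Lemma \ref{l2} then gives $\int_{\mathbb{D}}\psi(\hat g(y))\,\mathrm{d}\mathbb{F}\leq \liminf_n\int_{\mathbb{D}}\psi(\tilde g_n)\,\mathrm{d}\mathbb{F}\leq v(y)$, proving that $\hat g(y)$ attains the infimum. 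Uniqueness is then immediate from the strict convexity of $\psi$ (inherited from the strict concavity of $\phi$ together with its Inada conditions): if $\hat g_1,\hat g_2\in\mathbf{G}(y)$ were two distinct minimizers, their midpoint would still lie in the convex set $\mathbf{G}(y)$ and Jensen's inequality on the set $\{\hat g_1\neq \hat g_2\}$ (of positive $\mathbb{F}$-measure) would produce a strictly smaller objective, a contradiction.

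Finally, for strict convexity of $v$ on $\{v<\infty\}$: given $y_1\neq y_2$ in $\{v<\infty\}$ and $\alpha\in(0,1)$, set $g^\star=\alpha\hat g(y_1)+(1-\alpha)\hat g(y_2)$. By the scaling $\mathbf{G}(y)=y\mathbf{G}$ and convexity of $\mathbf{G}$, $g^\star\in\mathbf{G}(\alpha y_1+(1-\alpha)y_2)$. The two minimizers $\hat g(y_i)$ cannot coincide $\mathbb{F}$-a.s. because, by the same scaling, this would force $y_1=y_2$; hence strict convexity of $\psi$ gives
\begin{equation*}
v\bigl(\alpha y_1+(1-\alpha)y_2\bigr)\leq \int_{\mathbb{D}}\psi(g^\star)\,\mathrm{d}\mathbb{F}<\alpha v(y_1)+(1-\alpha)v(y_2).
\end{equation*}
The main technical hurdle I expect is the Koml\'os-type a.s.\ convergence step: one has to make sure the limit $\hat g(y)$ really sits in $\mathbf{G}(y)$ rather than in some larger closure, and that no mass is lost along $\psi$, which is precisely what the closedness statement in Lemma \ref{l1} and the uniform integrability of $\{\psi^-(g):g\in\mathbf{G}(y)\}$ from Lemma \ref{l2} are tailored to provide.
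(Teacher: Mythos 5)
Your proposal is correct and follows essentially the same route as the paper's own proof: a minimizing sequence, forward convex combinations converging $\mathbb{F}$-a.s., the convexity of $\psi$ together with the closedness of $\mathbf{G}(y)$ and the Fatou-type bound of Lemma \ref{l2} to identify the limit as a minimizer, strict convexity of $\psi$ for uniqueness, and a convex-combination argument (the paper uses the midpoint, you use a general $\alpha$) for strict convexity of $v$. The only cosmetic difference is that you are slightly more explicit about the $L^1(\mathbb{F})$ bound justifying the Koml\'os step and about why $\hat g(y_1)\neq\hat g(y_2)$, a point the paper leaves implicit.
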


\begin{proof}
Let $\{g_n\}_{n\geq1}$ be a sequence in $\mathbf{G}(y)$ such that
\[\lim\limits_{n\rightarrow\infty}\int_{\mathbb{D}}\psi(g_n)\mathrm{d}\mathbb{F}=v(y).\]
There exists a sequence $h_n\in~\text{conv}(g_n,g_{n+1},\dots)$, $n\geq1$, and a variable $\hat{h}$ such that $h_n\rightarrow\hat{h}$ a.s. $\mathbb{F}$. From the convexity of the function,
$\psi$ we deduce that
\[\int_{\mathbb{D}}\psi(h_n)\mathrm{d}\mathbb{F}\leq\sup\limits_{m\geq n}\int_{\mathbb{D}}\psi(g_m)\mathrm{d}\mathbb{F},\]
then
\[\lim\limits_{n\rightarrow\infty}\int_{\mathbb{D}}\psi(h_n)\mathrm{d}\mathbb{F}=v(y).\]
As such
\[\int_{\mathbb{D}}\psi(\hat{h})\mathrm{d}\mathbb{F}\leq\lim\limits_{n\rightarrow\infty}\int_{\mathbb{D}}\psi(h_n)\mathrm{d}\mathbb{F}=v(y),\]
and $\hat{h}\in\mathbf{G}(y)$, which implies that $\hat{h}$ is a solution to Problem~(\ref{D2}). The uniqueness of the optimal solution follows from the strict convexity of the function $\psi$.
Using again the strict convexity of $\psi$, for $y_1<y_2$ with $v(y_1)<\infty$, note that $(\hat{h}(y_1)+\hat{h}(y_2))/2\in\mathbf{G}((y_1+y_2)/2)$, and we have
\[v\left(\frac{y_1+y_2}{2}\right)\leq\int_{\mathbb{D}}\psi\left(\frac{\hat{h}(y_1)+\hat{h}(y_2)}{2}\right)\mathrm{d}\mathbb{F}<\frac{v(y_1)+v(y_2)}{2}.\]
\end{proof}

We now turn to the proof of assertion (i) of Theorem \ref{duality}. Since the value function $u$  clearly is concave and $u(x_0)< \infty$, for some $x_0 > 0$, we have $u(x) < \infty$, for all $x > 0$.
\begin{lemma}\label{l4}
Under the assumptions of Theorem \ref{duality}, we have the fact that the value functions $u$ and $v$ are distorted conjugate, i.e., for any $x>0$, $y>0$,
\begin{equation}\label{dual22}
\begin{cases}
u(x)=\mathop {\inf}\limits_{y>0}[v(y)+h(x)y],\\
v(y)=\mathop {\sup}\limits_{x>0}[u(x)-h(x)y].
\end{cases}
\end{equation}

\end{lemma}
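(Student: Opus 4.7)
The plan is to reduce the claim to the classical Kramkov--Schachermayer abstract duality by exploiting the scaling identities established in Eq.~(\ref{setdistort}), namely $\mathbf{B}(x)-\mathbf{D}=h(x)\mathbf{B}$ and $\mathbf{G}(y)=y\mathbf{G}$. This rescaling will turn the problem into an ordinary convex conjugation on $(0,\infty)$ once the auxiliary function $\tilde u(z):=\sup_{f\in z\mathbf{B}}\int_\mathbb{D}\phi(f)\,\mathrm{d}\mathbb{F}$ is studied. Note that $u(x)=\tilde u(h(x))$ and that $v(y)=\inf_{g\in y\mathbf{G}}\int_\mathbb{D}\psi(g)\,\mathrm{d}\mathbb{F}$, so Eq.~(\ref{dual22}) is equivalent to the statement that $\tilde u$ and $v$ are classical concave--convex conjugates on $(0,\infty)$.

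First I would establish the easy (weak duality) inequality. For any $f\in\mathbf{B}(x)-\mathbf{D}$ and $g\in\mathbf{G}(y)$, pointwise Legendre gives $\phi(f)\leq\psi(g)+fg$; integrating and applying the bipolar estimate of Proposition~\ref{set} yields
\[
\int_\mathbb{D}\phi(f)\,\mathrm{d}\mathbb{F}\leq\int_\mathbb{D}\psi(g)\,\mathrm{d}\mathbb{F}+\int_\mathbb{D} fg\,\mathrm{d}\mathbb{F}\leq v(y)+h(x)y.
\]
Taking sup over $f$ and inf over $g$ gives $u(x)\leq\inf_{y>0}[v(y)+h(x)y]$ and $v(y)\geq\sup_{x>0}[u(x)-h(x)y]$ simultaneously, which is half of both identities in Eq.~(\ref{dual22}).

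For the reverse inequalities I would verify that $\tilde u$ is concave and upper semi-continuous on $(0,\infty)$, so that the Fenchel--Moreau theorem applies. Concavity follows from a standard rescaling trick: given $z_1,z_2>0$, $\alpha\in(0,1)$, near-optimal $f_i\in z_i\mathbf{B}$, set $z^*=\alpha z_1+(1-\alpha)z_2$ and $\bar f=(\alpha f_1+(1-\alpha)f_2)/z^*\cdot z^*=\alpha f_1+(1-\alpha)f_2$; convexity of $\mathbf{B}$ together with the scaling ensures $\bar f\in z^*\mathbf{B}$, and concavity of $\phi$ gives $\int\phi(\bar f)\,\mathrm{d}\mathbb{F}\geq\alpha\int\phi(f_1)\,\mathrm{d}\mathbb{F}+(1-\alpha)\int\phi(f_2)\,\mathrm{d}\mathbb{F}$. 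Upper semi-continuity will follow from closedness of $\mathbf{B}$ (Theorem~\ref{bipolarThm}) and a Fatou-type argument analogous to Lemma~\ref{l2}, applied after passing to a convex combination by Komlos' lemma.

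The main obstacle will be the computation of the concave conjugate $\tilde u^\ast(y):=\sup_{z>0}[\tilde u(z)-zy]$ and the identification $\tilde u^\ast(y)=-v(y)$. Rewriting
\[
\tilde u^\ast(y)=\sup_{f\in\mathbf{L}^0_+(\mathbb{D},\mathcal{B}(\mathbb{D}),\mathbb{F})}\Bigl[\int_\mathbb{D}\phi(f)\,\mathrm{d}\mathbb{F}-y\sup_{g\in\mathbf{G}}\int_\mathbb{D} fg\,\mathrm{d}\mathbb{F}\Bigr]
=\sup_f\inf_{g\in\mathbf{G}}\int_\mathbb{D}[\phi(f)-yfg]\,\mathrm{d}\mathbb{F},
\]
where I used that $z(f):=\sup_{g\in\mathbf{G}}\int fg\,\mathrm{d}\mathbb{F}$ is the smallest $z$ with $f\in z\mathbf{B}$ by Theorem~\ref{bipolarThm}. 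A minimax interchange (Sion's theorem, invoked with the convex-compact structure obtained after localizing by level sets and using the $\mathbf{L}^1$-boundedness of $\mathbf{G}$ established in Lemma~\ref{l2}) yields $\tilde u^\ast(y)=\inf_{g\in\mathbf{G}}\sup_{f\geq 0}\int[\phi(f)-yfg]\,\mathrm{d}\mathbb{F}=\inf_{g\in\mathbf{G}}\int\psi(yg)\,\mathrm{d}\mathbb{F}=v(y)$, where the pointwise sup is $\psi(yg)$ by definition of the Legendre transform. Combining Fenchel--Moreau with the identification $\tilde u^\ast=-v$, undoing the substitution $z=h(x)$, and using the easy inequality from step one closes both identities in Eq.~(\ref{dual22}). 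The delicate point throughout is justifying the minimax interchange and the upper semi-continuity of $\tilde u$ without imposing any compactness on $\mathbf{B}$; I expect this to require the same forward-convex combination / Komlos argument used in Lemma~\ref{l3}.
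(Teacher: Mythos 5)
Your plan is sound and, at its core, follows the same route as the paper: both arguments hinge on a minimax interchange between the primal variable $f$ and the dual variable $g\in\mathbf{G}(y)$, with the inner optimizations collapsing to $\phi$ and its Legendre transform $\psi$. The packaging differs. The paper proves only the identity $v(y)=\sup_{x>0}[u(x)-h(x)y]$ (the other follows since $u$ is concave and finite, hence continuous, and $h$ is invertible), and it does so by applying the minimax theorem on the truncated, $\sigma(L^\infty,L^1)$-compact sets $\mathscr{B}_n=\{0\le f\le n\}$ and then letting $n\to\infty$ on both sides. You instead rescale via Eq.~(\ref{setdistort}) to an auxiliary function $\tilde u(z)=\sup_{f\in z\mathbf{B}}\int\phi(f)\,\mathrm{d}\mathbb{F}$ and invoke Fenchel--Moreau. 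This is a legitimate alternative framing, and your weak-duality step and your concavity argument for $\tilde u$ are correct; note also that concavity plus finiteness of $\tilde u$ on the open interval $(0,\infty)$ already gives continuity, so the separate upper semi-continuity/Koml\'os argument you anticipate is unnecessary.

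The one place where your plan has a real hole is the minimax interchange itself, which you defer with a parenthetical appeal to Sion's theorem. Compactness must sit on the $f$-side: the $\mathbf{L}^1$-boundedness of $\mathbf{G}$ does not make $\mathbf{G}$ weakly compact, so after ``localizing by level sets'' you obtain the identity only for the truncated problem, whose dual side is $v_n(y)=\inf_{g\in\mathbf{G}(y)}\int\psi_n(g)\,\mathrm{d}\mathbb{F}$ with $\psi_n(y)=\sup_{0<x\le n}[\phi(x)-xy]$. Passing to the limit $n\to\infty$ inside the infimum is not automatic ($\inf$ and $\lim$ do not commute), and showing $v_n(y)\to v(y)$ is precisely where the paper spends most of its effort: it takes near-optimizers $g_n$, passes to forward convex combinations converging a.s., uses the closedness of $\mathbf{G}(y)$ and the uniform integrability of $\{\psi^-(g)\}$ from Lemma \ref{l2}, and applies Fatou. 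You flag this as ``the delicate point,'' which is the right instinct, but as written the proposal does not supply the argument; without it the chain $\tilde u^\ast(y)=\inf_{g\in\mathbf{G}}\int\psi(yg)\,\mathrm{d}\mathbb{F}$ is unproved. A minor slip: you write $\tilde u^\ast(y)=-v(y)$ in one place and $\tilde u^\ast(y)=v(y)$ in the computation; the latter is the correct sign given your definition of $\tilde u^\ast$.
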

}}

\begin{proof}
Because $h(x)$ is invertible, the two equations in Eq.~(\ref{dual22}) are equivalent, and we only prove the latter one.

For $n>0$, we define $\mathscr{B}_n$ as follows:
\begin{equation}\nonumber
\mathscr{B}_n=\{f:0\leq f\leq n\}.
\end{equation}
The sets $\mathscr{B}_n$, $n>0$, are $\sigma(L^\infty,L^1)$-compact. %and based on the proof of Theorem \ref{TT1},
Noting that $\mathbf{G}(y)$ is a closed convex subset of $\mathbf{L}^1(\mathbb{D},\mathcal{B}(\mathbb{D}),\mathbb{F})$. %$\forall f\in \mathbf{B}(x)-\mathbf{D}$, function $\int_{\mathbb{D}}[\psi\!\circ\! g(\mu)+f(\mu)g(\mu)]\mathrm{d}F(\mu)$ of $g \in \mathscr{B}_n$ is convex and continuous, and  $\forall g \in \mathscr{B}_n$, function $\int_{\mathbb{D}}[\psi\!\circ\! g(\mu)+f(\mu)g(\mu)]\mathrm{d}F(\mu)$ of $f\in \mathbf{B}(x)-\mathbf{D}$ is concave.
Then, using mini-max theorem (see  Theorem 45.8 in \cite{maxmin}), we have, for fixed $n$,
\begin{equation}\nonumber
\sup_{f \in \mathscr{B}_n}\inf_{g\in \mathbf{G}(y)}\left\{\int_{\mathbb{D}}\left[\phi(f)-fg\right]\mathrm{d}\mathbb{F}\right\}
=\inf_{g\in \mathbf{G}(y)}\sup_{f\in\mathscr{B}_n}\left\{\int_{\mathbb{D}}
\left[\phi(f)-fg\right]\mathrm{d}\mathbb{F}\right\}.
\end{equation}
%\begin{equation}\nonumber
%\inf_{g \in \mathscr{B}_n}\!\sup_{f\in \mathbf{B}(x)-\mathbf{D}}\!\!\left\{\!\int_{\mathbb{D}}%\!\!\left[\psi\!\!\circ\!\! g(\mu)\!\!+\!\! f\!(\mu)g(\mu)\right]\!\mathrm{d}\!F(\mu)\!\right\}
%\!\!=\!\!\!\!\!\!\!\!\sup_{f\in \mathbf{B}(x)-\mathbf{D}}\!\inf_{g\in\mathscr{B}_n}\!\!\left\{\!\int_{\mathbb{D}}
%\!\!\left[\psi\!\!\circ\!\! g(\mu)\!\!+\!\!f\!(\mu)g(\mu)\right]\!\mathrm{d}\!F(\mu)\!\right\}\!.
%\end{equation}
Noting
\begin{equation}\nonumber
%\mathop{\cup}\limits_{y>0}\mathbf{G}(y)=\mathbf{L}^1(\mathbb{D},\mathcal{B}(\mathbb{D}),\mathbb{F})%
%\frac{n}{\rho(\mathbb{I}_{\mathbb{D}})}=
\frac{n\mathbb{I}_{\mathbb{D}}}{\rho(\mathbb{I}_{\mathbb{D}})}\in\mathbf{G}(n),
\end{equation}
 we know
\begin{equation}\nonumber
\left\{
	\begin{split}
		&\mathscr{B}_n\subset\rho(\mathbb{I}_{\mathbb{D}})\mathbf{G}(n)=\mathbf{G}(n\rho(\mathbb{I}_{\mathbb{D}})),\\
		&\mathscr{B}_n\subset\mathbf{B}\left(h^{-1}\left(\frac{n}{\rho(\mathbb{I}_{\mathbb{D}})}\right)\right)-\mathbf{D}.
	\end{split}
\right.
\end{equation}

Based on Proposition \ref{set},
\begin{equation}\nonumber
\begin{split}
&\lim_{n\rightarrow\infty}\sup_{f \in \mathscr{B}_n}\inf_{g\in \mathbf{G}(y)}\left\{\int_{\mathbb{D}}\left[\phi(f)-fg\right]\mathrm{d}\mathbb{F}\right\}\\
&=\sup_{x>0}\sup_{f\in \mathbf{B}(x)-\mathbf{D}}\left\{\int_{\mathbb{D}}\left[\phi(f)-h(x)y\right]\mathrm{d}\mathbb{F}\right\}\\
&=\sup_{x>0}[u(x)-h(x)y].
\end{split}
\end{equation}
Besides,
\begin{equation}\nonumber
\inf_{g\in \mathbf{G}(y)}\sup_{f\in\mathscr{B}_n}\left\{\int_{\mathbb{D}}
\left[\phi(f)-fg\right]\mathrm{d}\mathbb{F}\right\}=\inf_{g\in \mathbf{G}(y)}\left\{\int_{\mathbb{D}}\psi_n(g)\mathrm{d}\mathbb{F}\right\}\triangleq v_n(y),
\end{equation}
where
\[\psi_n(y)=\sup_{0<x\leq n}[\phi(x)-xy].\]
Consequently, it is sufficient to show
\begin{equation}\label{F5}
\lim_{n\rightarrow\infty}v_n(y)=\lim_{n\rightarrow\infty}\sup_{g\in \mathbf{G}(y)}\left\{\int_{\mathbb{D}}\psi_n(g)\mathrm{d}\mathbb{F}\right\}=v(y),~~y>0.
\end{equation}
Evidently, $\{v_n\}_{n\geq 1}$ is a increasing sequence, and $v_n\leq v$ for $n\geq 1$. Let $\{g_n\}_{n\geq 1}\subset\mathbf{B}(x)-\mathbf{D}$ such that
\begin{equation}\nonumber
\lim_{n\rightarrow\infty}\int_{\mathbb{D}}\psi_n(g_n)\mathrm{d}\mathbb{F}=\lim_{n\rightarrow\infty}v_n(y).
\end{equation}

As $\{g_n\}_{n\geq1}$ is a sequence of non-negative random variables, there always exists a sequence $h_n\in conv(g_n,g_{n+1},\dots)$ such that  $h_n\mathop{\longrightarrow} \limits^{a.s.} h$. As $\mathbf{G}(y)$ is closed, we have $h\in \mathbf{G}(y)$. For $y\geq \left(\phi'\right)^{-1}(1)\geq \left(\phi'\right)^{-1}(n)$, $\psi_n(y)=\psi(y)$. Based on Lemma \ref{l2}, the sequence $\{\psi_n^-(h_n)\}_{n\geq 1}$ is uniformly integrable. The convexity of $\psi_n$ and Fatou's lemma imply
\begin{equation}\nonumber
\lim_{n\rightarrow\infty}\int_{\mathbb{D}}\psi_n(g_n)\mathrm{d}\mathbb{F}\geq\liminf_{n\rightarrow\infty}\int_{\mathbb{D}}\psi_n(h_n)\mathrm{d}\mathbb{F}\geq\int_{\mathbb{D}}\psi(h)\mathrm{d}\mathbb{F}\geq v(y).
\end{equation}
Thus  Eq.~(\ref{F5}) follows.
\end{proof}

%%%%%%%%%%%
{{
\begin{lemma}\label{l5}
Under the assumptions of Theorem \ref{duality}, we have
\begin{equation}\label{i1}%\nonumber
\mathop {\lim}\limits_{x\rightarrow 0}u'(x)=~\infty,~~\mathop {\lim}\limits_{y\rightarrow \infty}v'(y)=~0.
\end{equation}
\end{lemma}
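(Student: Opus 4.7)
The plan is to exploit the fact that $\tilde u(z):=u(h^{-1}(z))$ and $v(y)$ form a classical Legendre-conjugate pair by Eq.~(\ref{dualnormal}), so that both limits can be reduced to the standard Inada-driven identities of Lemma 3.6 in \cite{25} applied to the pair $(\tilde u,v)$. I will first establish $v'(\infty-)=0$ and then read off $u'(0+)=+\infty$ via a chain-rule/envelope calculation, invoking the Inada property of the distortion $h$.

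For $v'(\infty-)=0$, strict convexity of $v$ on $\{v<\infty\}$ (Lemma \ref{l3}) makes $v'$ well defined and strictly increasing there, and since enlarging $y$ strictly enlarges the feasible set $\mathbf G(y)=y\mathbf G$, $v$ is non-increasing in $y$; hence the limit $v'(\infty-)\in[-\infty,0]$ exists. To rule out $v'(\infty-)<0$ I would argue by duality on a sequence $y_n\uparrow\infty$: by Lemma \ref{l3} the unique minimiser $\hat g(y_n)\in\mathbf G(y_n)$ exists, and the Inada condition $\phi'(0)=+\infty$ (equivalently $\psi'(\infty-)=0$, since $\psi=\phi^{*}$) together with the uniform integrability of $\{\psi^-(g):g\in\mathbf G(y_n)\}$ from Lemma \ref{l2} let me compare $v(y_n)/y_n=\int\psi(\hat g(y_n))/y_n\,\mathrm d\mathbb F$ with the asymptotic slope $\psi'(\infty-)=0$. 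A Koml\'os-type convex-combination trick on $\{\hat g(y_n)/y_n\}$, followed by Fatou on the positive part and dominated convergence on the negative part, then forces $\lim_{y\to\infty}v(y)/y=0$; since $v$ is convex, $v'(\infty-)$ agrees with this limiting slope, giving $v'(\infty-)=0$.

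For $u'(0+)=+\infty$ I would apply the envelope theorem to the distorted-conjugate formula $u(x)=\inf_{y>0}\bigl[v(y)+h(x)y\bigr]$. Strict convexity of $v$ supplies a unique minimiser $\hat y(x)$ satisfying the first-order condition $-v'(\hat y(x))=h(x)$; as $h$ is strictly increasing and $-v'$ strictly decreasing, $\hat y(\cdot)$ is monotone decreasing, so $\hat y(x)\ge\hat y(x_0)>0$ for all $0<x\le x_0$. Differentiating the envelope identity yields $u'(x)=h'(x)\,\hat y(x)$. Because $h$ is $C^1$, strictly concave, and satisfies the Inada condition inherited from $J(x,\lambda)=h(x)\rho(\lambda)$ and \cite{25}, we have $h'(0+)=+\infty$; combining this with the uniform positive lower bound on $\hat y(x)$ near $0$ immediately gives $u'(0+)=+\infty$.

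The principal obstacle will be the first step, namely the rigorous passage from the Inada condition on $\phi$ to $v'(\infty-)=0$: mere pointwise divergence of $\hat g(y_n)$ is not enough, since $\psi$ need not be bounded above near $0$ when $\phi$ is unbounded, and the normalisation $\hat g(y_n)/y_n$ need not belong to any fixed $\mathbf G(y)$. I expect to handle this by using Proposition \ref{set} to keep the normalised variables in $\mathbf G$, then extracting a Koml\'os-type almost-sure limit, and finally invoking Lemma \ref{l2} to transfer Fatou's lemma through the compositions with $\psi$; the second limit $u'(0+)=+\infty$ will then follow by a comparatively soft envelope argument.
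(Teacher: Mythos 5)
Your outline is correct and follows essentially the same two-step route as the paper: first prove $\lim_{y\to\infty}v'(y)=0$, then transfer to $\lim_{x\to 0}u'(x)=\infty$ using the distorted conjugacy together with $h(0)<\infty$ and $h'(0)=\infty$; your envelope computation $u'(x)=h'(x)\hat y(x)$ with $-v'(\hat y(x))=h(x)$ and the monotone lower bound on $\hat y(x)$ is exactly the content of the paper's terser remark that the two assertions are equivalent via the generalized-inverse relations between $-v'$ and $u'/h'$. The one place you diverge is the step you yourself flag as the principal obstacle, and there the paper's argument is much lighter than what you anticipate: no minimisers $\hat g(y_n)$, no Koml\'os extraction, and no Fatou/uniform-integrability bookkeeping are needed. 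Since $-\psi'(z)=(\phi')^{-1}(z)\to 0$ as $z\to\infty$ (the Inada condition on $\phi$), for every $\epsilon>0$ there is $C(\epsilon)$ with $-\psi(z)\le C(\epsilon)+\epsilon z$ for all $z>0$; combined with the fact that $\int_{\mathbb{D}}g\,\mathrm{d}\mathbb{F}\le y$ for every $g\in\mathbf{G}(y)$, this yields $0\le -v(y)/y=\sup_{g\in\mathbf{G}(y)}\int_{\mathbb{D}}(-\psi(g)/y)\,\mathrm{d}\mathbb{F}\le C(\epsilon)/y+\epsilon$, uniformly over the feasible set, and since $-v$ is concave and non-decreasing its asymptotic slope $\lim_{y\to\infty}(-v(y)/y)$ equals $-v'(\infty)$, giving $-v'(\infty)=0$ directly. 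Working with the asymptotic slope also sidesteps a small imprecision in your write-up: strict convexity of $v$ on $\{v<\infty\}$ does not by itself make $v'$ well defined pointwise, whereas the limit of the difference quotients $-v(y)/y$ is unambiguous. With that substitution your argument is complete and coincides with the paper's proof.
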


\begin{proof}
By the duality relation (\ref{dual22}), the derivatives $u'$ and $v'$ of the value functions $u$ and $v$ satisfy
\begin{equation}\nonumber
\begin{split}
&-v'(y)=\inf\left\{h(x):\frac{u'(x)}{h'(x)}\leq y\right\},~y>0,\\
&\frac{u'(x)}{h'(x)}=\inf\left\{y:-v'(y)\leq h(x)\right\},~x>0.
\end{split}
\end{equation}
As $h(0)<\infty$ and $h'(0)=\infty$, it follows that the assertions of Eq.~(\ref{i1}) are equivalent. We shall prove the second one. The function $-v$ is concave and increasing. Hence there is a finite positive limit
\[-v'(\infty)\triangleq\lim\limits_{y\rightarrow\infty}-v'(y).\]
Because the function $-\psi$ is increasing and $-\psi'(y)=\left(\phi'\right)^{-1}(y)$ tends to $0$ as $y$ tends to $\infty$, for any $\epsilon > 0$ there exists a number $C(\epsilon)$ such that
\[-\psi(y)\leq C(\epsilon)+\epsilon y~\text{     }~\forall y>0.\]
As $\mathbf{G}(y)$ is bounded in $\mathbf{L}^1(\mathbb{D},\mathcal{B}(\mathbb{D}),\mathbb{F})$, we have
\begin{equation}\nonumber
\begin{split}
0&\leq-v'(\infty)=\lim\limits_{y\rightarrow\infty}\frac{-v(y)}{y}=\lim\limits_{y\rightarrow\infty}\sup\limits_{g\in\mathbf{G}(y)}\int_{\mathbb{D}}\frac{-\psi(g)}{y}\mathrm{d}\mathbb{F}\\
&\leq\lim\limits_{y\rightarrow\infty}\sup\limits_{g\in\mathbf{G}(y)}\int_{\mathbb{D}}\left[\frac{C(\epsilon)+\epsilon g}{y}\right]\mathrm{d}\mathbb{F}
\leq\lim\limits_{y\rightarrow\infty}\sup\limits_{g\in\mathbf{G}(y)}\int_{\mathbb{D}}\left[\frac{C(\epsilon)}{y}+\epsilon\right]\mathrm{d}\mathbb{F}=\epsilon.
\end{split}
\end{equation}
Consequently, $-v'(\infty)=0$.
\end{proof}

In the setting of Theorem \ref{duality}, we show the following
results for later use.

\begin{lemma}\label{l6}
Under the assumptions of Theorem \ref{duality}, let $\{y_n\}_{n\geq 1}$ be a sequence of positive numbers which converges to a number $y>0$ and satisfies  $v(y_n)<\infty$ and $v(y)<\infty$. Then $\hat{g}(y_n)$ converges to $\hat{g}(y)$ in probability and $\psi\left(\hat{g}(y_n)\right)$ converges to $\psi\left(\hat{g}(y)\right)$ in $\mathbf{L}^1(\mathbb{D},\mathcal{B}(\mathbb{D}),\mathbb{F})$. %, where $\mathcal{B}(\mathbb{D})$ is the Borel $\sigma$-algebra of $\mathbb{D}$ and $\mathbb{F}$ is the probability measure generated by $F(\mu)$.
\end{lemma}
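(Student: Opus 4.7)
The plan is to mirror the classical Kramkov--Schachermayer continuity-of-minimizer argument, adapted to our $\mathbb{F}$-setting, via the convex-combination trick and the strict convexity of $\psi$. First, I would note that by the strict convexity of $v$ on $\{v<\infty\}$ established in Lemma \ref{l3}, together with convexity of $v$ on $(0,\infty)$, the function $v$ is continuous at $y$, so $v(y_n)\to v(y)$. Next, set $h_n=\tfrac{1}{2}(\hat g(y)+\hat g(y_n))$. Since $\mathbf{G}(\cdot)$ is convex in the homogeneity sense $\mathbf{G}(y)=y\mathbf{G}$ (Eq.~(\ref{setdistort})), one checks that $h_n\in\mathbf{G}\bigl(\tfrac{y+y_n}{2}\bigr)$. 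Using the definition of $v$ and the convexity of $\psi$,
\begin{equation*}
v\!\left(\tfrac{y+y_n}{2}\right)\leq \int_{\mathbb{D}}\psi(h_n)\,\mathrm{d}\mathbb{F}\leq \tfrac{1}{2}\bigl(v(y)+v(y_n)\bigr),
\end{equation*}
and both bounds converge to $v(y)$, which forces $\int_{\mathbb{D}}\psi(h_n)\,\mathrm{d}\mathbb{F}\to v(y)$.

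For the convergence in probability, I would introduce the non-negative convexity defect
\begin{equation*}
\xi_n\triangleq \tfrac{1}{2}\bigl[\psi(\hat g(y))+\psi(\hat g(y_n))\bigr]-\psi(h_n)\geq 0.
\end{equation*}
Integrating, $\int \xi_n\,\mathrm{d}\mathbb{F}=\tfrac{1}{2}(v(y)+v(y_n))-\int \psi(h_n)\,\mathrm{d}\mathbb{F}\to 0$, so $\xi_n\to 0$ in $L^1(\mathbb{F})$. Along any subsequence one can extract a further subsequence with $\xi_{n_k}\to 0$ almost surely. At each point $\mu$ where this holds, the pointwise inequality
$\tfrac{1}{2}[\psi(a)+\psi(b_k)]-\psi(\tfrac{a+b_k}{2})\to 0$ with $a=\hat g(y)(\mu)$ fixed forces $b_k=\hat g(y_{n_k})(\mu)$ to stay bounded (otherwise the defect stays bounded below by $\tfrac{1}{2}(\psi(a)-\psi(\infty))>0$, using the asymptotic flatness $-\psi'(y)=(\phi')^{-1}(y)\to 0$ from the Inada condition on $\phi$) and then by strict convexity every limit point must equal $a$. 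Since this holds along every subsequence, $\hat g(y_n)\to \hat g(y)$ in $\mathbb{F}$-measure. The delicate point here, and what I expect to be the main obstacle, is precisely ruling out escape to $+\infty$ on a set of positive measure: this must be handled quantitatively, e.g.\ by the observation that on $\{\hat g(y_n)\geq N\}$ the defect $\xi_n$ is bounded below by a quantity that does not vanish uniformly in $N$ under the standing Inada condition.

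For the $L^1$-convergence of $\psi(\hat g(y_n))$, I would reassemble the identity $\psi(\hat g(y_n))=2\psi(h_n)-\psi(\hat g(y))+2\xi_n$ and treat the two new pieces separately. Since $\hat g(y_n)\in \mathbf{G}(y_n)\subset \mathbf{G}(Y)$ for some $Y=\sup_n y_n<\infty$ and likewise $h_n\in\mathbf{G}(Y)$, Lemma \ref{l2} yields uniform integrability of $\{\psi^-(h_n)\}$ and $\{\psi^-(\hat g(y_n))\}$. Combined with convergence in probability $h_n\to \hat g(y)$ (inherited from step two) and continuity of $\psi$, Vitali's theorem gives $\psi^-(h_n)\to\psi^-(\hat g(y))$ in $L^1(\mathbb{F})$; together with the already-established integral convergence $\int \psi(h_n)\,\mathrm{d}\mathbb{F}\to\int\psi(\hat g(y))\,\mathrm{d}\mathbb{F}$, a Scheffé-type argument upgrades $\psi^+(h_n)\to\psi^+(\hat g(y))$ in $L^1$ as well, hence $\psi(h_n)\to\psi(\hat g(y))$ in $L^1$. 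Since $\xi_n\to 0$ in $L^1$, the identity above yields $\psi(\hat g(y_n))\to \psi(\hat g(y))$ in $L^1(\mathbb{F})$, completing the proof.
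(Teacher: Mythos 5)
Your proof is correct, and while it rests on the same two pillars as the paper's --- the midpoint $h_n=\frac{1}{2}(\hat g(y)+\hat g(y_n))$ together with strict convexity of $\psi$, and the uniform integrability supplied by Lemma \ref{l2} --- the execution of both halves is genuinely different. For convergence in probability, the paper argues by contradiction: using the $\mathbf{L}^1$-bound on $\mathbf{G}(y)$ it finds a set of measure greater than $\epsilon$ on which both minimizers are bounded by $1/\epsilon$ yet differ by more than $\epsilon$, extracts a uniform convexity defect $\delta$ there, and then uses a convex-combination (Koml\'os-type) argument to manufacture an element of $\mathbf{G}(y)$ whose value undercuts $v(y)$ by $\delta^2$. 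You instead argue directly: the sandwich $v(\tfrac{y+y_n}{2})\leq\int_{\mathbb{D}}\psi(h_n)\,\mathrm{d}\mathbb{F}\leq\tfrac{1}{2}(v(y)+v(y_n))$ forces the integrated defect $\int_{\mathbb{D}}\xi_n\,\mathrm{d}\mathbb{F}\to 0$, and you finish pointwise along a.s.-convergent subsequences. Your treatment of escape to $+\infty$ replaces the paper's Chebyshev localization and is sound, but it is cleaner to state it as a limit: for fixed finite $a$ the defect $\tfrac{1}{2}[\psi(a)+\psi(b)]-\psi(\tfrac{a+b}{2})$ tends to $\tfrac{1}{2}(\psi(a)-\psi(\infty))>0$ as $b\to\infty$ (monotone convergence applied to $-\psi'=(\phi')^{-1}\downarrow 0$), rather than the somewhat opaque ``does not vanish uniformly in $N$'' formulation. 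For the $\mathbf{L}^1$ statement the paper is more economical: once convergence in probability is known, uniform integrability of $\{\psi^-(\hat g(y_n))\}$ together with $v(y_n)\to v(y)$ already yields the conclusion, so your detour through the identity $\psi(\hat g(y_n))=2\psi(h_n)-\psi(\hat g(y))+2\xi_n$ and the Vitali--Scheff\'e upgrade for $\psi(h_n)$, while valid, is not needed.
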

\begin{proof}
If $\hat{g}(y_n)$ does not converge to $\hat{g}(y)$ in probability, then there exists
$\epsilon > 0$ such that
\[\limsup\limits_{n\rightarrow\infty}\mathbb{F}(|\hat{g}(y_n)-\hat{g}(y)|>\epsilon)>\epsilon.\]
Moreover, as $U(e^{rT})\in\mathbf{B}(1)-\mathbf{D}$ and the bipolar relation (\ref{bipolar}), we know that $\int_{\mathbb{D}}\hat{g_n}\mathrm{d}\mathbb{F}\leq1$ and $\int_{\mathbb{D}}\hat{g}\mathrm{d}\mathbb{F}\leq1$, then by possibly passing to a smaller $\epsilon > 0$, we  assume that
\begin{equation}\label{epsilon}
\limsup\limits_{n\rightarrow\infty}\mathbb{F}\left(|\hat{g}(y_n)+\hat{g}(y)|<1/\epsilon;|\hat{g}(y_n)-\hat{g}(y)|>\epsilon\right)>\epsilon.
\end{equation}
Define
\[g_n=\frac{1}{2}(\hat{g}(y_n)+\hat{g}(y)),~n\geq 1.\]
By the convexity of $\psi$, we have
\[\psi(h_n)\leq\frac{1}{2}\left[\psi(\hat{g}(y_n))+\psi(\hat{g}(y))\right],\]
and by Eq.~(\ref{epsilon}), we deduce the existence of $\delta>0$ such that
\[\limsup\limits_{n\rightarrow\infty}\mathbb{F}\left\{\psi(h_n)\leq\frac{1}{2}\left[\psi(\hat{g}(y_n))+\psi(\hat{g}(y))\right]-\delta\right\}>\delta.\]
As such
\begin{equation}\nonumber
\begin{split}
\int_{\mathbb{D}}\psi(g_n)\mathrm{d}\mathbb{F}&\leq\frac{1}{2}\left[\int_{\mathbb{D}}\psi(\hat{g}(y_n))\mathrm{d}\mathbb{F}+\int_{\mathbb{D}}\psi(\hat{g})(y))\mathrm{d}\mathbb{F}\right]-\delta^2\\
&=\frac{1}{2}\left(v(y_n)+v(y)\right)-\delta^2.
\end{split}
\end{equation}
The function $v$ is convex and therefore continuous on the set $\{v<\infty\}$. It follows that
\[\limsup\limits_{n\rightarrow\infty}\int_{\mathbb{D}}\psi(g_n)\mathrm{d}\mathbb{F}\leq v(y)-\delta^2.\]
% if the statement of convergence in probability does not hold, then there exists $\epsilon>0$ such that
%\begin{equation}\nonumber
%\limsup_{n\rightarrow+\infty}\int_{\mathbb{D}}\psi\!\circ\! g_n(\mu)\mathrm{d}F(\mu)\leq v(y)-\epsilon^2,
%\end{equation}
Then there exists a sequence $h_n\in\text{conv}(g_n,g_{n+1},\dots)$, $n\geq1$, which converges almost surely to a variable $h$. It follows from Lemma \ref{l2} and the convexity of $\psi$ that $h\in\mathbf{G}(y)$ and
\begin{equation}\nonumber
\begin{split}
\int_{\mathbb{D}}\psi(h)\mathrm{d}\mathbb{F}&=\int_{\mathbb{D}}\liminf\limits_{n\rightarrow\infty}\psi(h_n)\mathrm{d}\mathbb{F}\leq\liminf\limits_{n\rightarrow\infty}\int_{\mathbb{D}}\psi(h_n)\mathrm{d}\mathbb{F}\\
&{\leq\liminf\limits_{n\rightarrow\infty}\int_{\mathbb{D}}\psi(g_n)\mathrm{d}\mathbb{F}}\leq v(y)-\delta^2,
\end{split}
\end{equation}
which contradicts with the definition of $v(y)$. As such, $\hat{g}(y_n)$ converges to $\hat{g}(y)$ in probability. %$\mathbf{L}^1(\mathbb{D},\mathcal{B}(\mathbb{D}),\mathbb{F})$ is proved by the convergence in probability and the fact that the family $\{\phi^+\!\circ\! f\}_{f\in\mathbf{B}(x)-\mathbf{D}}$ is uniformly integrable in Lemma \ref{uni}.
Lemma \ref{l2} shows that $\left\{\psi^-\left(\hat{g}(y_n)\right)\right\}_{n\geq1}$ is uniformly integrable. Consequently, $\psi(\hat{g}(y_n))$ converges to $\psi(\hat{g}(y))$ in $\mathbf{L}^1(\mathbb{D},\mathcal{B}(\mathbb{D}),\mathbb{F})$ if
\[\lim\limits_{n\rightarrow\infty}\int_{\mathbb{D}}\psi(\hat{g}(y_n))\mathrm{d}\mathbb{F}\leq \psi(\hat{g}(y)),\]
which in turn follows from the continuity of the value function $v$ on the set $\{v<\infty\}$.
\end{proof}

Based on Lemma \ref{l4}, the continuous differentiability and the concavity of $u$ follow from the strict convexity of $v$ on $\{v<\infty\}$ by general duality results, which prove the rest of Theorem \ref{duality}.

}}

{{

\section{ { \bf{Proof of Theorem \ref{solution}.} } }\label{proof of solution}
Again, observing that the results in Theorem \ref{duality} are similar to Theorem 3.1 in \cite{25}, the proof of Theorem \ref{solution} will be broken into several steps. Without loss of generality, we assume that $\phi(\infty)=\psi(0)>0$.
\begin{lemma}\label{l7}
%Assume that the assumptions of Theorem \ref{TT4} and $AE(\phi)\!\triangleq\limsup\limits_{x\rightarrow+\infty}$ $\frac{x\phi'(x)}{\phi(x)}<1$ (see Kramkov and Schachermayer {\color{blue}{Kramkov and Schachermayer}}\cite{25}) hold.
Under the assumptions of Theorem \ref{solution}, let $\{y_n\}_{n\geq 1}$ be sequence of positive numbers {converging} to a number $y>0$. Then $\hat{g}(y_n)\psi'(\hat{g}(y_n))$ converges to $\hat{g}(y)\psi'(\hat{g}(y))$ in $\mathbf{L}^1(\mathbb{D},\mathcal{B}(\mathbb{D}),\mathbb{F})$.
\end{lemma}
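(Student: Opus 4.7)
The plan is to first upgrade the convergence in probability from Lemma \ref{l6} to an almost-sure (subsequential) convergence of $\hat{g}(y_n)\psi'(\hat{g}(y_n))$, and then establish uniform integrability of this sequence via the asymptotic elasticity condition $AE(\phi) < 1$, concluding $\mathbf{L}^1$ convergence by the Vitali convergence theorem.

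First, by Lemma \ref{l6}, $\hat{g}(y_n) \to \hat{g}(y)$ in probability. Because $\psi' = -(\phi')^{-1}$ is continuous on $(0,\infty)$ (as $\phi'$ is continuous and strictly decreasing with range $(0,\infty)$), the map $z \mapsto z\psi'(z)$ is continuous, and so $\hat{g}(y_n)\psi'(\hat{g}(y_n)) \to \hat{g}(y)\psi'(\hat{g}(y))$ in probability.

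The key step is to show uniform integrability. Exploiting $AE(\phi) < 1$, I would prove an analogue of Lemma~6.3 in \cite{25}: there exist constants $y_0 > 0$, $0 < \bar{\gamma} < 1$, and $K > 0$ such that
\begin{equation*}
|y\psi'(y)| \le \bar{\gamma}\,|\psi(y)| + K, \qquad \forall\, y > 0.
\end{equation*}
The derivation runs as follows. For $y < y_0$, using $\psi(y) = \phi(I(y)) - yI(y)$ with $I = (\phi')^{-1}$, the condition $\limsup_{x\to\infty}\frac{x\phi'(x)}{\phi(x)} < 1$ translates (via $x = I(y)$, so $x \to \infty$ as $y \to 0$) to $yI(y) \le \gamma\,\phi(I(y))$ for some $\gamma < 1$ and $y < y_0$. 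Substituting $\phi(I(y)) = \psi(y) + yI(y)$ yields $(1-\gamma)yI(y) \le \gamma\psi(y)$, i.e. $|y\psi'(y)| \le \tfrac{\gamma}{1-\gamma}\psi(y)$. For $y \ge y_0$, boundedness of $yI(y)$ (e.g.\ since $I(y) \le I(y_0)$) absorbs the constant $K$.

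Applying the above estimate with $y = \hat{g}(y_n)$ gives
\begin{equation*}
|\hat{g}(y_n)\psi'(\hat{g}(y_n))| \le \bar{\gamma}\,|\psi(\hat{g}(y_n))| + K, \qquad \mathbb{F}\text{-a.e.}
\end{equation*}
By Lemma \ref{l6}, $\psi(\hat{g}(y_n)) \to \psi(\hat{g}(y))$ in $\mathbf{L}^1(\mathbb{D},\mathcal{B}(\mathbb{D}),\mathbb{F})$, hence $\{|\psi(\hat{g}(y_n))|\}_{n\ge 1}$ is uniformly integrable. Since $\mathbb{F}$ is a probability measure, $K$ contributes uniformly integrably, and the dominating bound yields uniform integrability of $\{\hat{g}(y_n)\psi'(\hat{g}(y_n))\}_{n \ge 1}$. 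Combined with convergence in probability, Vitali's convergence theorem delivers the desired $\mathbf{L}^1$ convergence.

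The main obstacle is the derivation of the asymptotic elasticity estimate in step 2, since the bound must hold uniformly over \emph{all} $y > 0$ (not just $y$ near $0$), and the argument requires a careful patching of the behavior of $\psi$ near $0$ (where $\psi \to \phi(\infty)$, possibly $+\infty$) with its behavior for large $y$ (where $yI(y)$ must be controlled independently). Fortunately, this is exactly the content of Lemma~6.3 in \cite{25}, which applies verbatim once $\phi$ is identified with the role played there by $U$.
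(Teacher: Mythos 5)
Your overall architecture coincides with the paper's: pass to convergence in probability via continuity of $z\mapsto z\psi'(z)$, establish uniform integrability from $AE(\phi)<1$, and conclude by Vitali; the treatment of the region where $\hat{g}(y_n)$ is small (domination by $C|\psi(\hat{g}(y_n))|$, uniformly integrable because of the $\mathbf{L}^1$-convergence in Lemma \ref{l6}) is exactly the paper's argument. The genuine gap is in your handling of the region $\{\hat{g}(y_n)\geq y_0\}$. You assert that $yI(y)=-y\psi'(y)$ is bounded on $[y_0,\infty)$ ``since $I(y)\leq I(y_0)$''; this is a non sequitur ($I(y)\leq I(y_0)$ only yields $yI(y)\leq yI(y_0)$, which is unbounded in $y$), and the conclusion is false in general: boundedness of $x\phi'(x)$ as $x\to 0$ requires $\phi(0^+)>-\infty$, which is not assumed, and fails for the paper's own ambiguity function $\phi(x)=x^{\gamma}/\gamma$ with $\gamma<0$ (there $I(y)=y^{1/(\gamma-1)}$ and $yI(y)=y^{\gamma/(\gamma-1)}\to\infty$). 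Relatedly, Lemma 6.3 of \cite{25} provides the elasticity estimate only for $0<y<y_0$, not globally, so it does not ``apply verbatim'' to give a bound valid for all $y>0$.

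The gap is repairable. For $y\geq y_0$ one can write $-y\psi'(y)=yI(y)=\phi\left(I(y)\right)-\psi(y)\leq \phi\left(I(y_0)\right)+\psi^-(y)$, which restores your additive estimate $|y\psi'(y)|\leq \bar{\gamma}|\psi(y)|+K$ with $\bar{\gamma}=\max\{C,1\}$; the restriction $\bar{\gamma}<1$ that you impose is neither achievable in general nor needed, since any finite constant suffices for uniform integrability. The paper avoids the issue differently: on $\{\hat{g}(y_n)\geq y_0\}$ it invokes a de la Vall\'{e}e--Poussin argument as in Lemma \ref{l2}, using only that $\{\hat{g}(y_n)\}_{n\geq 1}$ is bounded in $\mathbf{L}^1(\mathbb{D},\mathcal{B}(\mathbb{D}),\mathbb{F})$ and that $\psi'(y)\to 0$ as $y\to\infty$. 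Either fix is fine, but as written your large-argument step would fail.
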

\begin{proof}
%As the proof is similar with Lemma 3.7 in Kramkov and Schachermayer {\color{blue}{Kramkov and Schachermayer}}\cite{25}, we only present the sketch of the proof as follows:

%\vskip 5pt
Using Lemma \ref{l6}, we know that $\hat{g}(y_n)$ converges to $\hat{g}(y)$ in probability. As such, based on  the continuity of $\psi'$, we know $\hat{g}(y_n)\psi'(\hat{g}(y_n))$ converges to $\hat{g}(y)\psi'(\hat{g}(y))$ in probability.

In order to obtain the conclusion we have to show the uniform integrability of the sequence $\hat{g}(y_n)\psi'(\hat{g}(y_n))$. At this point, we use the hypothesis that the asymptotic elasticity of $\phi$ is less then one, which implies the existence of $y_0 > 0$ and a constant $C < \infty$ such that
\[-\psi'(y)<C\frac{\psi(y)}{y},~\text{for}~0<y<y_0.\]

Hence the sequence of random variables $\{|\hat{g}(y_n)\psi'(\hat{g}(y_n))\mathbb{I}_{\{\hat{g}(y_n)<y_0\}}|\}_{n\geq1}$ is dominated by the sequence $\{C|\psi(\hat{g}(y_n))|\mathbb{I}_{\{\hat{g}(y_n)<y_0\}}\}_{n\geq1}$ which is uniformly integrable by Lemma \ref{l6}.

For the remaining part $\{\hat{g}(y_n)\psi'(\hat{g}(y_n))\mathbb{I}_{\{\hat{g}(y_n)\geq y_0\}}\}_{n\geq1}$, the uniform integrability follows as in the proof of Lemma \ref{l2} from the fact that $\{\hat{g}(y_n)\}_{n\geq1}$ is bounded in $\mathbf{L}^1(\mathbb{D},\mathcal{B}(\mathbb{D}),\mathbb{F})$ and $\lim\limits_{y\rightarrow\infty}\psi'(y)=0$.
%Based on the condition $AE(\phi)<1$, there exist $y_0>0$ and $C<+\infty$ such that $\left\{\hat{g}(\mu;y_n)\psi'\!\circ\!\hat{g}(\mu;y_n)\mathbb{I}_{\{\hat{g}(\mu;y_n\!)<y_0\!\}}
%\!\right\}_{\!n\geq1}$ is dominated by $\left\{C|\psi\!\circ\!\hat{g}(\mu;y_n)|\mathbb{I}_{\{\hat{g}(\mu;y_n\!)<y_0\!\}}\!\right\}_{\!n\geq1}$, which is uniformly integrable by Lemma \ref{l7}. Because $\left\{\hat{g}(\mu;y_n)\right\}_{n\geq1}$ is bounded in $\mathbf{L}^1(\mathbb{D},\mathcal{B}(\mathbb{D}),\mathbb{F})$ and $\lim\limits_{y\rightarrow+\infty}\psi'(y)=0$, $\left\{\hat{g}(\mu,y_n)\psi'\!\circ\!\hat{g}(\mu;y_n)\mathbb{I}_{\{\hat{g}(\mu;y_n)\geq y_0\}}\right\}_{n\geq1}$ is also uniform integrable

%\vskip 5pt
%As such, $\hat{g}(\mu;y_n)\psi'\!\circ\!\hat{g}(\mu;y_n)$ is uniform integrable and converges to $\hat{g}(\mu;y)\psi'\!\circ\!\hat{g}(\mu;y)$ in probability, which means that it also converges in $\mathbf{L}^1(\mathbb{D},\mathcal{B}(\mathbb{D}),\mathbb{F})$.
\end{proof}

\begin{remark}\label{mu}
For later use, we remark that, given the setting of Lemma \ref{l7} and in addition a sequence $\{\mu_n\}_{n\geq1}$ of real numbers tending to $1$, we still conclude that $\psi'\left(\mu_n\hat{g}(y_n)\right)\hat{g}(y_n)$ tends to $\psi'\left(\hat{g}(y)\right)\hat{g}(y)$ in $\mathbf{L}^1(\mathbb{D},\mathcal{B}(\mathbb{D}),\mathbb{F})$.
Indeed, it suffices to note that from $AE(\phi)<1$, for fixed $0<\mu<1$, we can find a constant $\tilde{C}<\infty$ and $y_0 > 0$ such that
\[-\psi'(\mu y)<\tilde{C}\frac{V(y)}{y},~\text{for}~0<y<y_0.\]
\end{remark}

\begin{lemma}\label{l8}
%Suppose that the assumptions of Theorem \ref{TT4} and $AE(\phi)\!\triangleq\limsup\limits_{x\rightarrow+\infty}$ $\frac{x\phi'(x)}{\phi(x)}<1$ (see %{\color{blue}{Kramkov and Schachermayer}}
%\cite{25}) hold. %Under the assumptions of Lemma \ref{l8},
Under the assumptions of Theorem \ref{solution}, the value function $v$ is finite and continuously differentiable on $(0,\infty)$, the derivative $v'$ is strictly increasing and satisfies
\begin{equation}\nonumber
-yv'(y)=\int_{\mathbb{D}}\hat{g}(y)\left(\phi'\right)^{-1}(\hat{g}(y))\mathrm{d}\mathbb{F}.
\end{equation}
\end{lemma}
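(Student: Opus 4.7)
The plan is threefold: (a) establish finiteness of $v$ from the elasticity hypothesis, (b) derive the differential formula by a two-sided perturbation of the dual optimizer, and (c) read off continuous differentiability and strict monotonicity of $v'$ from this formula together with earlier lemmas.

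For (a), fix $y>0$. The element $g_0=y\mathbb{I}_{\mathbb{D}}/\rho(\mathbb{I}_{\mathbb{D}})$ lies in $\mathbf{G}(y)$, and $AE(\phi)<1$ furnishes constants $C,y_0>0$ with $-\psi(z)\le C|\psi'(z)|z$ for $0<z<y_0$, so $\psi$ grows no faster than a controlled power of $1/z$ near zero; combined with the obvious bound $\psi(z)\le\psi(0)=\phi(\infty)$ for large $z$ and the integrability of $g_0$, this yields $\int\psi(g_0)\,\mathrm{d}\mathbb{F}<\infty$ and hence $v(y)<\infty$.

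For (b), let $|h|<y$. Since $\mathbf{G}(y)=y\mathbf{G}$ by \eqref{setdistort}, we have $(1+h/y)\hat{g}(y)\in\mathbf{G}(y+h)$ and $\bigl(y/(y+h)\bigr)\hat{g}(y+h)\in\mathbf{G}(y)$, which gives the sandwich
\[
\int_{\mathbb{D}}\!\!\Bigl[\psi(\hat{g}(y+h))-\psi\bigl(\tfrac{y}{y+h}\hat{g}(y+h)\bigr)\Bigr]\,\mathrm{d}\mathbb{F}\;\le\;v(y+h)-v(y)\;\le\;\int_{\mathbb{D}}\!\!\Bigl[\psi\bigl((1+\tfrac{h}{y})\hat{g}(y)\bigr)-\psi(\hat{g}(y))\Bigr]\,\mathrm{d}\mathbb{F}.
\]
Dividing by $h$ and letting $h\to 0$, the upper expression converges to $\tfrac{1}{y}\int_{\mathbb{D}}\psi'(\hat{g}(y))\hat{g}(y)\,\mathrm{d}\mathbb{F}$ by Vitali's theorem, the uniform integrability being provided by Remark~\ref{mu} applied to $\mu_n=1+h_n/y$. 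For the lower expression, I apply the same argument along sequences $h_n\to 0$, invoking Lemma~\ref{l7} so that $\hat{g}(y+h_n)\psi'(\hat{g}(y+h_n))\to \hat{g}(y)\psi'(\hat{g}(y))$ in $\mathbf{L}^1(\mathbb{D},\mathcal{B}(\mathbb{D}),\mathbb{F})$, and Remark~\ref{mu} with $\mu_n=y/(y+h_n)\to 1$ to treat the perturbed term $\psi'\bigl(\tfrac{y}{y+h_n}\hat{g}(y+h_n)\bigr)\hat{g}(y+h_n)$. Matching the two sides produces the identity $-yv'(y)=\int_{\mathbb{D}}\hat{g}(y)(\phi')^{-1}(\hat{g}(y))\,\mathrm{d}\mathbb{F}$, using $-\psi'=(\phi')^{-1}$.

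For (c), continuity of $v'$ on $(0,\infty)$ is immediate from the derivative formula together with the $\mathbf{L}^1$-continuity of $y\mapsto \hat{g}(y)\psi'(\hat{g}(y))$ supplied by Lemma~\ref{l7}; strict monotonicity of $v'$ is just strict convexity of $v$, already obtained in Lemma~\ref{l3}. The principal obstacle is the passage to the limit in the lower bound, where the dual optimizer $\hat{g}(y+h_n)$ itself is moving with $n$ and has to be controlled inside the nonlinear expression $\psi'\circ(\cdot)$ at a rescaled argument; this is precisely the point at which $AE(\phi)<1$ is indispensable, since it is only through this hypothesis that Lemma~\ref{l7} and Remark~\ref{mu} deliver the required uniform integrability and thereby the $\mathbf{L}^1$-convergence of $\psi'(\mu_n\hat{g}(y_n))\hat{g}(y_n)$.
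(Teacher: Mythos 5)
Your argument is essentially the paper's: the same convexity sandwich on the difference quotient (you perturb additively to $y+h$ where the paper perturbs multiplicatively to $\alpha y$ with $\alpha\searrow 1$), with the passage to the limit justified by Lemma~\ref{l7} and Remark~\ref{mu} exactly as the paper does, and your two-sided perturbation even delivers differentiability directly where the paper computes only the right derivative and then upgrades via convexity and continuity of $v_r'$. The only wobble is in your part (a): the elasticity inequality you quote is stated in the wrong direction and is in any case unnecessary, since $g_0=y\mathbb{I}_{\mathbb{D}}/\rho(\mathbb{I}_{\mathbb{D}})$ is a constant function bounded away from zero and $\psi$ is finite on $(0,\infty)$, so $v(y)\le\psi\bigl(y/\rho(\mathbb{I}_{\mathbb{D}})\bigr)<\infty$ immediately.
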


\begin{proof}
%The proof is similar with Lemma 3.8 in %{\color{blue}{Kramkov and Schachermayer}}
%\cite{25} and we only present the sketch of the proof as follows.
%\vskip 4pt
%As the value function $v$ is convex on $(0,\infty)$, $v$ is differentiable almost surely on $(0,\infty)$. As such,
Observe that, provided the limit exists,
\begin{equation}\label{derivative}%\nonumber
%\begin{split}
-yv'(y)=\lim\limits_{\alpha\searrow 1}\frac{v(y)-v(\alpha y)}{\alpha-1}.
%&=\lim\limits_{\alpha\rightarrow 1}\int_{\mathbb{D}}\frac{\psi\!\circ\!\hat{g}(\mu;y)-\psi\!\circ\!\hat{g}(\mu;\alpha y)}{\alpha-1}\mathrm{d}F(\mu).%
%\end{split}
\end{equation}
%In order to derive the limitation in the last equation, we estimate the last part in Eq. (\ref{derivative}). We need  the following results about the continuous dependency of $\hat{g}(\mu;y)$, $\psi\!\circ\!\hat{g}(\mu;y)$ and $\hat{g}(\mu;y)\psi'\!\circ\!\hat{g}(\mu;y)$ on the initial value $y$, and the proof is similar with Lemmas 3.6-3.7 in {\color{blue}{Kramkov and Schachermayer}}\cite{25}. For $y_n\rightarrow y>0$ satisfying $v(y_n)<\infty$ and $v(y)<\infty$,
%\begin{equation}\label{dependency}
%\begin{split}
%&\hat{g}(\mu;y_n)\rightarrow\hat{g}(\mu;y)~\text{in probability},\\
%&\psi\!\circ\!\hat{g}(\mu;y_n)\rightarrow\psi\!\circ\!\hat{g}(\mu;y)~\text{in}~\mathbf{L}^1(\mathbb{D},\mathcal{B}(\mathbb{D}),\mathbb{F}),\\
%&\hat{g}(\mu;y_n)\psi'\!\circ\!\hat{g}(\mu;y_n)\rightarrow\hat{g}(\mu;y)\psi'\!\circ\!\hat{g}(\mu;y)~ \text{in}~\mathbf{L}^1(\mathbb{D},\mathcal{B}(\mathbb{D}),\mathbb{F}).
%\end{split}
%\end{equation}
%Using properties (\ref{dependency})
By Lemma \ref{l7} and convexity of $\psi$, we have
\begin{equation*}
\begin{split}
\limsup_{\alpha\searrow 1}\frac{v(y)-v(\alpha y)}{\alpha-1}&\leq\limsup_{\alpha\searrow 1}\frac{1}{\alpha-1}\int_{\mathbb{D}}\left[\psi\left(\frac{1}{\alpha}\hat{g}(\alpha y)\right)-\psi\left(\hat{g}(\alpha y)\right)\right]\mathrm{d}F(\mu)\\
&\leq\limsup_{\alpha\searrow 1}\frac{1}{\alpha-1}\int_{\mathbb{D}}\left[\left(\frac{1}{\alpha}-1\right)\hat{g}(\alpha y)\psi'\left(\frac{1}{\alpha}\hat{g}(\alpha y)\right)\right]\mathrm{d}F(\mu)\\
&=\int_{\mathbb{D}}\hat{g}(y)\left(\phi'\right)^{-1}(\hat{g}(y))\mathrm{d}\mathbb{F},
\end{split}
\end{equation*}
where in the last line we have used Remark \ref{mu}.

By the monotone convergence theorem, we have
\begin{equation*}
\begin{split}
\liminf_{\alpha\searrow 1}\frac{v(y)-v(\alpha y)}{\alpha-1}&\geq\liminf_{\alpha\searrow 1}\frac{1}{\alpha-1}\int_{\mathbb{D}}\left[\psi\left(\hat{g}(y)\right)-\psi\left(\alpha \hat{g}(y)\right)\right]\mathrm{d}F(\mu)\\
&\geq\liminf_{\alpha\searrow 1}\frac{1}{\alpha-1}\int_{\mathbb{D}}\left[(1-\alpha)\hat{g}(y)\psi'\left(\hat{g}(\alpha y)\right)\right]\mathrm{d}F(\mu)\\
&=\int_{\mathbb{D}}\hat{g}(y)\left(\phi'\right)^{-1}(\hat{g}(y))\mathrm{d}\mathbb{F}.
\end{split}
\end{equation*}
This shows that Eq.~(\ref{derivative}) holds with $v'(y)$ replaced by the right derivative $v_r'(y)$. Using Lemma \ref{l7}, we obtain the continuity of the function
$y\rightarrow v_r'(y)$. The convexity of $v$ implies the continuous differentiable property of $v$ and we complete the proof.
%Thus, we obtain
%\[\lim\limits_{\alpha\rightarrow 1}\frac{v(y)-v(\alpha y)}{\alpha-1}=\int_{\mathbb{D}}\hat{g}(\mu;y)\left(\phi'\right)^{-1}\!\circ\!\hat{g}(\mu;y)\mathrm{d}F(\mu).\]
%Moreover,
%\[\lim\limits_{\alpha\rightarrow 1}\frac{v(y)-v(\alpha y)}{\alpha-1}=-yv'(y),\]
% provided that the limit exists.
\end{proof}

}}

\begin{lemma}\label{l9}
Under the assumptions of Theorem \ref{solution}, suppose that the numbers $x$ and $y$ are related by $h(x)=-v'(y)$. Then $\hat{f}(x)\triangleq \left(\phi'\right)^{-1}\left(\hat{g}(y)\right)$ is the unique solution to Problem~(\ref{p3}).
\end{lemma}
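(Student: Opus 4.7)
The plan is to combine the first-order optimality condition for the dual problem with the pointwise Fenchel equality to verify simultaneously that $\hat{f}(x)\triangleq(\phi')^{-1}(\hat{g}(y))$ is admissible for Problem~(\ref{p3}) and attains its supremum. Three things must be checked: (a) $\hat{f}(x)\in \mathbf{B}(x)-\mathbf{D}$; (b) $\int_{\mathbb{D}}\phi(\hat{f}(x))\mathrm{d}\mathbb{F}=u(x)$; and (c) the maximizer is unique.

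For (b), the choice $\hat{f}(x)=(\phi')^{-1}(\hat{g}(y))$ is precisely the point at which the Fenchel inequality $\phi(f)\leq \psi(g)+fg$ is saturated, so pointwise
\[
\phi(\hat{f}(x))=\psi(\hat{g}(y))+\hat{f}(x)\hat{g}(y).
\]
Integrating against $\mathbb{F}$ and using Lemma~\ref{l8} together with the hypothesis $h(x)=-v'(y)$, the cross term evaluates to
\[
\int_{\mathbb{D}}\hat{f}(x)\hat{g}(y)\,\mathrm{d}\mathbb{F}=\int_{\mathbb{D}}\hat{g}(y)(\phi')^{-1}(\hat{g}(y))\,\mathrm{d}\mathbb{F}=-yv'(y)=h(x)y,
\]
so $\int_{\mathbb{D}}\phi(\hat{f}(x))\mathrm{d}\mathbb{F}=v(y)+h(x)y$. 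The hypothesis $h(x)=-v'(y)$ is precisely the first-order condition that identifies the minimizer in the distorted conjugate $u(x)=\inf_{y'>0}\{v(y')+h(x)y'\}$ of Theorem~\ref{duality}, so the right-hand side equals $u(x)$.

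For (a), I will use a variational argument on the dual side. Fix $g\in\mathbf{G}(y)$; convexity of $\mathbf{G}(y)$ (Lemma~\ref{l1}) puts $\hat{g}(y)+\epsilon(g-\hat{g}(y))\in\mathbf{G}(y)$ for $\epsilon\in[0,1]$, so by optimality of $\hat{g}(y)$ and differentiability of $\psi$, passing $\epsilon\searrow 0$ yields
\[
\int_{\mathbb{D}}\psi'(\hat{g}(y))\bigl(g-\hat{g}(y)\bigr)\mathrm{d}\mathbb{F}\geq 0.
\]
Since $\psi'(z)=-(\phi')^{-1}(z)$, substitution gives $\int_{\mathbb{D}}\hat{f}(x)g\,\mathrm{d}\mathbb{F}\leq \int_{\mathbb{D}}\hat{f}(x)\hat{g}(y)\,\mathrm{d}\mathbb{F}=h(x)y$ for every $g\in\mathbf{G}(y)$, and Proposition~\ref{set} then places $\hat{f}(x)\in\mathbf{B}(x)-\mathbf{D}$. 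Combined with (b) this shows $\hat{f}(x)$ attains the supremum in Problem~(\ref{p3}). For uniqueness (c), two distinct maximizers in the convex set $\mathbf{B}(x)-\mathbf{D}$ (convexity proved in Theorem~\ref{TT1}) would, by the strict concavity of $\phi$, have a midpoint that strictly improves the objective, a contradiction.

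The hard part will be justifying the passage to the limit in the variational inequality of step (a): the difference quotient $\epsilon^{-1}[\psi(\hat{g}(y)+\epsilon(g-\hat{g}(y)))-\psi(\hat{g}(y))]$ needs an $\mathbf{L}^1(\mathbb{D},\mathcal{B}(\mathbb{D}),\mathbb{F})$ dominator uniform in $\epsilon$. The natural tool is $AE(\phi)<1$, which, as exploited in Lemma~\ref{l7} and the remark after it, supplies $-\psi'(z)\leq C\psi(z)/z$ for small $z$; together with the integrability of $g$, $\hat{g}(y)$, $\psi(g)$ and $\psi(\hat{g}(y))$ already furnished by Lemma~\ref{l2}, this yields the required domination, after which the argument collapses to Fenchel and duality bookkeeping.
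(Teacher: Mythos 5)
Your proposal is correct and follows essentially the same route as the paper: admissibility of $\hat{f}(x)$ is obtained from the first-order variational inequality at the dual optimizer (the paper writes this as the chain of convexity inequalities for $g_\delta=(1-\delta)\hat{g}+\delta g$, justified in the limit by Remark \ref{mu}, monotone convergence and Fatou — exactly the $AE(\phi)<1$ domination you anticipate) together with Proposition \ref{set}, and optimality plus uniqueness come from the saturated Fenchel identity, the relation $\int_{\mathbb{D}}\hat{f}\hat{g}\,\mathrm{d}\mathbb{F}=-yv'(y)=h(x)y$ from Lemma \ref{l8}, and strict concavity of $\phi$. The only cosmetic difference is that the paper verifies optimality by bounding $\int_{\mathbb{D}}\phi(f)\mathrm{d}\mathbb{F}\leq v(y)+h(x)y$ for every admissible $f$ directly, rather than identifying $v(y)+h(x)y$ with $u(x)$ through the distorted conjugacy; the two are equivalent.
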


{{
\begin{proof}
We first show that $\hat{f}(x)\triangleq \left(\phi'\right)^{-1}\left(\hat{g}(y)\right)$ belongs to $\mathbf{B}(x)-\mathbf{D}$. According to the bipolar relation (\ref{bipolar}),  it is sufficient to show that, for any $g\in\mathbf{G}(y)$,
\begin{equation}\label{admissable}
\int_{\mathbb{D}}\left[g\left(\phi'\right)^{-1}(\hat{g})\right]\mathrm{d}\mathbb{F}\leq h(x)y=-yv'(y)=\int_{\mathbb{D}}\left[\hat{g}\left(\phi'\right)^{-1}(\hat{g})\right]\mathrm{d}\mathbb{F},
\end{equation}
where the last equality follows from Eq.~(\ref{derivative}).

Fix $g\in\mathbf{G}(y)$ and denote
\[g_\delta=(1-\delta)\hat{g}(y)+\delta g,~\delta\in(0,1).\]
From the inequality (observing that $\left(\phi'\right)^{-1}=-\psi'$)
\begin{equation}\nonumber
\begin{split}
0\leq\int_{\mathbb{D}}\psi(g_\delta)\mathrm{d}\mathbb{F}-\int_{\mathbb{D}}\psi(\hat{g})\mathrm{d}\mathbb{F}
=\int_{\mathbb{D}}\left[\int_{g_\delta}^{\hat{g}}-\psi'(z)\mathrm{d}z\right]\mathrm{d}\mathbb{F}
\leq\int_{\mathbb{D}}\left[-\psi'(g_\delta)(\hat{g}-g_\delta)\right]\mathrm{d}\mathbb{F},
\end{split}
\end{equation}
we deduce that
\begin{equation}\label{exy}
\int_{\mathbb{D}}\left[-\psi'\left((1-\delta)\hat{g}\right)\hat{g}\right]\mathrm{d}\mathbb{F}\geq\int_{\mathbb{D}}\left[-\psi'\left(g_\delta\right)g\right]\mathrm{d}\mathbb{F}.
\end{equation}
Remark \ref{mu} implies that for $\delta$ close to 0,
\[\int_{\mathbb{D}}\left[-\psi'\left((1-\delta)\hat{g}\right)\hat{g}\right]\mathrm{d}\mathbb{F}<\infty.\]
{By applying} the monotone convergence theorem and the Fatou lemma, respectively, to the left- and right-hand sides of (\ref{exy}), as $\delta\rightarrow0$, {we have} the desired inequality (\ref{admissable}). Hence, $\hat{f}(x)\in\mathbf{B}(x)-\mathbf{D}$.
%As the proof is similar with Lemma 3.9 in %{\color{blue}{Kramkov and Schachermayer}}
%\cite{25}, we only present the sketch of the proof here.
%\vskip 3pt
%Using Eqs.~(\ref{main}) and (\ref{dual}), we have $\hat{f}(\mu;x)\triangleq \left(\phi'\right)^{-1}\!\circ\!\hat{g}(\mu;y)\in\mathbf{B}(x)-\mathbf{D}$. Based on Lemma \ref{p9}, $\hat{f}(\mu;x)\triangleq \left(\phi'\right)^{-1}\!\circ\!\hat{g}(\mu;y)$ is a solution to Problem~(\ref{p3}), and the uniqueness follows from the strict concavity of function  $\phi(\cdot)$.

For any $f\in\mathbf{B}(x)-\mathbf{D}$, we have
\begin{equation}\nonumber
\begin{split}
&\int_{\mathbb{D}}f\hat{g}(y)\mathrm{d}\mathbb{F}\leq h(x)y,\\
&\phi(f)\leq \psi(\hat{g}(y))+f\hat{g}(y).
\end{split}
\end{equation}
It follows that
\begin{equation}\nonumber
\begin{split}
\int_{\mathbb{D}}\phi(f)\mathrm{d}\mathbb{F}&\leq v(y)+h(x)y=\int_{\mathbb{D}}\left[\psi(\hat{g})-\hat{g}\psi'(\hat{g})\right]\mathrm{d}\mathbb{F},\\
&=\int_{\mathbb{D}}\phi\left(-\psi'(\hat{g})\right)\mathrm{d}\mathbb{F}=\int_{\mathbb{D}}\phi(\hat{f})\mathrm{d}\mathbb{F},
\end{split}
\end{equation}
which proves the optimality of $\hat{g}(x)$. The uniqueness of the optimal solution follows
from the strict concavity of the function $\phi$.
\end{proof}

\begin{proof}[{ \bf{Proof of Theorem \ref{solution}} }]
We have to check that the above lemmas imply all the assertions of Theorem \ref{solution}.
As regards the assertions
\[u'(\infty)=\lim_{x\rightarrow\infty}u'(x)=0~\text{and}~-v'(0)=\lim_{x\rightarrow 0}-v'(y)=\infty,\]
they are equivalent {as $-v'(y)$ is the inverse function of $\frac{u'}{h'}\!\circ\!h^{-1}(x)$} by Theorem \ref{bipolarThm} (i) and Lemma \ref{l8}. Hence it suffices to prove the first one. Similar with the proof of Lemma \ref{l5}, the function $u$ is concave and increasing. Hence there is a finite positive limit
\[u'(\infty)\triangleq\lim\limits_{x\rightarrow\infty}u'(x).\]
Because the function $\phi$ is increasing and $\phi'(x)$ tends to $0$ as $x$ tends to $\infty$, for any $\epsilon > 0$, there exists a number $C(\epsilon)$ such that
\[\phi(x)\leq C(\epsilon)+\epsilon x~\text{     }~\forall x>0.\]
As $\mathbf{B}(x)-\mathbf{D}$ is bounded in $\mathbf{L}^1(\mathbb{D},\mathcal{B}(\mathbb{D}),\mathbb{F})$, we have
\begin{equation}\nonumber
\begin{split}
0&\leq u'(\infty)=\lim\limits_{x\rightarrow\infty}\frac{u(x)}{x}=\lim\limits_{x\rightarrow\infty}\sup\limits_{f\in\mathbf{B}(x)-\mathbf{D}}\int_{\mathbb{D}}\frac{\phi(f)}{x}\mathrm{d}\mathbb{F}\\
&\leq\lim\limits_{x\rightarrow\infty}\sup\limits_{f\in\mathbf{B}(x)-\mathbf{D}}\int_{\mathbb{D}}\left[\frac{C(\epsilon)+\epsilon f}{x}\right]\mathrm{d}\mathbb{F}
\leq\lim\limits_{x\rightarrow\infty}\sup\limits_{f\in\mathbf{B}(x)-\mathbf{D}}\int_{\mathbb{D}}\left[\frac{C(\epsilon)}{x}+\epsilon\rho(\mathbb{I}_{\mathbb{D}})\right]\mathrm{d}\mathbb{F}=\epsilon\rho(\mathbb{I}_{\mathbb{D}}).
\end{split}
\end{equation}
Consequently, $u'(\infty)=0$.

To show the validity of the three assertions as follows
\begin{equation}\nonumber
\int_{\mathbb{D}}\hat{f}(x)\hat{g}(y)\mathrm{d}\mathbb{F}=h(x)y,~\text{   }~u'(x)=\int_{\mathbb{D}}\frac{h'(x)}{h(x)}\hat{f}\phi'(\hat{f})\mathrm{d}\mathbb{F},~\text{   }~v'(y)=\int_{\mathbb{D}}\frac{\hat{g}\psi'(\hat{g})}{y}\mathrm{d}\mathbb{F}.
\end{equation}
We have established the third one in Lemma \ref{l8}. The other two assertions are simply reformulations, when we use the relations $y=\frac{u'(x)}{h'(x)}$, $h(x)=v'(y)$, $\hat{f}(x)=-\psi'\left(\hat{g}(y)\right)$ and $\hat{g}(y)=\phi'\left(\hat{f}(x)\right)$.

{Thus,} the proof of Theorem \ref{solution} is completed.
\end{proof}
}}

\section{ {\bf{Proof of Theorem \ref{TT5}}. } }\label{proof of TT5}
\begin{proof}[{ \bf{Proof of Theorem \ref{TT5}} }]
Because $\psi$ is strictly decreasing, there exists a weight function $\hat{\lambda}\in\mathbf{L}^0_+(\mathbb{D}, \mathcal{B}(\mathbb{D}), \mathbb{F})$ satisfying  $<\hat{\lambda},\mathbb{I}_{\mathbb{D}}>=1$ and the solution to Problem~(\ref{D2}) has the following form:
\begin{equation}\nonumber
\hat{g}(y)=\frac{y\hat{\lambda}}{\rho(\hat{\lambda})}.
\end{equation}
Based on Lemma \ref{l9}, we know
\begin{equation}\nonumber
\phi'\left(\hat{f}(x)\right)=\frac{y\hat{\lambda}}{\rho(\hat{\lambda})},
\end{equation}
which means
\begin{equation}\nonumber
\hat{\lambda}\propto \phi'\left(\hat{f}(x)\right),%=\frac{y}{\rho(\hat{\lambda})},
\end{equation}
where $f\propto g$ means that $f$ is proportional to $g$. Using Lemma \ref{l8}, we have
\begin{equation}\nonumber
\begin{split}
h(x)y&=-v'(y)y\\
&=\int_{\mathbb{D}}\hat{g}(y)\left(\phi'\right)^{-1}(\hat{g}(y))\mathrm{d}\mathbb{F}\\
&=\int_{\mathbb{D}}\hat{g}(y)\hat{f}(x)\mathrm{d}\mathbb{F}\\
&=<\hat{f}(\cdot~;x),\frac{y\hat{\lambda}(\cdot)}{\rho(\hat{\lambda})}>,
\end{split}
\end{equation}
i.e., $<\hat{f}(\cdot~;x),\hat{\lambda}(\cdot)>=h(x)\rho(\hat{\lambda})$. As such, $\hat{\lambda}$ is the weight function corresponding to $\hat{f}(x)$ in Theorem \ref{TT1}, and based on {Theorem 2.2} %the results of classical EUT problem
in \cite{25}, the optimal terminal wealth ${{\hat{X}_T}}$ is given by
\begin{equation}\nonumber
{{\hat{X}_T}} = I\left(\hat{Y}^{\hat{\lambda}}_T\right){e^{rT}}.
\end{equation}
The rest of the results in Theorem \ref{TT5} are also given by the results of classical EUT problem in \cite{25}.
%where $\kappa(\hat{\lambda})$ satisfies
%\begin{equation}\nonumber
%\mathbf{E}^{Q}\left[I\left(\frac{\kappa(\hat{\lambda}){{\eta_T}}}{<\hat{\lambda}(\cdot),{{\eta^\cdot_T}}>}\right)\right]=x e^{rT}.
%\end{equation}
\end{proof}
%\bibliography{ref}

\end{document}